\documentclass[11pt, reqno]{amsart}

\usepackage{amsthm}
\usepackage{amsmath}
\usepackage{amssymb}
\usepackage{mathrsfs}
\usepackage{mathtools}
\usepackage{bbm}

\usepackage[open,openlevel=1]{bookmark}
\usepackage{xcolor}
\usepackage{float}
\hypersetup{
colorlinks,
linkcolor={red!50!black},
citecolor={blue!50!black},
urlcolor={blue!80!black}
}

\usepackage{thmtools, thm-restate}  

\declaretheoremstyle[
spaceabove=6pt, spacebelow=6pt,
headfont=\normalfont\bfseries,
notefont=\mdseries, notebraces={(}{)},
bodyfont=\normalfont,
postheadspace=1em,
qed=$\blacksquare$
]{examplestyle}

\declaretheoremstyle[
spaceabove=6pt, spacebelow=6pt,
headfont=\normalfont\bfseries,
notefont=\mdseries, notebraces={(}{)},
bodyfont=\itshape,
postheadspace=1em
]{theorem}

\declaretheoremstyle[
spaceabove=6pt, spacebelow=6pt,
headfont=\normalfont\bfseries,
notefont=\mdseries, notebraces={(}{)},
bodyfont=\normalfont,
postheadspace=1em
]{assumption}

\declaretheoremstyle[
spaceabove=4pt, spacebelow=4pt,
headfont=\itshape\bfseries,
notefont=\mdseries, notebraces={(}{)},
bodyfont=\itshape,
postheadspace=0.2em,
qed=\qedsymbol
]{remark}

\declaretheorem[style=theorem]{theorem}
\declaretheorem[style=theorem, numbered=no, name=Theorem]{theorem*}
\declaretheorem[style=remark,name=Remark]{remark}

\declaretheorem[style=assumption,name=Assumption]{assumption}

\declaretheorem[style=theorem,name=Definition]{definition}
\declaretheorem[style=definition, name=Lemma]{lemma}

\declaretheorem[style=theorem, name=Proposition]{proposition}
\declaretheorem[numbered=no,style=definition,name=Question]{question*}  
\declaretheorem[style=definition,name=Definition, numbered=no]{definition*}  

\declaretheorem[style=theorem,name=Lemma, numberwithin=section]{lemmasec}

\declaretheorem[style=definition, name=Definition, numberwithin=section]{defsec}

\declaretheorem[style=examplestyle]{example}
\renewcommand\thmcontinues[1]{continued}

\newtheorem{thm}{Assumption}

\newcommand{\nature}{\mathbb{N}}
\newcommand{\real}{\mathbb{R}}

\newcommand{\norm}[1]{\left\Vert #1\right\Vert }

\newcommand{\indicator}{\mathbbm{1}}

\newcommand{\ud}{\mathrm{d}}
\newcommand{\prob}{\mathbb{P}}
\newcommand{\expt}{\mathbb{E}}

\newcommand{\co}{\text{\normalfont co}}									
\newcommand{\clco}{\overline{\text{\normalfont co}}}		
\newcommand{\cl}{\text{\normalfont cl}}									
\newcommand{\interior}{\text{\normalfont int}}								
\newcommand{\ri}{\text{\normalfont ri}}

\newcounter{steps}

\DeclareMathOperator*{\argmax}{arg\,max}

\usepackage[backend=biber, 
    style=authoryear, 
	  date=year,
		uniquename=false,
		uniquelist=false,
		doi=false,    
    isbn=false,   
    url=false,    
    eprint=false  
]{biblatex}

\AtEveryBibitem{
    \clearfield{note}      
    \clearfield{addendum}  
    \clearlist{publisher}
    \clearlist{location}
    \clearfield{issn}
    \clearfield{isbn}
}
\renewbibmacro{in:}{}

\let\cite\textcite
\usepackage[nodisplayskipstretch]{setspace}
\usepackage{geometry}
\usepackage{enumitem}
\setenumerate[1]{label=(\roman*)}
\setenumerate[2]{label=\normalfont{(\alph*)}}

\begin{document}

\title[]{Identification and Counterfactual Analysis in Incomplete Models with Support and Moment Restrictions}


\author[]{Lixiong Li}
\address{Johns Hopkins University}
\email{lixiong.li@jhu.edu}
\thanks{This paper supersedes my job market paper, originally circulated under the title ``Identification of Structural and Counterfactual Parameters in a Large Class of Structural Econometric Models.'' I am deeply grateful to Marc Henry, Keisuke Hirano, and Joris Pinkse for their invaluable guidance and support during that project. I also thank Michael Gechter, Paul Grieco, Patrik Guggenberger, Charles Murry, Mark Roberts, Karl Schurter, and Neil Wallace for helpful discussions and suggestions. Special thanks to Ismael Mourifi\'e, Timothy Christensen and Francesca Molinari for encouraging me to revise the paper into its current form. All remaining errors are my own. 
\vspace{1em} \\ Address: Wyman Park Building 5th Floor, 3100 Wyman Park Drive, Baltimore, MD 21211 \\  Email: lixiong.li@jhu.edu \\
This version: \today
}

\keywords{}

\date{}

\dedicatory{Johns Hopkins University}

\begin{abstract}

This paper develops a unified identification framework for counterfactual analysis in incomplete models characterized by support and moment restrictions. I demonstrate that identifying structural parameters and conducting counterfactual analyses are isomorphic tasks. By embedding counterfactual restrictions within an augmented structural model specification, this approach bypasses the conventional "estimate-then-simulate" workflow and the need to simulate outcomes from models with set predictions.  To make this approach operational, I extend sharp identification results for the support-function approach beyond the integrable boundedness condition that is imposed in sharp random-set characterizations but may be violated in economically relevant counterfactual analyses.  Under minimal regularity conditions, I prove that the support-function approach remains sharp for the \emph{moment closure} of the identified set. Furthermore, I introduce an irreducibility condition requiring all support implications to be made explicit. I show that for irreducible models, the identified set and its moment closure are statistically indistinguishable in finite samples. Together, these results justify using support-function methods in counterfactual settings where traditional sharpness fails and clarify the distinct roles of support and moment restrictions in empirical practice.

\vspace{20pt}

\noindent\textit{Keywords:} Incomplete models, set prediction, counterfactual analysis, sharp identification region, support-function approach
\end{abstract}

\maketitle
\newpage

\section*{Introduction}

Counterfactual analysis is central to applied work throughout economics. A common approach is to specify a structural model, estimate its parameters, and then use the estimated model to predict behavior under policies or environments that are not observed in the data, such as changes in market size, mergers, entry regulation, or price and tax interventions. In many empirically relevant settings, however, the structural model does not deliver a unique prediction for the outcome. Models that generate set-valued predictions are commonly referred to as \emph{incomplete models}: conditional on observables, latent variables, and parameters, the model implies a set of outcomes rather than a single outcome.  Incompleteness can arise for a variety of reasons. A leading example is a game of complete information with a pure-strategy equilibrium concept, in which multiple equilibria may exist (e.g., \cite{jovanovic_observable_1989, tamer_incomplete_2003}). Other examples include models of choice with limited attention (\cite{barseghyan_heterogeneous_2021-1}), discrete choice models with endogeneity (\cite{chesher_instrumental_2013}), and auction models (\cite{haile_inference_2003}).  \cite{molinari_microeconometrics_2020} and \cite{chesher_instrumental_2020} provide comprehensive surveys of incomplete structural models and related partial-identification methods.

Conducting counterfactual analysis in incomplete models raises both computational and statistical challenges. On the
computational side, the conventional workflow of estimating a point-identified structure and simulating counterfactual
outcomes is often computationally challenging or infeasible, because counterfactual outcomes are not uniquely determined even
conditional on primitives and parameters. On the statistical side, as illustrated later, working with economically important counterfactual targets, such as welfare, profits, and surplus, often violates key regularity conditions required for sharp identification using random set methods. Most notably, the assumption of integrable boundedness routinely fails in these counterfactual exercises.

This paper studies counterfactual analysis and related identification issues in a broad class of models characterized by
(\emph{i}) a restriction on the joint support of the random variables in the model and (\emph{ii}) a collection of moment
restrictions that may involve both observed and latent variables. The support restriction is implied by economic theory and
typically encodes behavioral assumptions, equilibrium concepts, and structural features of the environment. The moment
restrictions allow the researcher to restrict latent variables without specifying their full distribution. This
support-and-moments framework nests a wide range of empirically relevant structural models.

This paper makes two main contributions. First, it develops a uniform framework for counterfactual analysis in incomplete models with set-valued predictions. The key insight is that, in this setting, identifying counterfactual parameters is isomorphic to identifying structural parameters. A counterfactual exercise introduces additional restrictions: the analyst must specify how counterfactual outcomes and possibly counterfactual latent variables relate to the primitives of the baseline structural model, and define the counterfactual parameter of interest through moment conditions that may involve both observed and latent variables. An essential observation is that these counterfactual restrictions can be stacked with the baseline structural restrictions to form a single augmented support-and-moments model. In the augmented model, counterfactual parameters can be treated on the same footing as structural parameters: identification reduces to characterizing the set of parameter values consistent with the combined support and moment restrictions.  When only a counterfactual parameter is of interest, inference can then be conducted by applying standard subvector inference procedures within the augmented model using existing methods for partially identified models.
This unified formulation avoids the need to simulate counterfactual outcomes from a set-predicting structural model and clarifies how counterfactual targets inherit partial identification from the underlying structural environment.

The paper's second contribution is to extend sharp identification results for the support-and-moments framework beyond the integrable boundedness condition that underlies existing results based on random set theory (see, e.g., \cite{ekeland_optimal_2010,beresteanu_sharp_2011}). In many counterfactual exercises, integrable boundedness fails naturally. For example, equilibrium conditions often impose only one-sided restrictions on latent variables, so the random sets associated with counterfactual objects such as profits or welfare may be unbounded, thereby violating the integrable boundedness condition. When this occurs, the standard support-function approach need not characterize the identified set itself. Nevertheless, I show that it still captures the effective limit of what can be learned from the data. Achieving this interpretation requires an \emph{irreducibility} condition: the model must be formulated so that support restrictions are imposed explicitly, while the moment restrictions do not contain additional support implications. This requirement is easy to overlook because any support restriction can always be rewritten as a moment condition. I show, however, that support restrictions and moment restrictions play fundamentally different roles in identification analysis, and that mixing the two can hide structure that is essential for identification.

More precisely, I show that, whether or not integrable boundedness holds, the support-function approach remains sharp for a closely related object, which I call the \emph{moment closure} of the identified set. I then show that, under irreducibility, the identified set and its moment closure are indistinguishable in finite samples, in the sense that no size-controlled test can distinguish the null hypothesis that a parameter belongs to the identified set from the alternative that it belongs to the moment closure. Thus, once the model is written in irreducible form, the support-function approach captures the effective finite-sample limit of what can be learned from the data. Because a model can be reformulated to make such support implications explicit, these results provide a practical justification for using the support-function approach in empirically relevant counterfactual analyses that lie beyond the scope of existing sharp identification results.

The approach in this paper differs from the counterfactual predictive distribution set (CPDS) framework of \cite{kline_counterfactual_2024}, which conducts counterfactual analysis by first characterizing the identified set of the underlying structural model and then mapping that set, together with a specific parametric distribution of the latent variables, into a collection of distributions on the set of counterfactual outcomes. In contrast, I embed the counterfactual restrictions directly into an augmented support-and-moments model. This formulation accommodates moment restrictions on latent variables rather than requiring a fully specified latent-variable distribution, and it treats counterfactual parameters on the same footing as structural parameters. As a result, counterfactual parameters can be identified within the same framework, without a separate simulation or mapping step from a set-predicting model.

This perspective is closer in spirit to \cite{christensen_counterfactual_2023}, who also avoid simulating counterfactual outcomes and instead bound the counterfactual functional directly by optimizing over nonparametric neighborhoods of a benchmark latent-variable distribution. Their procedure, however, relies on additional regularity conditions tied to the choice of neighborhood, especially conditions governing integrability and the tail behavior of the latent distribution. Related computational contributions include \cite{gu_counterfactual_2025} and \cite{gu_dual_2023}. Those papers obtain powerful computational results in settings where the discrete structure of the model plays a central role: \cite{gu_counterfactual_2025} studies a general class of discrete outcome models and replaces the infinite-dimensional latent distribution with an identification-equivalent finite-dimensional representation, while \cite{gu_dual_2023} develops dual formulations for counterfactual bounds that accommodate Wasserstein neighborhoods.  By contrast, the framework in this paper is stated for general support-and-moments models and allows observed and latent variables with arbitrary support.

\subsubsection*{Notation}
Throughout the paper, capital letters (e.g., $X,Y,Z,U$) denote random variables or random vectors, and lower-case letters
(e.g., $x,y,z,u$) denote their realizations. All random variables are defined on a common complete probability space.
Vectors are written as column vectors, and superscript $\top$ denotes transpose. I use $\indicator\{\cdot\}$ for indicator functions. For example, $\indicator\{U\ge 0\}$ equals $1$ if $U\ge 0$ and equals $0$ otherwise. For vectors in finite-dimensional
Euclidean space, $\norm{\cdot}$ denotes the Euclidean norm. I write $\expt_F[\cdots]$ for expectation with respect to a
distribution $F$. When the underlying distribution is clear from context, I omit the subscript and write $\expt[\cdots]$. I
use $\equiv$ to denote definitions. Unless stated otherwise, all equalities and inequalities involving random variables are
understood to hold almost surely.

\subsubsection*{Overview}
The remainder of the paper is organized as follows. Section~\ref{sec:framework} introduces the support-and-moments
framework and presents motivating examples. Section~\ref{sec:counterfactual} formalizes counterfactual environments and
counterfactual parameters and shows how they can be embedded in an augmented model. Section~\ref{sec:identification}
develops the identification analysis, including the characterization of the moment closure and the finite-sample
indistinguishability results. Section~\ref{sec:conclusion} concludes. The Appendix collects proofs and additional
identification results.

\section{Theoretical Framework}\label{sec:framework}

\subsection{Structural model}
This paper studies a class of structural models characterized by a support restriction and a collection of moment restrictions:
\begin{equation}
  \prob\big[(U,Z)\in \Gamma(\theta)\big] = 1
  \qquad\text{and}\qquad
  \expt\big[r(U,Z;\theta)\big] = 0,
  \label{eq:framework}
\end{equation}
where $\theta\in\Theta$ denotes the (possibly vector-valued) parameter, $U$ is a vector of latent variables, and $Z=(X,Y)$ collects the observed variables, with $X$ a vector of covariates and $Y$ a vector of outcomes. Let $\mathcal{U}$, $\mathcal{X}$, and $\mathcal{Y}$ be Polish spaces containing the supports of $U$, $X$, and $Y$, respectively, and define $\mathcal{Z}\equiv \mathcal{X}\times \mathcal{Y}$. The parameter space $\Theta$ may be an arbitrary set. The set $\Gamma(\theta)\subseteq \mathcal{U}\times \mathcal{Z}$ imposes a $\theta$-dependent restriction on the joint support of $(U,Z)$.

Many structural models can be conveniently specified through a set-valued prediction rule of the form $Y \in \Gamma(X,U,\theta)$, where $\Gamma(X,U,\theta)\subseteq \mathcal{Y}$ is the set of outcomes consistent with $(X,U,\theta)$. In this case, the implied support set in \eqref{eq:framework} takes the form
\[
\Gamma(\theta)
\;=\;
\big\{(u,x,y)\in \mathcal{U}\times \mathcal{X}\times \mathcal{Y}:\; y\in \Gamma(x,u,\theta)\big\}.
\]
The structural model is \emph{incomplete} if, with positive probability, the correspondence $\Gamma(X,U,\theta)$ is not single-valued; that is, it contains multiple admissible outcomes.

For each $\theta\in\Theta$, the known function $r(\cdot,\cdot;\theta)$ maps $\mathcal{U}\times \mathcal{Z}$ to $\real^{d_r}$, where $d_r\equiv \dim(r)\ge 1$ is finite. As illustrated in the examples below, $r$ typically encodes moment restrictions on the latent variables (e.g., orthogonality or mean-independence conditions involving $U$ and the observed variables). Throughout the paper, I refer to the first condition in \eqref{eq:framework} as the \emph{support restriction} and to the second as the \emph{moment restriction}.

\begin{remark}\label{rmk:support_moment_difference}
Because a support restriction can be represented as a trivial moment restriction, such as $\expt\big[1-\indicator\{(U,Z)\in \Gamma(\theta)\}\big]=0$, some authors do not formally distinguish between support and moment restrictions (see, e.g., Example 1.4 in \cite{schennach_entropic_2014}). However, as discussed in Section \ref{sec:identification}, these two types of restrictions play fundamentally different roles in identification analysis. In particular, treating support restrictions as ordinary moment conditions can discard structure that is crucial for sharp identification, thereby yielding substantially weaker identification bounds.
\end{remark}

\subsection{Examples}
In what follows, I illustrate this framework with three examples.

\begin{example}[Static entry game with complete information] \label{ex:entry_game}
Consider a two-firm static entry game with complete information, as in
\cite{bresnahan_empirical_1991}, \cite{berry_estimation_1992}, and \cite{ciliberto_market_2009}.
In a given market, firm $j\in\{0,1\}$ observes a vector of profit shifters $X_j$ and a latent payoff shock $U_j$, and makes an entry decision $Y_j\in\{0,1\}$. Given an action profile $Y=(Y_0,Y_1)$, firm $j$'s payoff is
\[
\pi_j(Y,X_j,U_j)
\;=\;
Y_j\Big[X_j^\top \alpha - \Delta_j Y_{1-j} + U_j\Big],
\]
where the payoff from not entering ($Y_j=0$) is normalized to zero. The structural parameter is
$\theta=(\alpha,\Delta_0, \Delta_1)$. For notational convenience, write $X=(X_0,X_1)$ and $U=(U_0,U_1)$.

Assume $\Delta_j\ge 0$ for each $j\in\{0,1\}$. Under this restriction, for every $(X,U,\theta)$, the game admits at least one pure-strategy Nash equilibrium. The equilibrium conditions can be expressed as a support restriction by defining
\begin{equation}\label{eq:entry_game_support}
\Gamma(\theta)
\;=\;
\Big\{(u,x,y)\in \mathcal{U}\times \mathcal{X}\times \mathcal{Y}:\;
\forall j\in\{0,1\},\  
\pi_j(y,x_j,u_j)\ \ge\ \pi_j\big(\delta_j(y),x_j,u_j\big)\Big\},
\end{equation}
where $\delta_j(y)$ is the action profile obtained by alternating firm $j$'s action while holding its rival's action fixed. The condition $\prob\big[(U,Z)\in \Gamma(\theta)\big]=1$ with $Z\equiv(X,Y)$ then requires that the observed outcome $Y$ is almost surely consistent with a pure-strategy Nash equilibrium of the game.

Standard specifications (e.g., \cite{ciliberto_market_2009}) impose a parametric distribution on $U$ and typically assume
independence between $U$ and $X$. Within the present framework, the researcher can instead impose moment restrictions on $U$. Consider two illustrative cases:

\begin{itemize}
\item Suppose that for each firm $j$, $U_j$ has a conditional median of zero given $X$. Then,
	\begin{equation} \label{eq:entry_medium}
    \forall j\in\{0,1\},\qquad \expt\left[X\left(\indicator\{U_j\le 0\} - \frac{1}{2}\right)\right]=0.
\end{equation}

One may also impose additional symmetry restrictions at payoff-relevant thresholds. Specifically, let $c_j(x;\theta,y_{1-j})= -x_j^\top \alpha + \Delta_j y_{1-j}$ denote firm $j$'s indifference cutoff when its rival chooses $y_{1-j}\in\{0,1\}$, so that entering is a best response if and only if $U_j\ge c_j(X;\theta,y_{1-j})$. If the conditional distribution of $U_j$ given $X$ is symmetric about zero, then for any cutoff $c$ we have $\prob(U_j\ge c\mid X)=\prob(U_j\le -c\mid X)$. This yields the moment restrictions: $\forall j\in\{0,1\},\ \forall y_{1-j}\in\{0,1\}$,
\begin{equation}
  \expt\left[X\left(\indicator\{U_j\ge c_j(X;\theta,y_{1-j})\}-\indicator\{U_j\le -c_j(X;\theta,y_{1-j})\}\right)\right]=0.
  \label{eq:entry_symmetric}
\end{equation}

\item Suppose that for each firm $j$, $U_j$ has a mean of zero and is uncorrelated with the covariates $X$. Then,
	\begin{equation}\label{eq:entry_game_uncorr}
    \forall j\in \{0, 1\}, \qquad \expt[U_j] = 0, \quad \expt[U_j X] = 0.
\end{equation}
In addition, suppose the researcher has learned the variance of firm $j$'s revenue, denoted by $\sigma^2_{R, j}$. A plausible economic assumption is that the variance of the latent payoff shock $U_j$ is bounded above by the variance of firm $j$'s revenue. This implies
\begin{equation}\label{eq:entry_game_varaince}
    \forall j\in \{0, 1\}, \qquad \expt[U_j^2] = \tau_j \sigma^2_{R, j},
\end{equation}
where $\tau_j \in [0, 1]$ is an auxiliary parameter.
\end{itemize}
These moment restrictions are illustrative. Other restrictions on the joint distribution of $(U,X)$ can be incorporated
similarly. In each case, the model fits the general framework in \eqref{eq:framework}.
\end{example}

\begin{example}[Production-function estimation]
\label{ex:production_function}
Consider a production-function setting as in \cite{olley_dynamics_1996, levinsohn_estimating_2003, ackerberg_identification_2015}.
Let firms be indexed by $i$ and time by $t$. Suppose log output $Y_{it}$ depends on log capital $K_{it}$ and log variable
inputs $V_{it}$ according to
\[
Y_{it}
=
f(K_{it},V_{it};\beta) + \omega_{it} + \varepsilon_{it},
\]
where $\omega_{it}$ is Hicks-neutral productivity and $\varepsilon_{it}$ is a transitory shock (or measurement error in log
output). The parameter $\beta$ governs the production function. I assume that capital $K_{it}$ is predetermined as of period $t-1$, while $V_{it}$ is a flexible input freely chosen in period $t$.
Following the literature, I assume productivity evolves as a
first-order Markov process,
\[
\omega_{it}
=
g(\omega_{it-1};\gamma) + \xi_{it},
\]
where $g(\cdot;\gamma)$ is a known functional form indexed by $\gamma$, and $\xi_{it}$ is the productivity innovation.

Let $W_{it}$ be a vector of instruments satisfying the exogeneity conditions
$\expt[\xi_{it}\mid W_{it}]=0$, $\expt[\varepsilon_{it}\mid W_{it}]=0$, and $\expt[\varepsilon_{it-1}\mid W_{it}]=0$.
These assumptions imply moment restrictions involving the observed variables
$(Y_{it},K_{it},V_{it},Y_{it-1},K_{it-1},V_{it-1},W_{it})$ and the latent productivities $(\omega_{it},\omega_{it-1})$:
\begin{equation}
\label{eq:prod_function_moment}
\begin{aligned}
\expt\Big[W_{it}\big(Y_{it} - f(K_{it},V_{it};\beta) - \omega_{it}\big)\Big] &= 0,\\
\expt\Big[W_{it}\big(Y_{it-1} - f(K_{it-1},V_{it-1};\beta) - \omega_{it-1}\big)\Big] &= 0,\\
\expt\Big[W_{it}\big(\omega_{it} - g(\omega_{it-1};\gamma)\big)\Big] &= 0.
\end{aligned}
\end{equation}

A maintained assumption in this literature is the \emph{invertibility} condition: there exist observed covariates
$X_{it}$ and a function $h$ such that $\omega_{it}=h(X_{it})$. This assumption can be restrictive regarding demand and markup heterogeneity; see, for example, the discussion in \cite{doraszelski_production_2025}. Here, instead of imposing invertibility, I use an inequality restriction implied by cost
minimization.

Specifically, suppose firms minimize costs when choosing $V_{it}$. Then the markup $\mu_{it}$ and productivity $\omega_{it}$
satisfy
\[
\log(\mu_{it})
=
P_{it} + f(K_{it},V_{it};\beta) + \omega_{it}
- P^V_{it} - V_{it}
+ \log\big(f_V(K_{it},V_{it};\beta)\big),
\]
where $P_{it}$ and $P^V_{it}$ denote, respectively, the log output price and the log price of variable inputs, and $f_V$ is
the partial derivative of $f$ with respect to $V$. If markups satisfy $\mu_{it}\ge 1$ almost surely (that is, price weakly
exceeds marginal cost), then $\log(\mu_{it})\ge 0$, and hence
\begin{equation}
\label{eq:markup_constraint}
\omega_{it}
\;\ge\;
\ell(P^V_{it},V_{it},P_{it},K_{it};\beta)
\;\equiv\;
P^V_{it} + V_{it} - P_{it}
- f(K_{it},V_{it};\beta)
- \log\big(f_V(K_{it},V_{it};\beta)\big)
\end{equation}
holds almost surely.

To map this example into the framework in \eqref{eq:framework}, collect the observed variables appearing in
\eqref{eq:prod_function_moment} and \eqref{eq:markup_constraint} (and their lagged counterparts) into a vector
$Z_{it}$, and define the latent vector
$U_{it}\equiv(\omega_{it},\omega_{it-1})$.\footnote{%
For expositional convenience, the support restriction below uses two consecutive periods $(t-1,t)$. Extensions to longer
panels are straightforward.
}
Then \eqref{eq:markup_constraint} and its lagged counterpart at $t-1$ induce the support restriction
$\prob\big[(U_{it},Z_{it})\in\Gamma(\theta)\big]=1$, with $\theta=(\beta,\gamma)$ and
\[
\Gamma(\theta)
=
\Big\{(u,z):\ u=(\omega_{it},\omega_{it-1}),\
\omega_{it} \ge \ell(P^V_{it},V_{it},P_{it},K_{it};\beta),\ 
\omega_{it-1} \ge \ell(P^V_{it-1},V_{it-1},P_{it-1},K_{it-1};\beta)\Big\}.
\]
Together with the moment restrictions in \eqref{eq:prod_function_moment}, this example fits the general framework in
\eqref{eq:framework}.
\end{example}

The final example in this section is a linear regression model with an interval-censored dependent variable. I use this
simple setting as a heuristic example in Section \ref{sec:identification} to highlight several technical subtleties in the identification analysis, including the importance of distinguishing support restrictions from moment restrictions, as emphasized in
Remark~\ref{rmk:support_moment_difference}.

\begin{example}[Regression with an interval-censored dependent variable]
\label{ex:interval_censored_regression}
Consider a scalar linear regression model $Y^{*} = \alpha + \beta W + \varepsilon$,
where $Y^{*}$ is a latent outcome, $W$ is an observed regressor, and $\theta=(\alpha,\beta)\in\Theta\subseteq\real^{2}$ is the parameter of interest. The standard exogeneity condition $\expt[\varepsilon\mid W]=0$ implies the unconditional orthogonality conditions
\begin{equation}
\label{eq:interval_reg_moments}
\begin{aligned}
\expt\big[Y^{*}-\alpha-\beta W\big] &= 0,\\
\expt\Big[W\big(Y^{*}-\alpha-\beta W\big)\Big] &= 0.
\end{aligned}
\end{equation}

Suppose $Y^{*}$ is not observed directly but is known to lie within an observed interval
$[\underline{Y},\overline{Y}]$:
\begin{equation}\label{eq:example_interval_support_prob_1}
    \prob\big[Y^* \in [\underline{Y}, \overline{Y}]\big] = 1.
\end{equation}
Let $Z\equiv(\underline{Y},\overline{Y},W)$ collect the observed variables and define the latent variable $U\equiv Y^{*}$. The
interval constraint in \eqref{eq:example_interval_support_prob_1} induces the support restriction $\prob\big[(U,Z)\in\Gamma(\theta)\big]=1$, where
\begin{equation}
\label{eq:interval_regression_support}
\Gamma(\theta)\equiv
\big\{(u,z): z=(\underline{y},\overline{y},w)\ \text{and}\ \underline{y}\le u\le \overline{y}\big\}.
\end{equation}
The moment restrictions in \eqref{eq:interval_reg_moments} can be written as $\expt\big[r(U,Z;\theta)\big]=0$, where
\begin{equation}
\label{eq:interval_regression_moment}
r(u,z;\theta)
=
\big(u-\alpha-\beta w,\ w(u-\alpha-\beta w)\big)^\top,
\qquad z=(\underline{y},\overline{y},w).
\end{equation}
Hence, this model fits the general framework in \eqref{eq:framework}.

This setup extends straightforwardly to multiple regressors and to settings in which some regressors are also interval-censored. I focus on the scalar case here to keep the notation simple and to highlight the main identification issues as transparently as possible.
\end{example}

\section{Counterfactual Analysis}\label{sec:counterfactual}
Building on the structural model in \eqref{eq:framework}, this section formalizes the counterfactual analysis. Any counterfactual exercise requires two ingredients. First, the researcher must specify how the counterfactual outcome $\tilde{Y}$ is generated under the counterfactual environment. Specifically, the analyst must define how $\tilde{Y}$ relates to the baseline observed and latent variables $(Z, U)$, the structural parameter $\theta$, and, potentially, additional latent primitives $\tilde{U}$. Second, one must specify the counterfactual parameter(s) of interest, namely, the specific functional(s) of the counterfactual distribution to be evaluated. I discuss these two components in turn.

\subsection{Counterfactual outcomes}
To describe the counterfactual environment, I assume there exists a correspondence $\tilde{\Gamma}(\cdot,\cdot;\theta)$ such that, in the counterfactual setting,
\begin{equation}
\prob\big[(\tilde{Y},\tilde{U}) \in \tilde{\Gamma}(Z,U;\theta)\big] = 1. \label{eq:counterfactual_support}
\end{equation}
The correspondence $\tilde{\Gamma}$ typically combines (\emph{i}) the baseline structural restrictions with (\emph{ii}) the counterfactual intervention (e.g., a policy change, a change in primitives, or a change in market structure). In many counterfactual scenarios, $\tilde{U}$ is redundant. For instance, it may equal $U$ or be a deterministic function of $U$. Alternatively, the counterfactual intervention may introduce new latent primitives $\tilde{U}$ that are not uniquely determined by $(Z,U)$. In such cases, the counterfactual analysis requires additional restrictions on the joint distribution of the counterfactual variables. To accommodate this, I allow for supplementary moment restrictions of the form
\begin{equation}
\expt\big[\tilde{r}(\tilde{Y},\tilde{U}, Z,U;\theta)\big] = 0,
\label{eq:counterfactual_moment_latent}
\end{equation}
where $\tilde{r}$ is a known vector-valued function. As detailed below, \eqref{eq:counterfactual_moment_latent} would be combined with the baseline moment restriction in \eqref{eq:framework} to identify counterfactual objects.

I first illustrate the construction of $\tilde{\Gamma}$ and $\tilde{r}$ using Examples~\ref{ex:entry_game} and \ref{ex:production_function}. I then discuss the limitations of this approach and how it relates to alternative formulations of counterfactuals.

\begin{example}[continues=ex:entry_game]
Consider the entry game in Example~\ref{ex:entry_game}. Recall that $Z\equiv(X, Y)$ consists of the observed profit shifters $X$ and entry decisions $Y$. To illustrate how to construct $\tilde{\Gamma}$, consider the following counterfactual interventions.

\smallskip
\noindent
\emph{(i) A change in profit shifters.}
Suppose the counterfactual transforms profit shifters from $X$ to $\tilde{X}=\phi(X)$ for a known mapping $\phi:\mathcal{X}\to\mathcal{X}$, while leaving payoff shocks unchanged, that is, $\tilde{U}=U$. For instance, $\phi$ may capture a hypothetical change in market size or demand. If the counterfactual entry profile $\tilde{Y}$ is assumed to be a pure-strategy Nash equilibrium of the modified game, then $\prob\big[(\tilde{Y},\tilde{U})\in \tilde{\Gamma}(Z,U;\theta)\big]=1$, where\footnote{This correspondence $\tilde{\Gamma}$ depends on $z\equiv(x,y)$ only through $x$. If the researcher is willing to restrict the counterfactual equilibrium selection rule so that it preserves the status quo equilibrium whenever it remains an equilibrium under the counterfactual, then $\tilde{\Gamma}$ may also depend on $y$.}
\begin{multline}\label{eq:entry_counterfactual_support}
\tilde{\Gamma}(z,u;\theta)
\;\equiv\;
\Big\{(\tilde{y},\tilde{u})\in \{0,1\}^2\times \real^2:\ 
\tilde{u}=u,\ \text{and } \\ \forall j\in\{0,1\},\ 
\pi_j\big(\tilde{y},\phi_j(x),u_j\big)\ \ge\ \pi_j\big(\delta_j(\tilde{y}),\phi_j(x),u_j\big)\Big\}.
\end{multline}
Here $\phi_j(x)$ denotes the counterfactual covariates relevant for firm $j$, and $\delta_j(\tilde{y})$ is the action profile obtained by unilaterally changing firm $j$'s action while holding its rival's action fixed. In this intervention, $\tilde{U}$ is determined by $U$, so additional restrictions such as \eqref{eq:counterfactual_moment_latent} are redundant.

\smallskip
\noindent
\emph{(ii) A merger.}
As a second counterfactual case, suppose the two potential entrants merge into a single firm that makes a binary entry decision $\tilde{Y}\in\{0,1\}$. Suppose the merged firm inherits profit shifters and shocks according to $\tilde{X}=\max\{X_0,X_1\}$ and $\tilde{U}=\max\{U_0,U_1\}$, where the maximum for $\tilde{X}$ is taken componentwise.\footnote{%
Alternative merger rules can be accommodated. For example, one could impose $\tilde{U}\in[\min\{U_0,U_1\},\max\{U_0,U_1\}]$ and $\tilde{X}=\lambda X_0+(1-\lambda)X_1$ for a latent $\lambda\in[0,1]$.%
}
Assume the merged firm's net profit from entry is $\tilde{X}^\top \alpha+\tilde{U}$. Then $\prob\big[(\tilde{Y},\tilde{U})\in \tilde{\Gamma}(Z,U;\theta)\big]=1$, where 
\[
\tilde{\Gamma}(z,u;\theta)
\;\equiv\;
\Big\{(\tilde{y},\tilde{u})\in \{0,1\}\times \real:\ 
\tilde{u}=\max\{u_0,u_1\},\ 
(-1)^{\tilde{y}}\Big[\max\{x_0,x_1\}^\top \alpha+\tilde{u}\Big]\le 0\Big\}.
\]
This correspondence is single-valued except in the knife-edge event where the merged firm's entry profit equals zero, in which case both actions are consistent with optimality. As in case (i), $\tilde{U}$ is determined by $U$, so \eqref{eq:counterfactual_moment_latent} is redundant.

\smallskip
\noindent
\emph{(iii) A new competitor.}
As a third counterfactual, suppose a third potential entrant $j=2$ becomes eligible to enter the market. Assume that the researcher can construct the entrant's observed profit shifters as $X_2=\varphi(X_0,X_1)$ and its competitive effect as $\Delta_2=\phi(\Delta_0,\Delta_1)$ for known mappings $\varphi$ and $\phi$. Let $\tilde{X}\equiv (X_0,X_1,X_2)$. Assume the incumbents' payoff shocks remain the same in the counterfactual, so $\tilde{U}_j=U_j$ for $j\in\{0,1\}$, but allow the new entrant's shock $\tilde{U}_2$ to be unrestricted.

Suppose the payoff structure extends to three firms as
\[
\pi_j(y,\tilde{X}_j,\tilde{U}_j)
\;=\;
y_j\Big[\tilde{X}_j^\top \alpha - \Delta_j \sum_{k\neq j} y_k + \tilde{U}_j\Big],
\qquad j\in\{0,1,2\}.
\]
If the counterfactual outcome $\tilde{Y}\in\{0,1\}^3$ is assumed to be a pure-strategy Nash equilibrium, then $\prob\big[(\tilde{Y},\tilde{U})\in \tilde{\Gamma}(Z,U;\theta)\big]=1$, where
\begin{multline*}
\tilde{\Gamma}(z,u;\theta)
\;\equiv\;
\Big\{(\tilde{y},\tilde{u})\in \{0,1\}^3\times \real^3:\ 
\tilde{u}_0=u_0,\ \tilde{u}_1=u_1,\ \\
\text{and } \forall j\in\{0,1,2\},\
\pi_j\big(\tilde{y},\tilde{x}_j,\tilde{u}_j\big)\ \ge\ \pi_j\big(\delta_j(\tilde{y}),\tilde{x}_j,\tilde{u}_j\big)\Big\},
\end{multline*}
with $\tilde{x}_0 = x_0$, $\tilde{x}_1 = x_1$, $\tilde{x}_2 = \varphi(x_0, x_1)$, and $\delta_j(\tilde{y})$ defined as the action profile obtained by unilaterally changing firm $j$'s action while holding the other firms' actions fixed. 

Unlike cases (i) and (ii), this counterfactual introduces a new latent primitive $\tilde{U}_2$ that is not uniquely determined by $(X,U)$. It is therefore natural to supplement the equilibrium restriction with additional assumptions on $\tilde{U}_2$, for example, by imposing the same median restriction used for the incumbents. One convenient formulation is the moment condition \eqref{eq:counterfactual_moment_latent} with
\[
\tilde{r}(\tilde{y},\tilde{u},z,u;\theta)
=
x\left(\indicator\{\tilde{u}_2\le 0\} - \frac{1}{2}\right).
\]
Moment restrictions like \eqref{eq:entry_symmetric}-\eqref{eq:entry_game_varaince} can be imposed to $\tilde{U}_2$ as well.
\end{example}

\begin{example}[continues=ex:production_function]
Consider the production-function context in Example~\ref{ex:production_function}. Suppose the parameter of interest is the short-run supply curve in period $t$. To trace out supply at different prices, consider a counterfactual in which the log output price is set to $\tilde{P}$, while the state variables $K_{it}$ and $\omega_{it}$, as well as the log variable-input price $P^V_{it}$, are held fixed. The firm then chooses the log variable input to maximize period-$t$ profit in levels. Let the counterfactual input demand correspondence be
\[
\tilde{V}_{it}
\in
\varphi(\tilde{P},K_{it},P^V_{it},\omega_{it};\beta)
\;\equiv\;
\argmax_{v}
\Big\{
\exp\big(\tilde{P}+f(K_{it},v;\beta)+\omega_{it}\big)
-
\exp\big(P^V_{it}+v\big)
\Big\},
\]
where the $\argmax$ operator is set-valued to accommodate multiple maximizers.

Let $\tilde{Y}_{it}$ denote counterfactual log output. Under the production technology, $\tilde{Y}_{it} = f(K_{it},\tilde{V}_{it};\beta)+\omega_{it}$. Equivalently, the counterfactual outcome can be summarized by the correspondence
\[
\tilde{Y}_{it}
\in
\tilde{\Gamma}(\tilde{P},K_{it},P^V_{it},\omega_{it};\beta)
\;\equiv\;
\Big\{f(K_{it},v;\beta)+\omega_{it}:\ v\in \varphi(\tilde{P},K_{it},P^V_{it},\omega_{it};\beta)\Big\}.
\]
Because the counterfactual holds productivity fixed, no additional latent primitive is introduced, making $\tilde{U}$ redundant in this example.
\end{example}

\subsection{Counterfactual parameters} \label{subsec:counterfactual_parameters} 
Given a counterfactual environment specified by the support restriction \eqref{eq:counterfactual_support} and, when needed, the additional moment restriction \eqref{eq:counterfactual_moment_latent}, the next step is to define the counterfactual parameter of interest. Let $\tilde{\theta}\in\tilde{\Theta}$ denote a (possibly vector-valued) counterfactual parameter. I assume that $\tilde{\theta}$ is defined as a solution to a moment condition of the form
\begin{equation}
\label{eq:counterfactual_parameter_def}
\expt\Big[\tilde{m}(\tilde{Y},\tilde{U}, Z,U;\theta,\tilde{\theta})\Big] = 0,
\end{equation}
where the known function $\tilde{m}$ encodes the definition of the counterfactual object. This formulation is flexible enough to cover many targets routinely reported in applications, including means, probabilities, policy effects, and distributional functionals such as quantiles (under standard regularity conditions). I illustrate the construction in Examples~\ref{ex:entry_game} and \ref{ex:production_function}.

\begin{example}[continues=ex:entry_game]
Recall that $\tilde{Y}$ denotes the counterfactual entry profile, $\tilde{X}$ denotes counterfactual profit shifters (e.g., $\tilde{X}=\phi(X)$ for a known mapping $\phi$), and $\tilde{U}$ denotes counterfactual payoff shocks (as determined by the specific counterfactual intervention). Let $\tilde{\pi}_j$ denote firm $j$'s counterfactual profit. For instance, in the baseline two-firm environment,
\[
\tilde{\pi}_j
\;\equiv\;
\tilde{Y}_j\Big[\tilde{X}_j^\top \alpha - \Delta_j \tilde{Y}_{1-j} + \tilde{U}_j\Big],
\qquad j\in\{0,1\},
\]
with the natural modification for counterfactuals that alter the market structure (e.g., a merger or the entry of an additional firm).

Examples of counterfactual parameters $\tilde{\theta}$ and the corresponding functions $\tilde{m}(\cdot)$ that fit \eqref{eq:counterfactual_parameter_def} include:
\begin{enumerate}
\item \emph{Expected number of entrants:}
\[
\tilde{\theta} \equiv \expt\Big[\sum_{j}\tilde{Y}_j\Big],
\qquad\text{so that}\qquad
\tilde{m} = \sum_{j}\tilde{Y}_j - \tilde{\theta}.
\]

\item \emph{Probability that the market is not served:}
\[
\tilde{\theta} \equiv \prob\Big[\sum_{j}\tilde{Y}_j=0\Big],
\qquad\text{so that}\qquad
\tilde{m} = \indicator\Big\{\sum_{j}\tilde{Y}_j=0\Big\}-\tilde{\theta}.
\]

\item \emph{Expected total firm surplus (total profit):}
\[
\tilde{\theta} \equiv \expt\Big[\sum_{j}\tilde{\pi}_j\Big],
\qquad\text{so that}\qquad
\tilde{m} = \sum_{j}\tilde{\pi}_j - \tilde{\theta}.
\]

\item \emph{Average profit conditional on entry for firm $j$:}
Provided $\prob\big[\tilde{Y}_j=1\big]>0$, define
\[
\tilde{\theta}_j \equiv \expt\big[\tilde{\pi}_j \mid \tilde{Y}_j=1\big].
\]
Because $\tilde{\pi}_j=0$ when $\tilde{Y}_j=0$, this conditional mean can be characterized by the unconditional moment restriction
\[
\expt\big[\tilde{\pi}_j - \tilde{\theta}_j \tilde{Y}_j\big]=0,
\qquad\text{that is,}\qquad
\tilde{m} = \tilde{\pi}_j - \tilde{\theta}_j \tilde{Y}_j.
\]
\end{enumerate}
\end{example}

\begin{example}[continues=ex:production_function]
In the production-function counterfactual described above, let $\tilde{P}$ denote the counterfactual log output price, and let $(\tilde{V}_{it},\tilde{Y}_{it})$ denote the resulting counterfactual input choice and log output. Define period-$t$ counterfactual profit in levels as
\[
\tilde{\pi}_{it}
\;\equiv\;
\exp\big(\tilde{P}+\tilde{Y}_{it}\big) - \exp\big(P^V_{it}+\tilde{V}_{it}\big),
\]
where $P^V_{it}$ is the observed log price of variable inputs.

Examples of counterfactual parameters include:
\begin{enumerate}
\item \emph{Average (or aggregate) output in levels:}
\[
\tilde{\theta} \equiv \expt\big[\exp(\tilde{Y}_{it})\big],
\qquad\text{so that}\qquad
\tilde{m} = \exp(\tilde{Y}_{it}) - \tilde{\theta}.
\]

\item \emph{Average profit:}
\[
\tilde{\theta} \equiv \expt\big[\tilde{\pi}_{it}\big],
\qquad\text{so that}\qquad
\tilde{m} = \tilde{\pi}_{it} - \tilde{\theta}.
\]

\item \emph{$\tau$-quantile of the profit distribution:}
Let $\tilde{\theta}$ denote the $\tau$-quantile of $\tilde{\pi}_{it}$. Under standard continuity conditions at the quantile, $\tilde{\theta}$ satisfies
\[
\expt\big[\indicator\{\tilde{\pi}_{it}\le \tilde{\theta}\}-\tau\big]=0,
\qquad\text{that is,}\qquad
\tilde{m}=\indicator\{\tilde{\pi}_{it}\le \tilde{\theta}\}-\tau.
\]
\end{enumerate}
\end{example}

\subsection{Unified identification framework for structural and counterfactual parameters}
As the preceding discussion makes clear, counterfactual analysis typically adds three types of restrictions to the baseline model in \eqref{eq:framework}: (\emph{i}) a counterfactual support restriction \eqref{eq:counterfactual_support}; (\emph{ii}) an auxiliary moment restriction \eqref{eq:counterfactual_moment_latent}, when the counterfactual analysis introduces additional latent primitives; and (\emph{iii}) a defining moment restriction for the counterfactual parameter $\tilde{\theta}$, given in \eqref{eq:counterfactual_parameter_def}.

A key observation is that the combined system of baseline and counterfactual restrictions can itself be formulated as a model of the form in \eqref{eq:framework}. To see this, define an augmented latent vector $ U' \equiv (U,\tilde{Y},\tilde{U})$, which collects the baseline latent variables together with the unobserved counterfactual outcomes and counterfactual latent primitives. Let the observed vector remain $Z=(X,Y)$, and define the augmented parameter vector $\theta' \equiv (\theta,\tilde{\theta})$. Then, the baseline support restriction $\prob\big[(U,Z)\in\Gamma(\theta)\big]=1$ and the counterfactual support restriction $\prob\big[(\tilde{Y},\tilde{U})\in\tilde{\Gamma}(Z,U;\theta)\big]=1$ can be summarized by a single joint support restriction:
\begin{equation}
\label{eq:combined_support}
\prob\big[(U',Z)\in \Gamma'(\theta')\big]=1
\ \text{ with }\ 
\Gamma'(\theta')
\equiv
\Big\{(u',z):\  (u,z)\in \Gamma(\theta),\ (\tilde{y},\tilde{u})\in \tilde{\Gamma}(z,u;\theta)\text{ and }u'=(u,\tilde{y},\tilde{u})\Big\}.
\end{equation}

Likewise, the baseline moment restriction, the auxiliary counterfactual moment restriction (when needed), and the defining moment restriction for $\tilde{\theta}$ can be stacked into a single augmented moment restriction:
\begin{equation}
\label{eq:combined_moment}
\expt\big[r'(U',Z;\theta')\big]=0,
\qquad
r'(u',z;\theta')
\equiv
\begin{pmatrix}
r(u,z;\theta)\\
\tilde{r}(\tilde{y},\tilde{u},x,y,u;\theta)\\
\tilde{m}(\tilde{y},\tilde{u},x,y,u;\theta,\tilde{\theta})
\end{pmatrix}.
\end{equation}
When no additional counterfactual moment restrictions are required, the block $\tilde{r}(\cdot)$ can simply be omitted.

Equations \eqref{eq:combined_support} and \eqref{eq:combined_moment} demonstrate that counterfactual analysis can be embedded within the same support-and-moments framework as the baseline structural model. This yields an important implication: identification and inference for the counterfactual parameter $\tilde{\theta}$ can be carried out using the same partial-identification techniques designed for the structural parameter $\theta$. In particular, the identified set for $\tilde{\theta}$ can be obtained as the projection of the identified set for $\theta'$ in the augmented model. When only $\tilde{\theta}$ is of interest, inference can proceed by applying standard subvector inference procedures within the augmented model, leveraging existing methods for partially identified models; see, for example, \cite{bugni_inference_2017, kaido_confidence_2019, belloni_subvector_2018, marcoux_simple_2024}.

This approach differs fundamentally from the traditional two-step workflow, in which the researcher first identifies or estimates $\theta$ and then computes $\tilde{\theta}$ by simulating counterfactual outcomes from the structural model. In incomplete models, simulation typically requires additional equilibrium-selection or model-completion assumptions, and it may be entirely infeasible when the model delivers set-valued predictions. By contrast, the augmented-model formulation treats $\tilde{\theta}$ on an equal footing with $\theta$ for identification purposes, thereby avoiding the need to simulate counterfactual outcomes from a set-predicting structural model.

\section{Identification Analysis}\label{sec:identification}
This section studies identification within the support-and-moments framework of \eqref{eq:framework}. As demonstrated in the previous section, the same analysis applies not only to the baseline structural parameter $\theta$, but also to counterfactual parameters $\tilde{\theta}$ once the counterfactual restrictions are embedded in the augmented model defined by \eqref{eq:combined_support} and \eqref{eq:combined_moment}.

The remainder of this section proceeds as follows. First, I review sharp identification results from the existing literature and explain why they are often not directly applicable to the present setting. In particular, the key regularity condition used for sharp identification in the literature, namely the integrable boundedness of the relevant random sets, frequently fails in empirically relevant counterfactual exercises. Second, I characterize what can be learned when integrable boundedness is violated. This analysis yields a set that I call the \emph{moment closure} of the identified set. Finally, I study the relationship between the identified set and its moment closure, showing that, under weak regularity conditions, the two sets are statistically indistinguishable in finite samples. Importantly, these regularity conditions depend on how the model is formulated and are closely related to the distinction between support and moment restrictions emphasized in Remark~\ref{rmk:support_moment_difference}.

\subsection{Identified set and sharp identification results}
I begin by defining the identified set. Let $\mathcal{F}$ denote a collection of candidate distributions for $Z$. For example, $\mathcal{F}$ may be the class of all Borel probability measures on $\mathcal{Z}$, or a smaller class reflecting additional assumptions maintained by the researcher.  Fix $\theta\in\Theta$ and $F\in\mathcal{F}$. Let $\mathcal{H}(\theta,F)$ denote the set of all joint distributions $H$ of $(U,Z)$ such that (i) the support restriction holds $H$-almost surely and (ii) the marginal distribution of $Z$ under $H$ equals $F$:
\[
\mathcal{H}(\theta,F)
\equiv
\Big\{
H:\ \prob_H\big[(U,Z)\in \Gamma(\theta)\big]=1
\ \text{and}\ 
H_Z = F
\Big\},
\]
where $H_Z$ denotes the $Z$-marginal distribution of $H$.

\begin{definition}[Identified set]\label{def:id_set}
For any $F\in\mathcal{F}$, the identified set, denoted by $\Theta_I(F)$, is the set of all parameter values $\theta\in\Theta$ for which there exists some $H\in\mathcal{H}(\theta,F)$ satisfying the moment restriction $\expt_H\big[r(U,Z;\theta)\big]=0$. That is,
\[
\Theta_I(F)
\equiv
\Big\{
\theta\in\Theta:\ \exists\, H\in\mathcal{H}(\theta,F)\ \text{such that}\ \expt_H\big[r(U,Z;\theta)\big]=0
\Big\}.
\]
Equivalently, for any norm $\norm{\cdot}$ on $\real^{d_r}$, $\theta\in\Theta_I(F)$ if
\begin{equation}
    \label{eq:id_def_min}
\min_{H\in\mathcal{H}(\theta,F)}\norm{\expt_H\big[r(U,Z;\theta)\big]} = 0,
\end{equation}
with the convention that the minimum over an empty set equals $+\infty$.
\end{definition}

Sharp identification results for $\Theta_I(F)$ have been established by \cite{ekeland_optimal_2010} and \cite{beresteanu_sharp_2011}, who develop what is now commonly referred to as the \emph{support-function approach}. I adapt their approach to the present support-and-moments framework.

Fix $\theta\in\Theta$ and $z\in\mathcal{Z}$. Define
\begin{equation}\label{eq:moment_random_set}
\mathcal{R}(z;\theta)
\;\equiv\;
\big\{\, r(u,z;\theta)\in \real^{d_r}:\ (u,z)\in \Gamma(\theta)\,\big\}
\end{equation}
as the set of all values the moment function can take at $z$ under $\theta$ subject to the support restriction. Recall that $d_r\equiv \dim(r)\ge 1$, and let $\mathcal{S}\equiv\{\lambda\in\real^{d_r}:\norm{\lambda} =  1\}$ denote the unit sphere in $\real^{d_r}$. For any $\lambda\in\mathcal{S}$ and $z\in\mathcal{Z}$, define the support function of $\mathcal{R}(z;\theta)$ in direction $\lambda$ by
\begin{equation}\label{eq:support_function_def}
\gamma(\lambda,z;\theta)
\;\equiv\;
\sup_{t\in \mathcal{R}(z;\theta)} \lambda^\top t.
\end{equation}
Given $F\in\mathcal{F}$, the support-function approach relies on the following moment inequality:
\begin{equation}
\label{eq:moment_ineq}
\inf_{\lambda\in\mathcal{S}} \expt_F\big[\gamma(\lambda,Z;\theta)\big] \ge 0.
\end{equation}
Inequality \eqref{eq:moment_ineq} provides a sharp characterization of the identified set $\Theta_I(F)$ under certain regularity conditions. I organize these conditions into two groups.

\begin{assumption}\label{assu:reg}
Assume $d_r\ge 1$. For any $\theta\in\Theta$ and any $F\in\mathcal{F}$, assume the following:
\begin{enumerate}
\item\label{enu:CC1}
$\Gamma(\theta)$ is a Borel subset of $\mathcal{U}\times\mathcal{Z}$, and the section $\Gamma(z;\theta)\equiv \{u\in\mathcal{U}:(u,z)\in\Gamma(\theta)\}$ is a nonempty Borel set for every $z\in\mathcal{Z}$. Moreover, $r(u,z;\theta)$ is Borel measurable on $\mathcal{U}\times\mathcal{Z}$.

\item\label{enu:CC2}
There exists a Borel measurable function $g(\cdot;\theta,F)$ such that $\expt_F\big[g(Z;\theta,F)\big]<\infty$ and, for $F$-almost every $z$,
\[
g(z;\theta,F)\ \ge\ \inf\big\{\norm{t}:\ t\in \mathcal{R}(z;\theta)\big\}.
\]
\end{enumerate}
\end{assumption}

Assumption~\ref{assu:reg} collects basic measurability and well-posedness conditions. Condition~\ref{enu:CC1} is standard. Condition~\ref{enu:CC2} ensures that, for a given $(\theta,F)$, moment $\expt_H\big[r(U,Z;\theta)\big]$ are well-defined for at least one admissible joint distribution $H$ in $\mathcal{H}(\theta, F)$.

\begin{assumption}\label{assu:compact}
For any $\theta\in\Theta$ and any $F\in\mathcal{F}$, assume the following:
\begin{enumerate}
\item\label{enu:CC3}
For $F$-almost every $z$, the set $\mathcal{R}(z;\theta)$ is closed.

\item\label{enu:CC4}
There exists a Borel measurable function $g(\cdot;\theta)$ such that $\expt_F\big[g(Z;\theta)\big]<\infty$ and, for $F$-almost every $z$,
\[
g(z;\theta)\ \ge\ \sup\big\{\norm{t}:\ t\in \mathcal{R}(z;\theta)\big\}.
\]
\end{enumerate}
\end{assumption}

Assumption~\ref{assu:compact} corresponds to the closedness and integrable boundedness conditions routinely invoked in the support-function literature. In particular, Assumption~\ref{assu:compact} implies that $\mathcal{R}(z;\theta)$ is compact for $F$-almost every $z$. As discussed below, this requirement can be restrictive in structural and counterfactual applications because $\mathcal{R}(z;\theta)$ may fail to be bounded.

\begin{theorem}
\label{thm:sharp_support_function}
Under Assumptions~\ref{assu:reg} and \ref{assu:compact}, for any $F\in\mathcal{F}$ and any $\theta\in\Theta$, $\theta\in\Theta_I(F)$ if and only if $\theta$ satisfies \eqref{eq:moment_ineq}. Equivalently, for any $F\in\mathcal{F}$,
\begin{equation*}
\Theta_I(F)=\tilde{\Theta}(F),
\qquad
\text{where }\;
\tilde{\Theta}(F)\equiv
\left\{
\theta\in\Theta:\ \inf_{\lambda\in\mathcal{S}} \expt_F\big[\gamma(\lambda,Z;\theta)\big]\ge 0
\right\}.
\end{equation*}
\end{theorem}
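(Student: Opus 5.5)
The plan is to reduce the statement to an Aumann-expectation characterization for the random closed set $\mathcal{R}(Z;\theta)$ with $Z\sim F$. Then I will use the fact that a compact convex set is pinned down by its support function.

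\emph{Necessity.} Suppose $\theta\in\Theta_I(F)$ and take $H\in\mathcal{H}(\theta,F)$ with $\expt_H[r(U,Z;\theta)]=0$. Assumption~\ref{assu:compact}\ref{enu:CC4} makes $r(U,Z;\theta)$ $H$-integrable, since $\norm{r(U,Z;\theta)}\le g(Z;\theta)$ $H$-a.s. By the support restriction, $r(U,Z;\theta)\in\mathcal{R}(Z;\theta)$ a.s. Hence for every $\lambda\in\mathcal{S}$ we have $\lambda^\top r(U,Z;\theta)\le\gamma(\lambda,Z;\theta)$ a.s. Taking expectations gives $0\le\expt_F[\gamma(\lambda,Z;\theta)]$, and $\gamma$ is integrable because $|\gamma|\le g$. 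Measurability of $z\mapsto\gamma(\lambda,z;\theta)$ follows from Assumption~\ref{assu:reg}\ref{enu:CC1}: $\mathcal{R}(\cdot;\theta)$ is the image of a Borel set under a Borel map, so the supremum is universally measurable, which suffices on a complete probability space.

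\emph{Sufficiency.} This is the main step. I will view $\mathcal{R}(Z;\theta)$ as a random compact set on the probability space of $F$. Closedness and integrable boundedness come from Assumption~\ref{assu:compact}, and nonemptiness comes from Assumption~\ref{assu:reg}\ref{enu:CC1}. Its graph $\{(z,t):t\in\mathcal{R}(z;\theta)\}$ is the projection of a Borel set, hence analytic. So the Aumann/von Neumann measurable selection theorem applies on the $F$-completion.

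Its selection expectation $\expt\mathcal{R}$ is the set of $\expt_F[\xi]$ over integrable measurable selections $\xi\in\mathcal{R}(Z;\theta)$. On a nonatomic space this set is convex and compact (Aumann/Lyapunov), with support function $\expt_F[\gamma(\lambda,Z;\theta)]$. To handle atoms, I will instead work with $\expt\,\co\mathcal{R}$. The support function is unchanged, and I will note that convex combinations at atoms are realized by randomizing $U$ below.

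Inequality \eqref{eq:moment_ineq} says the support function of this compact convex set is nonnegative in every direction. If $0\notin\expt\,\co\mathcal{R}$, strict separation would give a $\lambda$ with $\sup_{t\in\expt\,\co\mathcal{R}}\lambda^\top t<0$, a contradiction. So $0=\expt_F[\xi]$ for some selection $\xi$ of $\co\mathcal{R}(Z;\theta)$. By Carathéodory and measurable selection, I will write $\xi=\sum_{k\le d_r+1}w_k(Z)\xi_k(Z)$ with $\xi_k$ measurable selections of $\mathcal{R}(Z;\theta)$ and measurable weights $w_k$.

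\emph{Lifting to $U$.} For each $k$, the set $\{u:(u,z)\in\Gamma(\theta),\ r(u,z;\theta)=\xi_k(z)\}$ is nonempty with Borel graph. Measurable selection (Jankov--von Neumann) then gives universally measurable maps $u_k(z)$ realizing $\xi_k(z)$. Define $H$ by drawing $Z\sim F$, then drawing index $k$ with probability $w_k(Z)$, and setting $U=u_k(Z)$. Then $H_Z=F$, the support restriction holds a.s., and $\expt_H[r]=\expt_F[\xi]=0$, so $\theta\in\Theta_I(F)$.

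The main obstacle is the measurability bookkeeping: the graph of $\mathcal{R}$ is analytic rather than Borel, so selections are only universally measurable, and I will rely on completeness to handle this. Convexity of the Aumann expectation when $F$ has atoms is dealt with by the randomization step above.
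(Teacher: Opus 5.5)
Your argument is correct, but it is organized differently from the paper's. The paper does not prove Theorem~\ref{thm:sharp_support_function} directly. It first shows $\overline{\Theta}_I(F)=\tilde{\Theta}(F)$ under Assumption~\ref{assu:reg} alone (Theorem~\ref{thm:weak_sharp_support_function}), and gets necessity from $\Theta_I(F)\subseteq\overline{\Theta}_I(F)$. It uses Assumption~\ref{assu:compact} only in Lemma~\ref{lem:suff_sharp_id}, to upgrade $0\in\clco\expt\cl\mathcal{R}(Z;\theta)$ to $0\in\co\expt_I\mathcal{R}(Z;\theta)$: the Aumann integral of an integrably bounded closed set is compact, hence so is its convex hull. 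Carathéodory is then applied to this \emph{deterministic} subset of $\real^{d_r}$. This produces constant weights $p_j$ and selections $f_j$ with $\sum_j p_j\expt f_j(Z)=0$. The mixture is realized with an index $T$ independent of $Z$, and $R=f_T(Z)$ is lifted to $U$ through the exact-attainment clause of Lemma~\ref{lem:selection}.

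You instead separate directly on $\expt_I\co\mathcal{R}(Z;\theta)$, obtain a single mean-zero selection $\xi$ of $\co\mathcal{R}(Z;\theta)$, and apply Carathéodory pointwise in $z$. Your mixing weights $w_k(Z)$ therefore depend on $Z$.

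Your version is self-contained and avoids the identity $\expt\clco Y=\clco\expt Y$ (Lemma~\ref{lem:random_set_all}\ref{enu:wa4}). The price is a measurable Carathéodory decomposition, which the paper's constant-weight trick sidesteps. The paper's version, in turn, makes visible that integrable boundedness is used only to ensure that $0$ is attained rather than approximated. That is exactly the gap between $\Theta_I(F)$ and $\overline{\Theta}_I(F)$ that the rest of the paper studies.

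One bookkeeping detail in your lifting step. $\xi_k$ is only measurable for the $F$-completion, so the set $\{(z,u):(u,z)\in\Gamma(\theta),\ r(u,z;\theta)=\xi_k(z)\}$ is Borel only after you replace $\xi_k$ by a Borel version equal to it $F$-a.s. The paper does this in the proof of Lemma~\ref{lem:suff_id_interior}. Also, the sections of this set are nonempty only for $F$-a.e.\ $z$; Jankov--von Neumann still gives what you need. You can merge the decomposition and the lift into one selection of $(w,u_1,\ldots,u_{d_r+1})$, with $w$ in the unit simplex. Select from the Borel set where $(u_k,z)\in\Gamma(\theta)$ for all $k$ and $\sum_k w_k\, r(u_k,z;\theta)$ equals the Borel version of $\xi(z)$.
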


Here, $\tilde{\Theta}(F)$ denotes the set of parameter values satisfying the support-function inequality in \eqref{eq:moment_ineq}. In what follows, I refer to $\tilde{\Theta}(F)$ as the \emph{support-function set}. Although Theorem~\ref{thm:sharp_support_function} shows that the identified set $\Theta_I(F)$ coincides with the support-function set $\tilde{\Theta}(F)$, this sharp characterization relies on Assumption~\ref{assu:compact}. As the examples below illustrate, that assumption can fail in economically relevant structural models and counterfactual applications.

One common source of failure is that the support restriction leaves some components of the latent variables unbounded conditional on observables, while the moment vector---whether from the baseline structural model or from the counterfactual analysis augmentation---depends on those latent variables in levels or higher-order terms. In such cases, the random set $\mathcal{R}(z;\theta)$ need not be integrably bounded. A second source of failure is representational: Assumption~\ref{assu:compact} may also fail when a support restriction is implicitly implied from a moment restriction rather than imposed explicitly. Importantly, neither type of failure implies, by itself, that the model lacks identifying power. Indeed, the examples show that all three different scenarios can arise when Assumption~\ref{assu:compact} is violated: (\emph{i}) both the support-function set $\tilde{\Theta}(F)$ and the identified set $\Theta_I(F)$ may be unbounded; (\emph{ii}) both sets may continue to deliver finite bounds; and (\emph{iii}) the support-function set $\tilde{\Theta}(F)$ may be completely uninformative (that is, equal to the entire parameter space) even though the identified set $\Theta_I(F)$ remains informative.

These possibilities raise fundamental questions that Theorem~\ref{thm:sharp_support_function} alone cannot answer. Suppose Assumption~\ref{assu:compact} fails. If the support-function set $\tilde{\Theta}(F)$ successfully provides finite bounds, does it still sharply characterize the identified set? Conversely, if $\tilde{\Theta}(F)$ is unbounded or completely uninformative, does this reflect a methodological failure of the support-function approach, or does it instead indicate that the maintained model itself lacks the identifying power to bound the parameter of interest? Distinguishing between these alternative explanations is essential in empirical practice: a methodological failure suggests the need for an alternative identification approach, whereas an inherently uninformative model necessitates stronger economic assumptions. Subsections~\ref{sec:moment_closure} and \ref{subsec:finite_sample_indistinguishability} develop formal results that resolve these questions. Before proceeding to the formal theory, however, I first present concrete examples illustrating the three scenarios. \vspace{0.5em}

\noindent {\bf Illustrations of the three scenarios}. The following example illustrates a case where Assumption~\ref{assu:compact} fails in a counterfactual analysis. It also demonstrates scenario (\emph{i}), in which both the identified set $\Theta_I(F)$ and the support-function set $\tilde{\Theta}(F)$ are unbounded.

\begin{example}[continues=ex:entry_game]
Let us revisit the entry game, focusing on the baseline model governed by the moment restrictions in \eqref{eq:entry_medium}. Because these moments are constructed from indicator differences, the set $\mathcal{R}(z;\theta)$ defined in \eqref{eq:moment_random_set} contains a finite number of elements (and hence is closed) for any fixed $(z,\theta)$. Moreover, because $r(u,z;\theta)$ is constructed from a bounded scalar function multiplied by the covariates $x$, there exists a constant $C\in(0,\infty)$ such that for all realizations $z=(x,y)$,
\[
\sup\big\{\norm{t}:\ t\in\mathcal{R}(z;\theta)\big\} \le C\norm{x}.
\]
Therefore, provided $\expt_F[\norm{X}]<\infty$, Assumption~\ref{assu:compact} holds for the baseline structural model.

However, Assumption~\ref{assu:compact} fails once the model is augmented to incorporate counterfactual parameters that depend on the levels of latent shocks, such as expected total firm surplus or expected profit conditional on entry. To be concrete, consider a counterfactual involving a change in profit shifters, $\tilde{X} = \phi(X)$, and suppose the parameter of interest $\tilde{\theta}$ is the expected total firm surplus in the counterfactual environment. In this case, the stacked moment vector $r'$ in the augmented model includes a component defining $\tilde{\theta}$ that is affine in the payoff shocks:
\begin{equation*}
r'_k(u',z;\theta') = \sum_{j} \tilde{y}_j \Big[ \tilde{x}_j^\top \alpha - \Delta_j \tilde{y}_{1-j} + u_j \Big]  - \tilde{\theta}.
\end{equation*}
Under the equilibrium support restriction, the latent shocks $u_j$ are bounded from only one side. For example, if $y_j=1$, the best-response inequality implies $u_j \ge -x_j^\top \alpha + \Delta_j y_{1-j}$, which places no upper bound on $u_j$. Consequently, the set $\mathcal{R}'(z;\theta')$ for the augmented model (defined using $r'$ and $\Gamma'$) is unbounded on events with positive probability. Thus, Assumption~\ref{assu:compact} fails for the augmented model in this counterfactual exercise.

In Appendix~\ref{sec:proof_claim_example}, I show that the support-function set for the augmented model, denoted by $\tilde{\Theta}'(F)$, yields only a one-sided bound for the counterfactual parameter $\tilde{\theta}$. Specifically, if $\prob_F(Y_j=1)>0$ for some firm $j$, the support-function approach does not deliver a finite upper bound on counterfactual total firm profits. At the same time, after analyzing the structure model, I also show that the identified set for counterfactual total firm profits likewise has no finite upper bound. Thus, the absence of a finite upper bound is not a failure of the support-function approach. Rather, it is a consequence of the maintained model assumptions.

This result has an important practical implication. To obtain a finite upper bound on counterfactual total firm surplus, an applied researcher must impose stronger economic assumptions on the structural model rather than seek an alternative partial-identification method. In this example, I establish this conclusion by explicitly analyzing the structure model and its identified set. 
However, as established later in Theorem~\ref{thm:irreducible_model_id}, whenever the structural model satisfies a general irreducibility condition, one can show \textit{a priori}, without resorting to detailed mathematical derivations, that the identified set and the support-function set cannot be distinguished statistically in finite samples.
\end{example}

Next, I illustrate scenario (\emph{ii}), demonstrating that Assumption~\ref{assu:compact} can fail even when the maintained model possesses sufficient identifying power to yield finite two-sided bounds for both the identified set $\Theta_I(F)$ and the support-function set $\tilde{\Theta}(F)$.

\begin{example}[continues=ex:entry_game]
Consider again the entry game, but suppose the researcher instead imposes the moment restrictions specified in \eqref{eq:entry_game_uncorr} and \eqref{eq:entry_game_varaince}. Under this specification, the latent payoff shocks $U_j$ are assumed to be mean-zero, uncorrelated with the covariates $X$, and to have a variance equal to $\tau_j \sigma^2_{R,j} \le \sigma^2_{R,j}$.

Unlike the specification based on median restrictions, Assumption~\ref{assu:compact} fails here even for the baseline structural model. To see why, note that the moment vector $r(u,z;\theta)$ now includes components that are linear and quadratic in $u_j$: namely, $u_j$, $u_j x$, and $u_j^2 - \tau_j \sigma^2_{R,j}$. As discussed previously, the equilibrium support restriction bounds $u_j$ from only one side. For instance, when firm $j$ enters ($y_j=1$), the structural model requires $u_j \ge -x_j^\top \alpha + \Delta_j y_{1-j}$, leaving $u_j$ unbounded from above. Because this one-sided bound allows $u_j \to +\infty$, the quadratic term $u_j^2$ inside the moment vector grows without bound. Consequently, for any given realization $z$, the set $\mathcal{R}(z;\theta)$ is unbounded, causing a failure of Assumption~\ref{assu:compact} for the structural model. As a result, Assumption~\ref{assu:compact} also fails for any augmented model designed for counterfactual analysis.

Despite the failure of Assumption~\ref{assu:compact}, the model retains substantial empirical content. Consider the same counterfactual exercise as before, focusing on the expected counterfactual profit. Because the baseline moment restrictions explicitly constrain the second moment of the latent shocks via $\expt_F[U_j^2] = \tau_j \sigma^2_{R,j}$, we know the expectation $\expt_F[|U_j|]$ has a finite bound. Consequently, one can show that both the identified set $\Theta'_I(F)$ and the support-function set $\tilde{\Theta}'(F)$ for the augmented model imply a finite two-sided interval for the expected counterfactual profit.

This example illustrates that a failure of the integrable boundedness condition required by Assumption~\ref{assu:compact} does \emph{not} imply a lack of identifying power. A structural model can possess an unbounded set $\mathcal{R}(z;\theta)$, causing Assumption \ref{assu:compact} to break down, while still containing enough empirical content to generate nontrivial two-sided bounds for economically important counterfactual parameters.
\end{example}

Assumption~\ref{assu:compact} fails in the two preceding examples for a common reason: the support restriction delivers only a one-sided bound on the latent variables, while the moment functions depend on those latent variables in levels or higher-order terms. As a result, the random set $\mathcal{R}(z;\theta)$ need not be integrably bounded. Example~\ref{ex:production_function} falls into the same category, although I omit the details for brevity.

The next example illustrates a different source of failure of Assumption~\ref{assu:compact}. In that case, the problem arises not from one-sided support restrictions on latent variables, but from the way the model is represented: a support restriction is buried implicitly in the moment restrictions. It also illustrates scenario (\emph{iii}), in which the support-function set $\tilde{\Theta}(F)$ is completely uninformative even though the identified set $\Theta_I(F)$ remains informative. For simplicity, I use Example~\ref{ex:interval_censored_regression} to illustrate these points.

\begin{example}[continues=ex:interval_censored_regression]
Consider first the formulation in Example~\ref{ex:interval_censored_regression}, which imposes the interval condition through the support restriction \eqref{eq:interval_regression_support}. For each realized $z=(\underline{y},\overline{y},w)$, the section $\Gamma(z;\theta)=[\underline{y},\overline{y}]$ is compact. Because $r(u,z;\theta)$ is continuous in $u$, the set $\mathcal{R}(z;\theta)$ is therefore compact. Under mild moment conditions on the observables, such as
\[
\expt_F\big[|W|^2+|\underline{Y}|^2+|\overline{Y}|^2\big]<\infty,
\]
Assumption~\ref{assu:compact} holds.

Now consider a reformulation that treats the interval restriction as an additional moment condition:
\begin{equation}\label{eq:interval_regression_support_weak}
\Gamma^\dagger(\theta)=\mathcal{U}\times\mathcal{Z}
\end{equation}
and
\begin{equation}\label{eq:interval_regression_moment_weak}
r^\dagger(u,z;\theta)
=
\big(u-\alpha-\beta w,\ w(u-\alpha-\beta w),\ \indicator\{\underline{y}\le u\le \overline{y}\}-1\big)^\top.
\end{equation}
Let $\Theta_I^\dagger(F)$ and $\tilde{\Theta}^\dagger(F)$ denote, respectively, the identified set and the support-function set under the reformulated model $(\Gamma^\dagger,r^\dagger)$. Because $u$ is unrestricted under $\Gamma^\dagger(\theta)$, the set
\[
\mathcal{R}^\dagger(z;\theta)
=
\Big\{
\big(u-\alpha-\beta w,\ w(u-\alpha-\beta w),\ \indicator\{\underline{y}\le u\le \overline{y}\}-1\big)^\top:\ u\in\real
\Big\}
\]
is unbounded. Hence, Assumption~\ref{assu:compact} fails under this formulation, even though it imposes the exact same economic restrictions as the original model.

The support function of $\mathcal{R}^\dagger(z;\theta)$ can be computed explicitly. For any direction $\lambda=(\lambda_1,\lambda_2,\lambda_3)^\top\in\mathcal{S}$, factoring the terms linear in $u$ yields:
\begin{align*}
\gamma^\dagger(\lambda,z;\theta)
&=
\sup_{u\in\real}
\Big\{
(\lambda_1+\lambda_2 w)(u-\alpha-\beta w)
+\lambda_3\big(\indicator\{\underline{y}\le u\le \overline{y}\}-1\big)
\Big\}\\
&=
\begin{cases}
+\infty, & \text{if } \lambda_1+\lambda_2 w \neq 0,\\[4pt]
\max\{0,-\lambda_3\}, & \text{if } \lambda_1+\lambda_2 w = 0.
\end{cases}
\end{align*}
In particular, $\gamma^\dagger(\lambda,z;\theta)\ge 0$ for all $(\lambda,z,\theta)$. Therefore, $\tilde{\Theta}^\dagger(F)=\Theta$. In other words, under this reformulation, the support-function approach is completely uninformative.

By contrast, the identified set $\Theta_I^\dagger(F)$ is informative. For example, because $\expt_F[Y^*]=\alpha+\beta\,\expt_F[W]$ and $\underline{Y}\le Y^*\le \overline{Y}$ almost surely, any parameter $\theta = (\alpha, \beta)^\top \in \Theta_I^\dagger(F)$ must satisfy the elementary bounds:
\[
\expt_F[\underline{Y}]
\;\le\;
\alpha+\beta\,\expt_F[W]
\;\le\;
\expt_F[\overline{Y}].
\]
Hence, $\Theta_I^\dagger(F)$ is a strict subset of $\Theta$. 

More broadly, this stylized example highlights two key points. First, the validity of Assumption~\ref{assu:compact} can depend on how the model is represented, in particular on whether support restrictions are imposed pointwise or absorbed into the moment vector. Second, the failure of Assumption~\ref{assu:compact}---in this case, due to mixing the support restriction and the moment restriction---can lead to a significant divergence between the identified set and the support-function set. This naturally raises a broader question: is the divergence between the identified set and the support-function set always due to mixing the support and moment restrictions? Do there exist other cases that could cause a substantial difference between these two sets? The theoretical results in Section~\ref{subsec:finite_sample_indistinguishability} provide a formal treatment of these questions.
\end{example}

I conclude this section with a brief remark on an  algebraic simplification of the support-function inequalities.
\begin{remark}
In some applications, certain components of the moment vector $r(u,z;\theta)$ depend only on the observables $z$ and not on the latent variables $u$. In such situations, the support-function characterization can be simplified. Suppose we partition the moment vector as $r(u, z; \theta) = \big(r_1(z; \theta)^\top, r_2(u, z; \theta)^\top\big)^\top$, where $r_1$ collects the components of $r$ that do not depend on $u$. Then, the moment-inequality restriction in \eqref{eq:moment_ineq} is equivalent to the following paired conditions:
\[
\expt_F\!\big[r_1(Z;\theta)\big]=0
\qquad\text{and}\qquad
\inf_{\lambda\in\mathcal{S}_2}\expt_F\!\big[\gamma_2(\lambda,Z;\theta)\big]\ge 0,
\]
where $\mathcal{S}_2\equiv\{\lambda\in\real^{\dim(r_2)}:\ \norm{\lambda}=1\}$ and $\gamma_2(\lambda,z;\theta) \equiv \sup\{\lambda^\top r_2(u,z;\theta):$ $(u,z)\in \Gamma(\theta)\}$. That is, the continuum of support-function inequalities need only be evaluated over the $u$-dependent subvector $r_2$, while the $u$-independent subvector $r_1$ can be evaluated as standard moment equalities.
\end{remark}

\subsection{Moment closure of the identified set}\label{sec:moment_closure}

This subsection studies what the support-function approach characterizes when Assumption~\ref{assu:compact} is not imposed. As discussed above, the sharp characterization in Theorem~\ref{thm:sharp_support_function} relies on integrable boundedness, a condition that can fail in many structural and counterfactual applications. I therefore present a characterization that requires only the minimal regularity conditions in Assumption~\ref{assu:reg}. To state the result, I first define the \emph{moment closure} of the identified set.

\begin{definition}[Moment closure of the identified set]
\label{def:moment_closure_id_set}
For any $F\in\mathcal{F}$, the moment closure of the identified set, denoted by $\overline{\Theta}_I(F)$, is defined as
\[
\overline{\Theta}_I(F)
\;\equiv\;
\Big\{
\theta\in\Theta:\ 
\inf_{H\in\mathcal{H}(\theta,F)}\big\|\expt_H[r(U,Z;\theta)]\big\| = 0
\Big\},
\]
where $\|\cdot\|$ is any norm on $\real^{d_r}$, and the infimum over an empty set is understood to be $+\infty$. Equivalently, $\theta\in\overline{\Theta}_I(F)$ if, for every $\epsilon>0$, there exists some $H\in\mathcal{H}(\theta,F)$ such that
\[
\big\|\expt_H[r(U,Z;\theta)]\big\|\le \epsilon.
\]
\end{definition}

Comparing Definitions~\ref{def:id_set} and \ref{def:moment_closure_id_set}, the difference is that $\Theta_I(F)$ requires the existence of an admissible joint distribution $H$ that satisfies the moment restriction exactly, whereas $\overline{\Theta}_I(F)$ requires only that the moment violation can be made arbitrarily small. By construction, $\Theta_I(F)\subseteq \overline{\Theta}_I(F)$.

The next theorem shows that, under Assumption~\ref{assu:reg} alone, the support-function inequalities characterize the moment closure $\overline{\Theta}_I(F)$.

\begin{theorem}\label{thm:weak_sharp_support_function}
Under Assumption~\ref{assu:reg}, for any $F\in\mathcal{F}$ and any $\theta\in\Theta$, $\theta\in\overline{\Theta}_I(F)$ if and only if $\theta$ satisfies \eqref{eq:moment_ineq}. Equivalently, for any $F\in\mathcal{F}$,
\[
\overline{\Theta}_I(F)=\tilde{\Theta}(F).
\]
\end{theorem}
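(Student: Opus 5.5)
The plan is to reduce the statement to a separating-hyperplane argument applied to the set of attainable moment vectors. Fix $\theta\in\Theta$ and $F\in\mathcal{F}$, and let $\mathcal{H}^1(\theta,F)\subseteq\mathcal{H}(\theta,F)$ consist of those $H$ with $\expt_H\norm{r(U,Z;\theta)}<\infty$, which are exactly the $H$ for which $\expt_H[r]$ is well defined. Define
\[
\mathcal{M}(\theta,F)\equiv\big\{\expt_H[r(U,Z;\theta)]:\ H\in\mathcal{H}^1(\theta,F)\big\}\subseteq\real^{d_r}.
\]
By Definition~\ref{def:moment_closure_id_set}, $\theta\in\overline{\Theta}_I(F)$ if and only if $0\in\cl\,\mathcal{M}(\theta,F)$. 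I would establish three facts:
\begin{enumerate}
\item $\mathcal{M}(\theta,F)$ is nonempty and convex;
\item its support function satisfies $\sup_{m\in\mathcal{M}(\theta,F)}\lambda^\top m=\expt_F[\gamma(\lambda,Z;\theta)]$ for every $\lambda\in\mathcal{S}$;
\item for a nonempty convex set $C\subseteq\real^{d_r}$, $0\in\cl\,C$ if and only if $\sup_{m\in C}\lambda^\top m\ge 0$ for all $\lambda\in\mathcal{S}$.
\end{enumerate}
Fact (iii) is the standard strict separation of the point $0$ from the closed convex set $\cl\,C$, and it does not require $C$ to be bounded. Combining (i)–(iii) gives $\overline{\Theta}_I(F)=\tilde{\Theta}(F)$ directly.

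Convexity in (i) is immediate. If $H_1,H_2\in\mathcal{H}^1(\theta,F)$, then the mixture $tH_1+(1-t)H_2$ still has $Z$-marginal $F$, still charges only $\Gamma(\theta)$, and is still integrable, and its moment is the corresponding convex combination. Nonemptiness uses Assumption~\ref{assu:reg}. The graph $\Gamma(\theta)$ is Borel with nonempty sections, so a measurable-selection theorem (von Neumann–Aumann, which gives universally measurable selections; this is acceptable because the probability space is complete) yields $u_0(z)\in\Gamma(z;\theta)$ with $\norm{r(u_0(z),z;\theta)}\le g(z;\theta,F)+1$. The law of $(u_0(Z),Z)$ under $Z\sim F$ is then an element of $\mathcal{H}^1(\theta,F)$.

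For (ii), the inequality ``$\le$'' is trivial: $\lambda^\top r(U,Z;\theta)\le\gamma(\lambda,Z;\theta)$ holds $H$-a.s. for every admissible $H$. Note also that $\gamma(\lambda,z;\theta)\ge-\norm{r(u_0(z),z;\theta)}$, so $\expt_F[\gamma]$ is well defined in $(-\infty,+\infty]$. For ``$\ge$'' I would again use measurable selection, this time applied to $\{(u,z):(u,z)\in\Gamma(\theta),\ \lambda^\top r(u,z;\theta)\ge\min\{n,\gamma(\lambda,z;\theta)-1/n\}\}$, a set with nonempty sections. Here I would need to check that $\gamma(\lambda,\cdot;\theta)$ is universally measurable as a projection-type supremum. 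This produces selections $u_n(z)$ with $\lambda^\top r(u_n(z),z)\uparrow\gamma(\lambda,z)$ in the truncated sense.

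These $u_n$ need not be integrable. To fix this I would splice with the base selection: set $v_n(z)=u_n(z)$ on $\{\norm{r(u_n(z),z)}\le n\}$ and $v_n(z)=u_0(z)$ otherwise. Then $\norm{r(v_n(Z),Z)}\le n+g+1$, so the induced $H_n$ lies in $\mathcal{H}^1(\theta,F)$. Moreover $\lambda^\top r(v_n(Z),Z)$ is bounded below by $-(g+1)$ and converges to $\gamma(\lambda,Z)$ pointwise. Fatou's lemma then gives $\liminf_n\expt_{H_n}[\lambda^\top r]\ge\expt_F[\gamma(\lambda,Z;\theta)]$, including the case where the right-hand side is $+\infty$.

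I expect this last step to be the main obstacle. It requires handling measurability of $\gamma$ and of the near-optimal selections without any continuity or compactness assumptions, and it requires showing that $\{\norm{r(u_n(z),z)}\le n\}$ eventually holds $F$-a.s. so that the spliced selections really converge. The latter follows because for each fixed $z$ one may choose the $u_n$ so that $\norm{r(u_n(z),z)}$ is finite and stabilizes, or more simply by choosing the selections sequentially and nested. Once (ii) is in hand, the theorem follows from (iii) with $C=\mathcal{M}(\theta,F)$.
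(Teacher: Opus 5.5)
Your overall strategy is sound and differs from the paper's, but the step you single out as the main obstacle is resolved incorrectly. You tie the truncation level to the approximation index, setting $v_n=u_n$ on $\{\norm{r(u_n(z),z;\theta)}\le n\}$, and claim this event eventually holds for each $z$. It need not. Where $\gamma(\lambda,z;\theta)=+\infty$, or where the supremum is approached only along points of large norm, every admissible $u_n(z)$ can have $\norm{r(u_n(z),z;\theta)}>n$. For instance, take $\Gamma(z;\theta)=[0,\infty)$, $r(u,z;\theta)=(u,e^{u})^\top$ and $\lambda=(1,0)^\top$. Then $\lambda^\top r\ge n$ forces $\norm{r}\ge e^{n}$, so $v_n(z)=u_0(z)$ for every $n$. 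At such $z$, Fatou only returns $\lambda^\top r(u_0(z),z;\theta)$, not $\gamma(\lambda,z;\theta)=+\infty$. Nesting the selections cannot help, because the norms must diverge.

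The fix is to decouple the two limits. For fixed $n$, splice $u_n$ with $u_0$ at an independent level $M$ and let $M\to\infty$ first. Because $\norm{r(u_n(z),z;\theta)}<\infty$ for each $z$, the spliced selection equals $u_n(z)$ for all large $M$. Fatou with an integrable lower bound such as $-(g+2)$ then gives $\sup_{m\in\mathcal{M}(\theta,F)}\lambda^\top m\ge\expt_F[\lambda^\top r(u_n(Z),Z;\theta)]$. Finally let $n\to\infty$, using monotone convergence of the lower thresholds.

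On measurability: the set you select from involves $\gamma$ itself, which is only upper semianalytic. That set need not be analytic, so a von Neumann--Aumann selection does not apply as stated. A clean route is to apply Lemma~\ref{lem:selection} to the Borel objective $-\min\{n,\lambda^\top r(u,z;\theta)\}$ on $\Gamma(\theta)$, whose infimum $-\min\{n,\gamma(\lambda,z;\theta)\}$ is at least $-n$. This yields universally measurable $u_n(z)\in\Gamma(z;\theta)$ with $\lambda^\top r(u_n(z),z;\theta)\ge\min\{n,\gamma(\lambda,z;\theta)\}-1/n$. These values converge to $\gamma(\lambda,z;\theta)$ pointwise, so universal measurability of $\gamma(\lambda,\cdot;\theta)$ comes for free.

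With these repairs your proof is correct. The paper instead works with the random closed set $\cl\mathcal{R}(\cdot;\theta)$. It cites Lemma~\ref{lem:random_set_all}\ref{enu:wa3} and \ref{enu:wa4} (Molchanov) to equate the support function of the Aumann expectation of $\clco\mathcal{R}(Z;\theta)$ with $\expt_F[\gamma(\lambda,Z;\theta)]$. Then, in Lemma~\ref{lem:suff_quasi_id}, it turns a point of $\clco\,\expt\,\cl\mathcal{R}(Z;\theta)$ near $0$ back into an admissible $H$. That step uses Carathéodory, an independent randomization over finitely many selections, and a second $\epsilon$-selection recovering $U$ from $R\in\cl\mathcal{R}(Z;\theta)$.

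By indexing moment vectors directly by joint laws $H$, you get convexity for free from mixtures. The Carathéodory, randomization and recovery steps then disappear, and your fact (ii) amounts to a hands-on proof of the interchange theorem in this setting. The paper's route is shorter because it outsources exactly the measure-theoretic step you found hardest. Yours is self-contained and shows that nothing beyond Assumption~\ref{assu:reg}, one selection lemma, Fatou, and separation is needed.
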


Theorem~\ref{thm:weak_sharp_support_function} provides a fundamental characterization of the support-function approach under minimal regularity conditions: the support-function approach is always sharp for the moment closure of the identified set. This result has two important implications. First, whenever the identified set $\Theta_I(F)$ differs from the support-function set $\tilde{\Theta}(F)$, the discrepancy is exactly the gap between the identified set and its moment closure. In particular, this gap does not arise because the support-function approach introduces an artifactual source of slackness beyond the moment closure itself. Theorem~\ref{thm:weak_sharp_support_function} thus shifts the analytical focus from the relationship between $\Theta_I(F)$ and $\tilde{\Theta}(F)$ to the relationship between $\Theta_I(F)$ and $\overline{\Theta}_I(F)$. As demonstrated below, this shift enables the discovery of conditions that are tied to the fundamental geometry structure of the model rather than to the mechanics of the support-function method.

Second, the role of Assumption~\ref{assu:compact} in Theorem~\ref{thm:sharp_support_function} can be understood as providing a sufficient condition under which $\Theta_I(F)$ and $\overline{\Theta}_I(F)$ coincide.  When Assumption~\ref{assu:compact} fails, however, $\Theta_I(F)$ and $\overline{\Theta}_I(F)$ may differ. Indeed, by Theorem~\ref{thm:weak_sharp_support_function}, the example illustrating scenario (\emph{iii}) also shows that the identified set and its moment closure can diverge substantially, even though their definitions differ only in whether the moment restriction must hold exactly or can be approximated arbitrarily well. This observation motivates a closer analysis of the relationship between $\Theta_I(F)$ and $\overline{\Theta}_I(F)$. In particular, it is useful to find conditions under which the gap between the identified set and its moment closure is negligible, while remaining substantially weaker than Assumption~\ref{assu:compact}. This is the focus of the next subsection.

\subsection{Finite-sample indistinguishability between the identified set and its moment closure}
\label{subsec:finite_sample_indistinguishability}
Establishing exact equality between the identified set and its moment closure is often intractable without the restrictive compactness-like conditions of Assumption~\ref{assu:compact}. To bypass this theoretical hurdle, instead of asking when these two sets coincide exactly, I consider a different question in this subsection: under what circumstances is it \emph{impossible} for any inference procedure to distinguish the identified set from its moment closure in finite samples? This is not the standard question in identification analysis, but it is closely aligned with the practical role of identification arguments. Identification is an asymptotic notion, and its usefulness lies in approximating what can be learned from data when the sample size is sufficiently large.  If, for any fixed sample size $n$, no inference procedure can distinguish $\Theta_I(F)$ from $\overline{\Theta}_I(F)$, then the fact that an identification approach characterizes $\overline{\Theta}_I(F)$ rather than $\Theta_I(F)$ is of limited practical consequence. In such cases, the moment closure represents the effective limit of what can be learned from any finite realization of the data.

As will become clear below, finite-sample indistinguishability is ultimately a population-level property. In particular,
whether $\Theta_I(F)$ and $\overline{\Theta}_I(F)$ are distinguishable in finite samples depends on how the model is
formulated, rather than on finite-sample details of any particular estimator or test.

Specifically, I investigate the conditions under which it is impossible to test the null hypothesis that $\theta$ belongs to the identified set against the alternative that it belongs to the moment closure but not the identified set:
\[
H_0: \theta \in \Theta_I(F) \quad \text{vs.} \quad H_1: \theta \in \overline{\Theta}_I(F) \setminus \Theta_I(F).
\]
Let $\{Z_i\}_{i=1}^n$ be an i.i.d. sample drawn from a distribution $F \in \mathcal{F}$. For a fixed $\theta \in \Theta$, define the collections of distributions consistent with $\theta$ under the identified set and its moment closure, respectively:
\[
\mathcal{F}_\theta \coloneqq \{F \in \mathcal{F} : \theta \in \Theta_I(F)\}, \qquad \overline{\mathcal{F}}_\theta \coloneqq \{F \in \mathcal{F} : \theta \in \overline{\Theta}_I(F)\}.
\]
By construction, the null hypothesis $\theta \in \Theta_I(F)$ is equivalent to the statement $F \in \mathcal{F}_\theta$, while the alternative hypothesis $\theta \in \overline{\Theta}_I(F) \setminus \Theta_I(F)$ is equivalent to $F \in \overline{\mathcal{F}}_\theta \setminus \mathcal{F}_\theta$. Since $\Theta_I(F) \subseteq \overline{\Theta}_I(F)$ holds for all $F$, it follows that $\mathcal{F}_\theta \subseteq \overline{\mathcal{F}}_\theta$ for any $\theta \in \Theta$.

Let $\phi_n: \mathcal{Z}^n \to [0, 1]$ denote a (possibly randomized) test function, where $\phi_n(Z_1, \ldots, Z_n)$ represents the probability of rejecting the null hypothesis given the sample. Under $H_0: F \in \mathcal{F}_\theta$, the size of the test $\phi_n$ is defined as the supremum rejection probability over the null:
\[
\sup_{F \in \mathcal{F}_\theta} \expt_F \left[ \phi_n(Z_1, \ldots, Z_n) \right].
\]
The inclusion $\mathcal{F}_\theta \subseteq \overline{\mathcal{F}}_\theta$ implies the following fundamental inequality:
\[
\sup_{F \in \mathcal{F}_\theta} \expt_F \left[ \phi_n(Z_1, \ldots, Z_n) \right] \le \sup_{F \in \overline{\mathcal{F}}_\theta} \mathbb{E}_F \left[ \phi_n(Z_1, \ldots, Z_n) \right].
\]
If this inequality holds with equality, then any test that controls size over the set $\mathcal{F}_\theta$ necessarily yields a rejection probability no greater than that size over the larger set $\overline{\mathcal{F}}_\theta$. Consequently, such a test cannot exhibit nontrivial power against the alternative $H_1: F \in \overline{\mathcal{F}}_\theta \setminus \mathcal{F}_\theta$, which is equivalent to $H_1:\theta \in \overline{\Theta}_I(F) \setminus \Theta_I(F)$. In this scenario, the power of the test is uniformly bounded by its size, rendering the two hypotheses statistically indistinguishable. This observation motivates the following formal definition.

\begin{definition}[Finite-sample indistinguishability]
\label{def:indistinguish}
Fix an arbitrary $n\in\mathbb{N}$. For any $\theta\in\Theta$, the hypothesis $\theta\in \Theta_I(F)$ is said to be
\emph{impossible to distinguish} from the hypothesis $\theta\in \overline{\Theta}_I(F)$ in samples of size $n$ if, for every
test function $\phi_n:\mathcal{Z}^n\to[0,1]$,
\[
\sup_{F\in\mathcal{F}_\theta}\expt_F\big[\phi_n(Z_1,\ldots,Z_n)\big]
=
\sup_{F\in\overline{\mathcal{F}}_\theta}\expt_F\big[\phi_n(Z_1,\ldots,Z_n)\big].
\]
If this holds for every $\theta\in\Theta$ and every $n\in\mathbb{N}$, then the identified set $\Theta_I$ is said to be
\emph{impossible to distinguish} from its moment closure $\overline{\Theta}_I$ in finite samples.
\end{definition}

As it turns out, finite-sample indistinguishability between the identified set and its moment closure can be established under conditions significantly weaker than those required by Assumption~\ref{assu:compact}. To formalize these conditions, I introduce the concepts of a \emph{reduced model} and \emph{irreducibility}. In essence, a model is \emph{reducible} if we can apply a linear rotation to the moment vector such that at least one rotated moment condition implies a deterministic support restriction. 

\begin{definition}[Reduced model, reducibility, and irreducibility]
\label{def:reduced_model}
Fix a model $(\Gamma,r)$ and a parameter value $\theta\in\Theta$, and let $\Theta_I(F;\Gamma,r)$ denote its identified set.
Assume for simplicity that every component of $r(u,z;\theta)$ depends on $u$. A model $(\tilde{\Gamma},\tilde{r})$ is a
\emph{reduced model} of $(\Gamma,r)$ at $\theta$ if the following two conditions hold:
\begin{itemize}
    \item \emph{(Construction):} There exist $d_r$ linearly independent vectors $\lambda_1, \ldots, \lambda_{d_r} \in \mathbb{R}^{d_r}$ such that
    \begin{align*}
    \tilde{\Gamma}(\theta) &\equiv \left\{ (u,z) \in \Gamma(\theta) : \lambda_1^\top r(u,z;\theta) = \gamma(\lambda_1, z; \theta) \right\}, \\
    \tilde{r}(u,z;\theta) &\equiv \left( \gamma(\lambda_1, z; \theta), \lambda_2^\top r(u,z;\theta), \ldots, \lambda_{d_r}^\top r(u,z;\theta) \right)^\top,
    \end{align*}
    where $\gamma(\cdot, \cdot; \theta)$ is the support function defined in \eqref{eq:support_function_def}.
    
    \item \emph{(Identification equivalence):} For every $F \in \mathcal{F}$, $\theta \in \Theta_I(F; \Gamma, r)$ if and only if $\theta \in \Theta_I(F; \tilde{\Gamma}, \tilde{r})$.
\end{itemize}
Model $(\Gamma,r)$ is \emph{reducible} if it admits at least one reduced model at some $\theta\in\Theta$, and it is
\emph{irreducible} otherwise.
\end{definition}

Relative to $(\Gamma,r)$, a reduced model $(\tilde{\Gamma},\tilde{r})$ modifies both the support and the moment
restrictions. On the support side, the restriction is strengthened by requiring the scalar functional
$\lambda_1^\top r(u,z;\theta)$ to attain its support-function value $\gamma(\lambda_1,z;\theta)$ almost surely. On the
moment side, $\tilde{r}$ is obtained by applying a nonsingular linear transformation to the original moment vector
$r(u,z;\theta)$. Because of the modified support restriction $\tilde{\Gamma}(\theta)$, the first rotated moment function
$\lambda_1^\top r(u,z;\theta)$ equals $\gamma(\lambda_1,z;\theta)$ and therefore depends only on $(z,\theta)$
and not on the latent variable $u$. In this sense, a reduced model reallocates part of the original moment restriction on
latent variables into the support restriction, yielding an equivalent but simpler representation in which at least one
moment condition no longer involves latent variables.  For simplicity, Definition~\ref{def:reduced_model} is stated under the restriction that all components of $r$ depend on $u$. A definition that covers the general case is provided in Appendix \ref{sec:distinguish_appendix}.

I illustrate the concept of reducibility using Example~\ref{ex:interval_censored_regression} below.

\begin{example}[continues=ex:interval_censored_regression]
Recall that in Example~\ref{ex:interval_censored_regression}, the interval condition $\underline{Y}\le Y^*\le \overline{Y}$
can be encoded either as a support restriction or as an additional moment restriction. In particular, if one treats the
interval condition as a moment restriction, then one obtains an alternative formulation $(\Gamma^\dagger,r^\dagger)$, defined
in \eqref{eq:interval_regression_support_weak} and \eqref{eq:interval_regression_moment_weak}.

This alternative formulation $(\Gamma^\dagger,r^\dagger)$ is reducible at every $\theta\in\Theta$. To see this, note that $r^\dagger$ has three components, so take the linearly independent vectors
\[
\lambda_1=(0,0,1)^\top,\qquad \lambda_2=(1,0,0)^\top,\qquad \lambda_3=(0,1,0)^\top.
\]
The reduced model constructed from $(\Gamma^\dagger,r^\dagger)$ using $(\lambda_1,\lambda_2,\lambda_3)$ imposes the
support condition $\lambda_1^\top r^\dagger(U,Z;\theta)=\gamma(\lambda_1,Z;\theta)$, which is equivalent to requiring
$U$ (i.e., $Y^*$) to lie in the observed interval $[\underline{Y},\overline{Y}]$. Moreover, the first component of the reduced moment
vector becomes $\gamma(\lambda_1,z;\theta)$, which is identically zero and therefore redundant. The remaining two components
coincide with the original orthogonality conditions in \eqref{eq:interval_regression_moment}. Hence, the reduced model
coincides with the original model $(\Gamma, r)$ of Example~\ref{ex:interval_censored_regression} (up to an additional
redundant moment condition), and the two formulations have the same identified set.
\end{example}

The example illustrates the intuition behind reducibility. If a model is reducible, then some of its moment restrictions
implicitly encode additional support restrictions. Conversely, irreducibility means that no further reallocation of this
type is possible. The next theorem shows that, for irreducible
models, the identified set and its moment closure are indistinguishable in finite samples.

\begin{theorem}
\label{thm:irreducible_model_id}
Suppose Assumption~\ref{assu:reg} holds and $\mathcal{F}$ is convex. If $(\Gamma,r)$ is irreducible, then it is impossible to
distinguish the identified set and its moment closure in finite samples.
\end{theorem}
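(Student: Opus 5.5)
The plan is to show that, for fixed $\theta$ and $n$, every $F\in\overline{\mathcal{F}}_\theta\setminus\mathcal{F}_\theta$ is a total-variation limit of distributions $F_k\in\mathcal{F}_\theta$. If $\|F_k-F\|_{TV}\to0$, then the product measures also converge, since $\|F_k^{\otimes n}-F^{\otimes n}\|_{TV}\le n\|F_k-F\|_{TV}$. Because $0\le\phi_n\le1$, this gives $\expt_{F_k}[\phi_n]\to\expt_F[\phi_n]$ for every test. Hence $\sup_{\mathcal{F}_\theta}\expt[\phi_n]\ge\sup_{\overline{\mathcal{F}}_\theta}\expt[\phi_n]$, and the reverse inequality is the trivial one noted before Definition~\ref{def:indistinguish}. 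The whole proof therefore reduces to a population-level approximation statement. The $F_k$ will be mixtures $F_k=(1-t_k)F+t_kF'$ with $t_k\downarrow0$ and $F'\in\mathcal{F}$, which is where convexity of $\mathcal{F}$ is used. Such mixtures automatically satisfy $\|F_k-F\|_{TV}\le t_k$.

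The geometric object I will work with is the set of achievable moments
\[
M(G)\equiv\{\expt_H[r(U,Z;\theta)]:\ H\in\mathcal{H}(\theta,G)\},
\]
where the expectation is required to be finite; Assumption~\ref{assu:reg}\ref{enu:CC2} makes $M(G)$ nonempty. I will establish three facts about $M$.
\begin{enumerate}
\item $M(G)$ is convex, because $\mathcal{H}(\theta,G)$ is closed under mixtures.
\item $M$ is superadditive under mixing: $M\big((1-t)F+tF'\big)\supseteq(1-t)M(F)+tM(F')$, obtained by mixing the corresponding couplings $H$ and $H'$.
\item By Theorem~\ref{thm:weak_sharp_support_function}, $\sup_{m\in M(G)}\lambda^\top m=\expt_G[\gamma(\lambda,Z;\theta)]$. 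This requires the standard measurable-selection argument behind that theorem.
\end{enumerate}
Now take $F\in\overline{\mathcal{F}}_\theta\setminus\mathcal{F}_\theta$, so that $0\in\cl M(F)\setminus M(F)$. Since a convex set contains the relative interior of its closure, $0$ must lie on the relative boundary of $\cl M(F)$. Hence there is a direction $\lambda_1$ that exposes $0$, with $\lambda_1^\top m\le0$ on $M(F)$ and $\expt_F[\gamma(\lambda_1,Z;\theta)]=0$. Moreover, this supremum is not attained in a way that yields $0$ exactly.

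Irreducibility enters at this point. I will complete $\lambda_1$ to a basis $\lambda_1,\ldots,\lambda_{d_r}$ and form the reduced model of Definition~\ref{def:reduced_model}. That construction is designed precisely so that the models agree on every $G$ with $\expt_G[\gamma(\lambda_1,Z)]=0$. Irreducibility says identification equivalence fails, so some $F'\in\mathcal{F}$ separates the two models. I will argue that the separating distribution can always be taken with $\theta\in\Theta_I(F';\Gamma,r)$ and with strict slack, $\expt_{F'}[\gamma(\lambda_1,Z;\theta)]>0$. The reason is that on distributions with zero slack, attaining $\lambda_1^\top r=\gamma(\lambda_1,z)$ a.s. is forced, so the two models coincide there.

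With such an $F'$ fixed, choose $m_k\in M(F)$ with $m_k\to0$ and a point $m'\in M(F')$ that has strictly positive $\lambda_1$-component. Then solve for $t_k\to0$ so that $(1-t_k)m_k+t_km'$, possibly corrected by other points of $M(F')$, equals $0$. By fact (ii), $\theta\in\Theta_I(F_k)$.

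I expect the main obstacle to be this final cancellation step in dimension $d_r>1$. Slack of $F'$ in the single direction $\lambda_1$ need not produce $0$ exactly in the mixture when the remaining directions are also on the boundary. The fix I would try first is an induction on the dimension of the face of $\cl M(F)$ containing $0$. I would repeatedly expose a face, use irreducibility with successive choices of $\lambda_1$ to obtain distributions $F'_j$ with slack in each exposing direction, and then mix $F$ with a finite convex combination of these. Convexity of $\mathcal{F}$ keeps the mixture in $\mathcal{F}$. The aim is for $0$ to become a relative-interior point of $(1-t)\cl M(F)+t\,\cl M(\bar F')$, at which point the relative-interior fact finishes the argument. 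A secondary point to check is that $\mathcal{F}_\theta\neq\emptyset$ whenever $\overline{\mathcal{F}}_\theta\setminus\mathcal{F}_\theta\neq\emptyset$; the $F'$ produced by irreducibility supplies this.
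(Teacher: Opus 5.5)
\textbf{There is a genuine gap, at the step you flag yourself.} Several parts of your plan are sound and match the paper's ingredients. Approximating $F\in\overline{\mathcal{F}}_\theta$ by mixtures $F_t=(1-t)F+t\bar F$ gives total-variation convergence and hence $\expt_{F_t}[\phi_n]\to\expt_F[\phi_n]$. Convexity and mixing-superadditivity of $M(\cdot)$ play the role of the paper's concavity of $\psi$. Your reading of irreducibility is also correct: a reduced model always implies the original one, so failure of identification equivalence for the reduction built from $\lambda$ yields some $F'_\lambda\in\mathcal{F}_\theta$ with $\expt_{F'_\lambda}[\gamma(\lambda,Z;\theta)]>0$.

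The gap is the cancellation step. Slack in the single exposing direction of $F$ cannot force an exact zero when $d_r>1$, and the face-by-face induction you propose has no termination argument. Write $N(G)=\{\lambda:\expt_G[\gamma(\lambda,Z;\theta)]=0\}$ for the normal cone of $\cl M(G)$ at $0$. Mixing with $F'_\lambda$ replaces $N(F)$ by $N(F)\cap N(F'_\lambda)$, which need only omit $\lambda$. This need not lower the dimension of the normal cone: for example, $\real^2_+\cap\{\lambda_2\le\lambda_1/2\}$ is still two-dimensional. Nor need it raise the dimension of the face containing $0$. So nothing in your sketch guarantees that finitely many successive choices reach a point where $0$ lies in the (relative) interior.

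\textbf{The missing idea} is to build, independently of $F$, one distribution with strict slack in every direction at once: $\bar F\in\mathcal{F}$ with $\inf_{\lambda\in\mathcal{S}}\expt_{\bar F}[\gamma(\lambda,Z;\theta)]>0$.

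The paper obtains this by contraposition in Theorem~\ref{lem:reducible_lemma}. On the convex set $\overline{\mathcal{F}}_\theta$, the zero sets satisfy $\Lambda(\delta F_1+(1-\delta)F_2)=\Lambda(F_1)\cap\Lambda(F_2)$. So if no such $\bar F$ existed, a relative-interior element would give a single $\tilde\lambda$ vanishing on all of $\overline{\mathcal{F}}_\theta$, and that $\tilde\lambda$ defines a reduced model.

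In your direction-wise language the same conclusion follows from compactness:
\begin{enumerate}
\item The map $\mu\mapsto\expt_G[\gamma(\mu,Z;\theta)]$ is lower semicontinuous on $\mathcal{S}$. This follows from Fatou's lemma, using the integrable lower bound $-\norm{f(Z)}$ supplied by an integrable selection $f$ of $\cl\mathcal{R}(Z;\theta)$.
\item Hence the sets $\{\mu:\expt_{F'_\lambda}[\gamma(\mu,Z;\theta)]>0\}$ are open and cover $\mathcal{S}$, so finitely many suffice.
\item Their average $\bar F$ lies in $\mathcal{F}$ by convexity, and has strict slack in every direction: each summand is nonnegative by Theorem~\ref{thm:weak_sharp_support_function}, and at least one is strictly positive.
\end{enumerate}

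Then for every $F\in\overline{\mathcal{F}}_\theta$ the support function of $\cl M(F_t)$ is at least $t\inf_{\mathcal{S}}\expt_{\bar F}[\gamma(\cdot,Z;\theta)]>0$. Hence $0\in\interior\cl M(F_t)=\interior M(F_t)$ by convexity of $M(F_t)$, and so $\theta\in\Theta_I(F_t)$. Your convex-set observation neatly replaces the paper's perturbation argument in Lemma~\ref{lem:suff_id_interior} at this point. With $\bar F$ in hand, you need neither $F$'s own exposing direction nor any face induction.
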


Irreducibility is a property of the \emph{representation} of a model, rather than of the underlying economic structure. In
particular, a given structural model can always be reformulated by moving any implicit support implications buried in the moment
conditions into explicit support restrictions. In this sense, Theorem~\ref{thm:irreducible_model_id} delivers a highly constructive message: as long as the researcher writes the model in an irreducible form, the identified set and its moment closure cannot be distinguished in finite samples.

Combining Theorems~\ref{thm:weak_sharp_support_function} and \ref{thm:irreducible_model_id} confirms that the support-function approach provides a robust, sensible characterization of identification even when Assumption~\ref{assu:compact} fails. Under the minimal regularity conditions of Assumption~\ref{assu:reg}, the support-function inequalities are sharp for the moment closure. Under irreducibility, the theoretical gap between the moment closure and the identified set is practically irrelevant in finite samples. Together, these results significantly extend the identification theory for the support-function approach, justifying its use in complex structural models and counterfactuals beyond the restrictive bounds of Assumption~\ref{assu:compact}.

These findings also clarify the interpretation of alternative methods. For example, \cite{schennach_entropic_2014} proposes an
entropic approach that targets the moment closure rather than the identified set itself. Theorem~\ref{thm:irreducible_model_id}
implies that, under irreducibility, the difference between these two targets is negligible from a finite-sample perspective.

\section{Discussion}\label{sec:conclusion}
Counterfactual analysis in incomplete models requires a different perspective than the conventional
``estimate--then--simulate'' workflow used in point-identified structural settings. For incomplete models characterized by
support and moment restrictions, this paper argues that identifying structural parameters and conducting counterfactual
analysis are isomorphic tasks. Often, counterfactual exercise can be incorporated into the original structural specification
through an augmented model that treats counterfactual parameters on the same footing as structural parameters. As a result,
both types of parameters can be analyzed within a unified identification framework, without relying on simulation of
counterfactual outcomes from a set-valued prediction rule.

To make this unified approach operational, the paper extends identification theory for the support-function approach beyond
the scope of existing sharp results. Classical sharp characterizations typically rely on integrable boundedness, an
assumption that is often violated in empirically relevant counterfactual exercises, especially when economically important
targets such as profits, surplus, or welfare are unbounded. I show that, under minimal regularity conditions, the
support-function inequalities remain sharp for the moment closure of the identified set. I then establish conditions
under which the identified set and its moment closure are indistinguishable in finite samples in the sense that no
finite-sample inference procedure can separate them. This result provides a practical justification for applying the
support-function approach in settings where integrable boundedness fails, thereby extending its usefulness to a broad class
of structural models and counterfactual questions.

Beyond the specific results in this paper, this finite-sample perspective suggests a broader view for identification analysis. The literature typically treats sharp identification of the identified set as the benchmark, often at the cost of strong regularity conditions, which could be restrictive and are sometimes hard to verify.  An alternative benchmark is to characterize the set of parameters that is indistinguishable to the identified set in finite samples. Shifting attention toward such finite-sample indistinguishability targets may allow for substantially weaker assumptions and more generally applicable identification results, including in environments not considered here. Developing this perspective further is a promising direction for future research.

\newpage

\begin{appendix}

\section{Preliminaries on Random Set and Measurable Selection}
\subsection{Basic results on random set}
This subsection collects some basic concepts and results of random sets that I use to prove results in the paper. Throughout the paper, the random set is defined on a finite-dimensional Euclidean space.  I follow the notation in \cite{molchanov_theory_2005} whenever possible. 

I first introduce some basic concepts formally.
\begin{defsec}[Random Set]\label{def:random_set}
  Let $(\Omega,\mathscr{S}, P)$ be a probability space. A correspondence
  $Y:\Omega\rightrightarrows \real^d$ is said to be a \emph{random closed set}
  if (\emph{i}) $Y(\omega)$ is closed almost surely; (\emph{ii}) for each
  compact set $K$ in $\real^d$, $\{\omega\in \Omega: Y(\omega) \cap K\neq \emptyset\} \in
  \mathscr{S}$.
\end{defsec}

Fix a complete probability space $(\Omega,\mathscr{S}, P)$. Let $L^1(\Omega;
\real^d)$ denote the set of all integrable functions $f:\Omega\mapsto \real^d$.
The following introduces the expectation concept of random set theory.

\begin{defsec}[integrable selections]\label{def:int_sel_awofei}
  If $Y$ is a random closed set, then $S^1(Y)$ denotes the family of all
  integrable selections of $Y$. That is,
  \begin{displaymath}
    S^1(Y) \coloneqq \{f \in L^1(\Omega; \real^d): f(\omega) \in Y(\omega) \text{ almost surely} \}
  \end{displaymath}
\end{defsec}

\begin{defsec}[integration of random set]
  Let $Y$ be a random closed set. Its \emph{Aumann integral} $\expt_I Y$ is
  defined as the set of all expectations of integrable selections,
  \begin{displaymath}
    \expt_I Y \coloneqq \{\expt f: f\in S^1(Y)\}
  \end{displaymath}
  Its \emph{selection expectation} $\expt Y$ is defined as the closure of $\expt_I
  Y$,
  \begin{displaymath}
    \expt Y \coloneqq \cl\{\expt f: f\in S^1(Y)\}
  \end{displaymath}
\end{defsec}

\begin{defsec}[integrable random set]\label{def:integrable_bound}
  A random closed set $Y$ is called \emph{integrable} if $S^1(Y)\neq \emptyset$.
  A random closed set $Y$ is called \emph{integrably bounded} if $\norm{Y}
  \coloneqq \sup\{\norm{t}: t\in Y\} $ has finite expectation, i.e. $\norm{Y}\in
  L^1(\Omega; \real)$.
\end{defsec}

The following lemma summarizes the results on random sets that I used to prove the theorems in the paper.
\begin{lemmasec}\label{lem:random_set_all}
  Let $Y$ be a random closed set, whose realization is a subset of $\real^d$.
  \begin{enumerate}
  \item \label{enu:wa1}$S^1(Y) \neq \emptyset$ if and only if $\inf\{\norm{t}:
    t\in Y\}$ is integrable.
    \item \label{enu:wa2} If $Y$ is integrably bounded, $\expt_I Y$ is a compact
      set and $\expt Y = \expt_I Y$.
    \item \label{enu:wa3} If a function $\zeta:\real^d \mapsto
      \real\cup\{\pm\infty\}$ is upper or lower semicontinuous , then
      $\inf\{\zeta(t): t\in Y\}$ is a random variable. Moreover, if
      $S^1(Y)\neq \emptyset$ and $\expt \zeta(f)$ is defined for all $f\in
      S^1(Y)$ and $\expt \zeta(f) < \infty$ for at least one $f\in S^1(Y)$,
      then
      \begin{displaymath}
        \inf_{f \in S^1(Y)} \expt \zeta(f) = \expt \inf_{t \in
          Y}\zeta(t)
    \end{displaymath}

    \item \label{enu:wa4} If $S^1(Y) \neq \emptyset$,  then $\expt \clco(Y) = \clco \expt Y$ where $\clco$ stands
      for the closure of the convex hull.
  \end{enumerate}
\end{lemmasec}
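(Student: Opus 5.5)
The lemma collects standard facts from random set theory, so my plan is to prove each item by reducing it to a measurable-selection argument, following \cite{molchanov_theory_2005}. The one tool used throughout is the Castaing representation of a random closed set $Y$ in $\real^d$. Definition~\ref{def:random_set} says that $Y$ is Effros measurable. Since $\real^d$ is Polish and the probability space is complete, the Kuratowski--Ryll-Nardzewski theorem then gives a countable family of measurable selections $\{\xi_k\}_{k\ge1}$ with $Y(\omega)=\cl\{\xi_k(\omega)\}_k$ on $\{Y\neq\emptyset\}$. From this representation, $\omega\mapsto \inf_{t\in Y}\zeta(t)=\inf_k \zeta(\xi_k)$ is measurable whenever $\zeta$ is upper semicontinuous. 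The lower semicontinuous case is handled through the distance function and a graph-measurability argument. This settles the measurability claim in \ref{enu:wa3}.

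\textbf{Item \ref{enu:wa1}.} If $f\in S^1(Y)$, then $\inf\{\norm{t}:t\in Y\}\le \norm{f}$, so the infimum is integrable. For the converse, I would use that $Y$ is closed, so the set of minimal-norm points $\{t\in Y:\norm{t}=\inf_{s\in Y}\norm{s}\}$ is a nonempty random closed set. It is the intersection of $Y$ with a measurable random sphere, and it is nonempty because $Y\cap$ (closed ball) is compact. Taking a measurable selection of this set produces $f$ with $\norm{f}$ equal to the integrable infimum, so $f\in S^1(Y)$.

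\textbf{Item \ref{enu:wa3}.} The inequality $\inf_{f\in S^1(Y)}\expt\zeta(f)\ge \expt\inf_{t\in Y}\zeta(t)$ is immediate. For the reverse inequality I would use the decomposability of $S^1(Y)$, meaning $\indicator_A f+\indicator_{A^c}g\in S^1(Y)$ for measurable $A$. Fix $f_0\in S^1(Y)$ with $\expt\zeta(f_0)<\infty$. Given $\epsilon>0$ and the Castaing selections $\xi_k$, define $f_N$ on $A_k$ to be $\xi_k$, where $A_k$ is the set on which $k\le N$ is the first index with $\zeta(\xi_k)<\max\{\inf_Y\zeta+\epsilon,-1/\epsilon\}$, restricted to a set where $\norm{\xi_k}$ is bounded. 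Set $f_N=f_0$ elsewhere. Then $f_N$ is integrable, and monotone/dominated convergence (using $\zeta(f_0)^+$ integrable) gives $\expt\zeta(f_N)\downarrow \le \expt\inf_Y\zeta+O(\epsilon)$. This is the Hiai--Umegaki interchange argument. \textbf{I expect this to be the main obstacle}, specifically the truncation needed to keep $f_N$ integrable and to handle $\pm\infty$ values. An alternative is to cite Rockafellar--Wets, Theorem~14.60, for normal integrands directly.

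\textbf{Items \ref{enu:wa2} and \ref{enu:wa4}.}
\begin{itemize}
\item For \ref{enu:wa2}, integrable boundedness with $\norm{Y}\le g\in L^1$ makes $S^1(Y)$ uniformly integrable. I would first treat convex $Y$: there $S^1(Y)$ is convex and norm-closed, hence weakly closed, and by Dunford--Pettis it is weakly compact in $L^1$. Since $f\mapsto\expt f$ is weakly continuous, $\expt_I Y$ is compact. For general closed $Y$, I would use that $\expt_I Y=\expt_I\co Y$ is convex over a nonatomic space, by Lyapunov's theorem together with decomposability. The atomic part contributes finite sums of compact sets, so $\expt_I Y$ is closed and $\expt Y=\expt_I Y$.
\item For \ref{enu:wa4}, the inclusion $\clco\expt Y\subseteq\expt\clco Y$ is clear. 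For the reverse, I would compare support functions. Applying \ref{enu:wa3} with $\zeta(t)=-\lambda^\top t$ shows that both sides have support function $\expt\sup_{t\in Y}\lambda^\top t$ (possibly $+\infty$), because $\sup_Y\lambda^\top t=\sup_{\clco Y}\lambda^\top t$. Two closed convex sets with equal support functions coincide.
\end{itemize}
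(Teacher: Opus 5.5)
The paper does not prove this lemma; it cites \cite{molchanov_theory_2005}: Theorems 1.7, 1.10 and 1.17 for \ref{enu:wa1}, \ref{enu:wa3} and \ref{enu:wa4}, and Theorem 1.24 for closedness of $\expt_I Y$ in \ref{enu:wa2}, with boundedness from $\norm{v}\le\expt\norm{Y}$. Your self-contained arguments for \ref{enu:wa1} and \ref{enu:wa4} work. But the interchange step in \ref{enu:wa3} has a genuine gap for lower semicontinuous $\zeta$.

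Building $\epsilon$-minimizers by scanning a Castaing family $\{\xi_k\}$ silently uses $\inf_k\zeta(\xi_k)=\inf_{t\in Y}\zeta(t)$. This holds for upper semicontinuous $\zeta$: if $\xi_{k_n}\to t$, then $\limsup_n\zeta(\xi_{k_n})\le\zeta(t)$. It fails for lower semicontinuous $\zeta$, because a countable dense family can avoid every near-minimizer. Take $d=1$, $Y\equiv\real$, $\zeta(0)=0$, $\zeta(1)=1$ and $\zeta=+\infty$ elsewhere (this $\zeta$ is lsc), $f_0\equiv1$, and let $\{\xi_k\}$ enumerate $\mathbb{Q}\setminus\{0,1\}$. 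Then every $A_k$ is empty, so $f_N=f_0$ and $\expt\zeta(f_N)=1$ for all $N$, while $\expt\inf_{t\in Y}\zeta(t)=0$.

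The missing idea is to select from sets adapted to $\zeta$. When $\zeta$ is lsc its level sets $\{\zeta\le q\}$ are closed. Hence $Y\cap\{\zeta\le q\}\cap\{\norm{t}\le N\}$, for $q\in\mathbb{Q}$ and $N\in\nature$, are random closed sets: they hit a compact $K$ iff $Y$ hits the compact $K\cap\{\zeta\le q\}\cap\{\norm{t}\le N\}$. Measurable selections of these sets, spliced with $f_0$ where the sets are empty, can replace $\{\xi_k\}$ in your construction. The same sets also give the lsc measurability claim directly, since $\{\inf_{t\in Y}\zeta(t)<a\}=\bigcup_{q\in\mathbb{Q},\,q<a}\{Y\cap\{\zeta\le q\}\neq\emptyset\}$.

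Rockafellar--Wets, Theorem 14.60, complements your Castaing argument rather than replacing it. It is stated for normal integrands, which here means the lsc case: the integrand equal to $\zeta(t)$ on $Y(\omega)$ and $+\infty$ off it. Your Castaing argument covers the usc case. None of this affects the paper, which only applies \ref{enu:wa3} to the continuous functions $t\mapsto-\lambda^\top t$ and $t\mapsto\norm{t}$.

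A smaller point concerns \ref{enu:wa2}. A general probability space, for example one carrying a discrete distribution with infinite support, can have countably many atoms $A_i$. On each atom $Y$ equals a fixed compact set $Y_{A_i}$ almost surely. So the atomic contribution to $\expt_I Y$ is the set of series $\sum_i P(A_i)y_i$ with $y_i\in Y_{A_i}$, not a finite sum of compact sets. It is still compact: it is the image of $\prod_i Y_{A_i}$, compact by Tychonoff, under a continuous map. The map is continuous because $\sum_i P(A_i)\sup_{t\in Y_{A_i}}\norm{t}\le\expt\norm{Y}<\infty$ makes the series converge uniformly. So your conclusion that $\expt_I Y$ is closed and $\expt Y=\expt_I Y$ stands once this step is stated correctly.
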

\begin{proof}
  For results \ref{enu:wa1}, \ref{enu:wa3} and \ref{enu:wa4}, see
  \cite{molchanov_theory_2005}, Theorem 1.7 (p.149), Theorem 1.10 (p. 150) and
  Theorem 1.17 (p. 154) respectively.

  For result \ref{enu:wa2}, Theorem 1.24 on page 158 in
  \cite{molchanov_theory_2005} implies $\expt_I Y$ is a closed set. Moreover,
  since $\norm{v} \le \expt \norm{Y}, \ \forall v\in \expt_I Y$, 
  $\expt_I Y$ is bounded. Since $\expt_I Y \subseteq \real^d$, $\expt_I Y$ is compact.
\end{proof}

\subsection{Selection Theorem}\label{sec:selection}
This subsection collects some concepts and results on measurable selection which will be cited later in the proof.

\begin{defsec}[universally measurable set]
  Let $S$ be a Polish space and let $\mathscr{B}_S$ be its Borel sigma algebra. A
  subset $S'$ of $S$ is a \emph{universally measurable} set if for any complete
  probability space $(S, \mathscr{F}, F)$ with
  $\mathscr{B}_S\subseteq\mathscr{F}$, $S'\in \mathscr{F}$.
\end{defsec}

\begin{defsec}[universally measurable function]\label{def:universally_measurable}
  Let $S$ be a Polish space and let $\mathscr{B}_S$ be its Borel sigma algebra,
  and $T$ be some topological space. A function $f:S\mapsto T$ is
  \emph{universally measurable} if for any Borel set $B$ of $T$, $\{s\in S:
  f(s)\in B\}$ is universally measurable.
\end{defsec}

By definition, if a function is universally measurable, then it's also measurable in the completion of any Borel
probability space. Moreover, any Borel set in a Polish space is universally measurable.
Given $D\subseteq S \times T$, define $\text{proj}_S(D)\coloneqq\{s\in S:
\exists t\in T, (s,t)\in D\}$ and $D_s\coloneqq\{t\in T: (t,s)\in D\}$. The
following lemma is a simplified version of Proposition 7.50(b) in
\cite{bertsekas_stochastic_1978}.

\begin{lemmasec}[measurable selection]\label{lem:selection}
  Let $S$ and $T$ be Polish spaces, let $D\subseteq S \times T$ be a Borel set,
  and let $f:D\to \real$ be a Borel measurable function. Define
  $f^*:\text{proj}_S(D)\to \real\cup\{-\infty\}$ by
  \begin{displaymath}
    f^*(s) = \inf_{t\in D_s} f(s,t).
  \end{displaymath}
  Suppose $f^*(s) > -\infty$ for any $s\in \text{proj}_S(D)$. Then,
the set
    \begin{displaymath}
      I \coloneqq \{s \in \text{proj}_S(D): \exists t_s\in D_s, f(s,t_s) = f^*(s)\}
    \end{displaymath}
    is universally measurable. And, for every $\epsilon > 0$, there exists a
    universally measurable function $\phi:\text{proj}_S(D)\mapsto T$ such that
    (i) $Gr(\phi)\subseteq D$; (ii) for all $s\in \text{proj}_S(D)$, $f(s,\phi(s)) \le f^*(s) + \epsilon, \ \forall s\in S$
    and, (iii) for all $s\in I$, $f(s,\phi(s)) = f^*(s)$.
\end{lemmasec}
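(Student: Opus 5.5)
The plan is to obtain the lemma directly from Proposition~7.50(b) in \cite{bertsekas_stochastic_1978}. The work is to check that our hypotheses fit theirs and that their conclusion specializes to ours. That proposition is stated for Borel spaces $X,Y$ (Borel subsets of Polish spaces), an analytic set $D\subseteq X\times Y$, and a lower semianalytic function $f:D\to\real\cup\{\pm\infty\}$. It concludes that $I=\{x\in\text{proj}_X(D): \text{the infimum } f^*(x) \text{ is attained}\}$ is universally measurable. It also gives, for every $\epsilon>0$, a universally measurable selector $\phi:\text{proj}_X(D)\to Y$ with three properties: $Gr(\phi)\subseteq D$; $f(x,\phi(x))=f^*(x)$ on $I$; and off $I$, $f(x,\phi(x))\le f^*(x)+\epsilon$ when $f^*(x)>-\infty$ and $f(x,\phi(x))\le -1/\epsilon$ when $f^*(x)=-\infty$.

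The verification proceeds in the following order.
\begin{enumerate}
\item A Polish space is trivially a Borel space, so $S$ and $T$ qualify.
\item $S\times T$ is Polish, and every Borel subset of a Polish space is analytic, being the continuous image of a Polish space. Hence $D$ is analytic.
\item A Borel measurable real-valued $f$ on the Borel set $D$ is lower semianalytic, because each sublevel set $\{(s,t)\in D: f(s,t)<c\}$ is Borel and therefore analytic.
\end{enumerate}
With these hypotheses verified, Proposition~7.50(b) applies verbatim with $X=S$ and $Y=T$.

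It remains to translate the conclusion. Since we assume $f^*(s)>-\infty$ on all of $\text{proj}_S(D)$, the branch involving $-1/\epsilon$ never occurs. The bound off $I$ is therefore $f(s,\phi(s))\le f^*(s)+\epsilon$. On $I$ we have the equality $f(s,\phi(s))=f^*(s)$, which trivially also satisfies the $\epsilon$-bound, so (ii) holds everywhere on $\text{proj}_S(D)$. Properties (i) and (iii) are then exactly the graph and attainment statements, and the universal measurability of $I$ is part of the cited conclusion.

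The main point requiring care is the measurability class of the objects. In some formulations the selector is only asserted to be analytically measurable, or universally measurable relative to $\text{proj}_S(D)$, which is analytic but possibly not Borel. I would note two facts:
\begin{itemize}
\item Analytic sets are universally measurable, and the $\sigma$-algebra generated by analytic sets is contained in the universally measurable one. Hence any analytically measurable map is universally measurable.
\item Universal measurability on the subspace $\text{proj}_S(D)$ agrees with Definition~\ref{def:universally_measurable}, read inside $S$. This is because $\text{proj}_S(D)$ is itself universally measurable in $S$, so universally measurable subsets of it are universally measurable in $S$.
\end{itemize}
With these remarks the lemma follows.
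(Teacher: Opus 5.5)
Your proposal is correct and follows the paper's proof: you check that Polish spaces are Borel spaces, that Borel sets are analytic, and that Borel functions are lower semianalytic, and then you invoke Proposition 7.50(b) of Bertsekas and Shreve. Your additional remarks are correct details the paper leaves implicit, namely that the $-1/\epsilon$ branch is vacuous under $f^*>-\infty$ and that universal measurability of $\phi$ on the analytic set $\text{proj}_S(D)$ agrees with the paper's definition.
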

\begin{proof}
  Since
  \begin{itemize}
  \item every Borel set is an analytic set,
  \item every Polish space is a
  Borel space as defined in Definition 7.7 in
  \cite{bertsekas_stochastic_1978} (page 118),
  \item every Borel measurable function is lower semianalytic function as
    defined in Definition 7.21 in 
  \cite{bertsekas_stochastic_1978} (page 177),
  \end{itemize}
  the result follows from Proposition 7.50(b) on page 184 in \cite{bertsekas_stochastic_1978}. 
\end{proof}

\section{Proof for Theorems \ref{thm:sharp_support_function} and \ref{thm:weak_sharp_support_function}}
To state the proof, I also need the following extra notation: 

\paragraph{Notation}
For any $F\in \mathcal{F}$, let $\widetilde{\Theta}(F)$ be the set of all $\theta$ which satisfies
\eqref{eq:moment_ineq}.
For any set $A$ in an Euclidean space, I use $\interior A$ to denote its interior
, $\cl A$ to denote its closure, $\co A$ to denote its convex hull and $\clco A$ to denote the closure of its convex
hull.  Given any topological space $X$, let $\mathscr{B}_{\mathcal{X}}$ denote all Borel sets on $\mathcal{X}$, and $\mathscr{P}_{\mathcal{X}}$ denote the set of all probability measures on measurable space $(\mathcal{X}, \mathscr{B}_{\mathcal{X}})$. Recall that $\mathcal{U}$ and $\mathcal{Z}$ denote the space of $U$ and $Z$ respectively. 
For any $F\in \mathcal{F}$, let the probability space $(\mathcal{Z}, \mathscr{Z}, F)$ be the completion of
$(\mathcal{Z}, \mathscr{B}_{\mathcal{Z}}, F)$. Moreover, recall $\Gamma(z;\theta)\equiv \{u\in \mathcal{U}: (u,z)\in \Gamma(\theta)\}$, and $\mathcal{R}(z;\theta)$ is the image of $\Gamma(z;\theta)$ by $r$,
i.e. $\mathcal{R}(z;\theta) \equiv \{r(u,z;\theta): u\in \Gamma(z;\theta)\}$.
Note that \eqref{eq:moment_ineq} can be rewritten as
\begin{displaymath}
\forall \lambda\in \mathcal{S}, \ \expt_F\left[\sup_{t \in \mathcal{R}(Z;\theta)} \lambda't \right] \ge 0.
\end{displaymath}

In the following, I first prove Lemma \ref{lem:suff_random_set} which establishes some useful properties for
$\mathcal{R}(z;\theta)$ as a random set. Then, I would prove Theorem \ref{thm:weak_sharp_support_function}. After that, I would also prove Theorem \ref{thm:sharp_support_function} for completeness.

\subsection{Property of $\mathcal{R}(z;\theta)$}
In the following, I use Assumption \ref{assu:reg}\ref{enu:CC1} and \ref{assu:reg}\ref{enu:CC2} to denote the first and
the second condition in Assumption \ref{assu:reg} respectively. Similarly, I use Assumption 
\ref{assu:compact} \ref{enu:CC3} and Assumption \ref{assu:compact} \ref{enu:CC4} to denote the first and the second
condition in Assumption \ref{assu:compact}. The following lemma provides some basic results needed for the proof of all
theorems.

\begin{lemmasec}\label{lem:suff_random_set}
  Let $F$ be an arbitrary element in $\mathcal{F}$. 
  \begin{enumerate} 
    \item Suppose Assumption \ref{assu:reg}\ref{enu:CC1}
      holds. Then, for each $\theta\in \Theta$, $\cl \mathcal{R}(\cdot;\theta)$ is a random closed set in probability space
      $(\mathcal{Z},\mathscr{Z}, F)$.
\item Suppose Assumption \ref{assu:reg} hold. Then, for each $\theta\in \Theta$, $\cl \mathcal{R}(\cdot;\theta)$ 
  is an integrable random closed set in probability space $(\mathcal{Z},\mathscr{Z}, F)$.
\item Suppose Assumption \ref{assu:reg}\ref{enu:CC1} and Assumption \ref{assu:compact}\ref{enu:CC4}
   hold. Then, for each $\theta\in \Theta$, random closed set $\cl\mathcal{R}(\cdot;\theta)$ is integrably
    bounded in probability space $(\mathcal{Z},\mathscr{Z}, F)$.
  \end{enumerate}
\end{lemmasec}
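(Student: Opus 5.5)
We prove the three parts in order. Fix $\theta$ and $F$, and write $D\equiv\Gamma(\theta)$ viewed as a subset of $\mathcal{Z}\times\mathcal{U}$ (after swapping coordinates). By Assumption~\ref{assu:reg}\ref{enu:CC1}, $D$ is Borel, every section $\Gamma(z;\theta)$ is nonempty, and $r$ is Borel on $D$.

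\textbf{Part (i): $\cl\mathcal{R}(\cdot;\theta)$ is a random closed set.} Closedness of realizations holds by construction, so only measurability needs an argument. By Definition~\ref{def:random_set}, we must show that $\{z:\cl\mathcal{R}(z;\theta)\cap K\neq\emptyset\}\in\mathscr{Z}$ for every compact $K$. We reduce this to distance functions. For each $t\in\real^d$, consider
\[
\rho_t(z)\equiv\inf_{u\in\Gamma(z;\theta)}\norm{r(u,z;\theta)-t}=\operatorname{dist}\big(t,\cl\mathcal{R}(z;\theta)\big).
\]
This is an infimum of the Borel function $f(z,u)=\norm{r(u,z;\theta)-t}\ge 0$ over the sections of the Borel set $D$, so $f^*>-\infty$. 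Lemma~\ref{lem:selection} then says it is universally measurable, and hence $\mathscr{Z}$-measurable; the lemma gives measurability of the $\epsilon$-optimal selector, and measurability of $f^*$ itself follows from the cited Proposition~7.47 in \cite{bertsekas_stochastic_1978}, or as the limit of $f(z,\phi_k(z))$ with $\epsilon=1/k$. Thus $z\mapsto\operatorname{dist}(t,\cl\mathcal{R}(z;\theta))$ is measurable for every $t$. In a complete probability space this is equivalent to Effros/Fell measurability of a closed-valued map into $\real^d$ (Molchanov, Theorem~1.2.3 / Himmelberg). Equivalently, and more directly: for a compact $K$, take a countable dense set $\{t_k\}\subset K$; then $\cl\mathcal{R}(z)\cap K\neq\emptyset$ iff for every $m$ there is $k$ with $\operatorname{dist}(t_k,\cl\mathcal{R}(z))<1/m+\operatorname{dist}(t_k,K^c\text{-free})$. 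The clean version is: $\cl\mathcal{R}(z)$ hits $K$ iff $\inf_k[\operatorname{dist}(t_k,\cl\mathcal{R}(z))+\operatorname{dist}(t_k,K)]$, taken over a countable dense subset of $\real^d$, is $0$. This uses compactness of $K$ and closedness of $\cl\mathcal{R}(z)$, and exhibits the event as a countable operation on measurable functions.

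\textbf{Part (ii): integrability.} By Lemma~\ref{lem:random_set_all}\ref{enu:wa1}, we need $\inf\{\norm{t}:t\in\cl\mathcal{R}(z;\theta)\}$ to be integrable. This quantity equals $\inf\{\norm t: t\in\mathcal{R}(z;\theta)\}=\rho_0(z)$, which is measurable by part (i), nonnegative, and bounded above $F$-a.s.\ by $g(z;\theta,F)$ by Assumption~\ref{assu:reg}\ref{enu:CC2}. Hence it is integrable.

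\textbf{Part (iii): integrable boundedness.} We have $\norm{\cl\mathcal{R}(z;\theta)}=\sup\{\norm t:t\in\mathcal{R}(z;\theta)\}\le g(z;\theta)$ a.s. The remaining point is measurability of this supremum, which is $-\inf$ of the Borel function $-\norm{r}$ over $D_z$. If the infimum is $-\infty$ the bound fails, but that can only happen on an $F$-null set, and the lemma can be applied to $\max\{-\norm r,-M\}$ with $M\to\infty$. Alternatively, write the supremum as $\sup_k\norm{t_k}\indicator\{\operatorname{dist}(t_k,\cl\mathcal{R})<1/k'\}$-type limits using part (i). Integrability then follows by domination.

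The main obstacle is part (i): passing from measurability of infima to the hitting-set definition, and handling the fact that we only obtain universal (completion) measurability. That last point is precisely why the proof works on the completed space $(\mathcal{Z},\mathscr{Z},F)$.
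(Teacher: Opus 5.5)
Your proof is correct. Parts (ii) and (iii) follow the paper. In (iii) you are actually more careful: the paper asserts integrable boundedness ``by definition,'' tacitly relying on part (i) for measurability of $\norm{\cl\mathcal{R}(z;\theta)}$. Your truncation argument with $\min\{\norm{r},M\}$ verifies that measurability directly.

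The genuine difference is in part (i).

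\textbf{The paper's route.} It applies Lemma~\ref{lem:selection} to obtain universally measurable near-minimizers $f_{i,n}(z)\in\Gamma(z;\theta)$, one for each point $t_i$ of a countable dense set and each tolerance $1/n$. It then shows that $\{r(f_{i,n}(z),z;\theta)\}_{i,n}$ is a Castaing representation of $\cl\mathcal{R}(z;\theta)$. Finally, it invokes the fundamental measurability theorem (Molchanov, Theorem~2.3).

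\textbf{Your route.} You work with the optimal values rather than the optimizers. You show that $z\mapsto\operatorname{dist}(t,\cl\mathcal{R}(z;\theta))$ is measurable for each $t$. You need the limit $f(z,\phi_k(z))\to f^*(z)$ here, because Lemma~\ref{lem:selection} as stated does not itself assert measurability of $f^*$. You then verify the hitting-set definition directly through the identity
$\{z:\cl\mathcal{R}(z;\theta)\cap K\neq\emptyset\}=\{z:\inf_k[\operatorname{dist}(t_k,\cl\mathcal{R}(z;\theta))+\operatorname{dist}(t_k,K)]=0\}$.
This identity is valid: the infimum over all $t$ equals the set distance $\operatorname{dist}(\cl\mathcal{R}(z;\theta),K)$, which vanishes exactly when the closed set meets the compact set $K$. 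Restricting to a dense sequence changes nothing because both distance functions are continuous.

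\textbf{Comparison.} Your route is more elementary and avoids citing the fundamental measurability theorem altogether. The paper's route produces an explicit countable family of measurable selections. That near-optimal-selector device is the same one it reuses in Lemmas~\ref{lem:suff_quasi_id} and \ref{lem:suff_id_interior}.

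One editorial point: the half-sentence involving ``$K^c$-free'' is garbled and should simply be deleted. The ``clean version'' that follows it is the actual argument.
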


  \begin{proof}[Proof of Lemma \ref{lem:suff_random_set}]

    (\emph{i}) I first show $\cl\mathcal{R}(\cdot;\theta)$ is a random closed set under Assumption \ref{assu:reg}\ref{enu:CC1}.

  Let $D = \{t_1,t_2,...\}$ be a countable set dense in $\real^{\dim(r)}$. For each
  $t_i\in D$, consider the following optimization problem ,
  \begin{displaymath}
      \inf_{u\in \Gamma(z;\theta)} \norm{t_i - r(u,z;\theta)}
  \end{displaymath}
  Given that $\norm{t_i - r(u,z;\theta)}$ is a Borel measurable function of
  $(u,z)$, that $\Gamma(\theta)$ is a
  Borel set, and that $\Gamma(z;\theta)$ is nonempty almost surely, Lemma
  \ref{lem:selection} implies that, for any $n\in \nature$, there exists a
  universally measurable function $f_{i,n}:\mathcal{Z}\mapsto \mathcal{U}$ such
  that for any $z\in Z$, $f_{i,n}(z)\in \Gamma(z;\theta)$ and
  \begin{displaymath}
    \norm{t_i-r(f_{i,n}(z),z;\theta)} \le \frac{1}{n} +  
    \inf_{u\in \Gamma(z;\theta)} \norm{t_i - r(u,z;\theta)}.
  \end{displaymath}
  See Definition \ref{def:universally_measurable} for the definition of
  a universal measurable function. Since $(\mathcal{Z},\mathscr{Z}, F)$ 
  is the completion of the Borel probability space 
  $(\mathcal{Z},\mathscr{B}_Z, F)$, by the definition of
  universally measurable functions, $f_{i,n}(z)$ is also
  $\mathscr{Z}$-measurable.
    
  Fix an arbitrary $z$. Since, by construction, $f_{i,n}(z)\in \Gamma(z;\theta)$, we
  know $\cl\{r\big(f_{i,n}(z), z\big): i,n\in \nature\} \subseteq
  \cl\mathcal{R}(z;\theta)$. On the other hand, for any $t\in
  \cl\mathcal{R}(z;\theta)$ and any $\epsilon>0$, there must exists some $t_i\in D$
  such that $\norm{t - t_i}\le \epsilon/3$, and there must exists some $n\in
  \nature$ such that $\norm{t_i - r(f_{i,n}(z),z;\theta)} \le 2\epsilon/ 3$.
  Hence, for any $t\in \cl\mathcal{R}(z;\theta)$ and any $\epsilon>0$, there exists
  some $\tilde{t}\in \{r\big(f_{i,n}(z), z\big): i,n\in \nature\}$ such that
  $\norm{t-\tilde{t}}\le \epsilon$. Hence, $\cl\mathcal{R}(z;\theta) =
  \cl\{r\big(f_{i,n}(z),z\big): i,n\in \nature\}$. By Theorem 2.3 on page 26 of
  \cite{molchanov_theory_2005}, $\cl\mathcal{R}(z;\theta)$ is a random closed set
  in $(\mathcal{Z}, \mathscr{Z}, F)$.

  (\emph{ii}) Suppose, in addition, Assumption \ref{assu:reg}\ref{enu:CC2} holds. The fact that
  $\cl\mathcal{R}(z;\theta)$ is a random closed set implies $z\mapsto
  \inf\{\norm{t}: t\in \cl\mathcal{R}(z;\theta)\}$ is measurable in $(\mathcal{Z},\mathscr{Z})$
  (See result \ref{enu:wa3} in Lemma \ref{lem:random_set_all}). Moreover, note that
  \begin{displaymath}
   \inf\{\norm{t}: t\in \mathcal{R}(z;\theta)\} = \inf\{\norm{t}: t\in \cl\mathcal{R}(z;\theta)\}.
 \end{displaymath}
 Assumption \ref{assu:reg}\ref{enu:CC2} then implies $z\mapsto \inf\{\norm{t}: t\in
 \cl\mathcal{R}(z;\theta)\}$ is an integrable function. By Definition
 \ref{def:integrable_bound} and Lemma
 \ref{lem:random_set_all}\ref{enu:wa1}, $\cl\mathcal{R}(\cdot;\theta)$ is
 integrable.

 (\emph{iii}) Finally, given the first result in this lemma, Assumption \ref{assu:compact}\ref{enu:CC4} directly implies
 $\cl\mathcal{R}(\cdot;\theta)$ is integrably bounded by definition. 
\end{proof}

\subsection{Proof for Theorem \ref{thm:weak_sharp_support_function}}\label{sec:proof_general_id}
The proof for Theorem \ref{thm:weak_sharp_support_function} builds on the following two lemmas which I would prove at the end of this subsection.

\begin{lemmasec}\label{lem:convex_lemma}
  Suppose set $A$ is a nonempty closed convex set in $\real^d$. Then $0\in A$ 
  if and only if
  \begin{equation}\label{eq:sep_hyperplane_theorem}
    \inf_{\lambda\in \real^d} \sup\{\lambda't: t\in A\} \ge 0.
  \end{equation}
Note that \eqref{eq:sep_hyperplane_theorem} includes the case that $ \inf_{\lambda\in \real^d} \sup\{\lambda't: t\in A\} =+\infty$ which could happen when $A = \real^d$ .
\end{lemmasec}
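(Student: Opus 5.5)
The statement is Lemma~\ref{lem:convex_lemma}: for a nonempty closed convex set $A\subseteq\real^d$, $0\in A$ if and only if $\inf_{\lambda\in\real^d}\sup\{\lambda^\top t: t\in A\}\ge 0$. I would prove the two directions separately, with the reverse direction resting on the strict separating hyperplane theorem for a point and a closed convex set.

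For the forward direction, suppose $0\in A$. Then for every $\lambda\in\real^d$,
\[
\sup\{\lambda^\top t: t\in A\}\ \ge\ \lambda^\top 0 = 0,
\]
so the infimum over $\lambda$ is at least $0$. This is the right convention, because $\lambda=0$ is allowed and gives a support function value of exactly $0$. The inequality \eqref{eq:sep_hyperplane_theorem} is therefore just the statement that the support function is nonnegative in every direction. The value $+\infty$, as when $A=\real^d$, is covered automatically in the extended reals.

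For the reverse direction, I argue by contraposition. Suppose $0\notin A$. Since $A$ is nonempty, closed and convex, there is a unique nearest point $a^*\in A$ minimizing $\norm{t}$ over $A$, and $a^*\neq 0$. The variational inequality for projections onto closed convex sets gives
\[
(a^*)^\top(t-a^*)\ \ge\ 0 \quad\text{for all } t\in A,
\]
that is, $(a^*)^\top t\ge \norm{a^*}^2>0$ on $A$.

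Now set $\lambda=-a^*$. Then
\[
\sup\{\lambda^\top t: t\in A\}\ \le\ -\norm{a^*}^2\ <\ 0,
\]
so the infimum in \eqref{eq:sep_hyperplane_theorem} is strictly negative. This contradicts the hypothesis, so $0\in A$.

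The only substantive step is the existence of the projection $a^*$ together with its variational characterization. Existence uses closedness and nonemptiness: minimize $\norm{t}$ over the compact set $A\cap\{t:\norm{t}\le\norm{t_0}\}$ for any fixed $t_0\in A$. The variational inequality uses convexity: expand $\norm{a^*+s(t-a^*)}^2\ge\norm{a^*}^2$ for $s\in(0,1]$ and let $s\downarrow 0$.

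Since the lemma allows $\lambda$ to range over all of $\real^d$ and not just the unit sphere $\mathcal{S}$, no normalization is needed. I will note, though, that positive homogeneity of the support function makes the sign of the infimum the same over $\real^d\setminus\{0\}$ and over $\mathcal{S}$. That observation is what links the lemma to \eqref{eq:moment_ineq} in the proof of Theorem~\ref{thm:weak_sharp_support_function}.
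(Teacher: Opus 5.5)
Your proof is correct and follows essentially the paper's route: the paper simply cites the classical support-function characterization of closed convex sets (Theorem 2.2.2 in Hiriart-Urruty and Lemar\'echal), and its proof is exactly the projection and strict-separation argument you write out. One small aside: because $\lambda=0$ is admissible, the infimum in \eqref{eq:sep_hyperplane_theorem} is always at most $0$, so the $+\infty$ case you mention for $A=\real^d$ cannot occur. That case only arises when $\lambda$ is restricted to the unit sphere.
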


\begin{lemmasec}\label{lem:suff_quasi_id}
  Suppose Assumption \ref{assu:reg} hold. Then, for any $F\in \mathcal{F}$,  $0 \in \clco \expt_F 
  \cl\mathcal{R}(Z;\theta)$ implies $\theta \in \overline{\Theta}_I(F)$.
\end{lemmasec}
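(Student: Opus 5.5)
The plan is to reduce the claim to an approximation statement about integrable selections of the random closed set $\cl\mathcal{R}(\cdot;\theta)$, and then lift the selections back to joint distributions $H\in\mathcal{H}(\theta,F)$ via measurable selection.

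\emph{Step 1 (remove the convex hull).} By Lemma~\ref{lem:suff_random_set}(ii), $\cl\mathcal{R}(\cdot;\theta)$ is an integrable random closed set on the complete space $(\mathcal{Z},\mathscr{Z},F)$. The underlying measure is $F$, which may have atoms, so the convexity of the Aumann integral cannot be taken for granted. Instead, Lemma~\ref{lem:random_set_all}\ref{enu:wa4} gives
\[
\clco\expt_F\cl\mathcal{R}(Z;\theta)=\expt_F\clco\cl\mathcal{R}(Z;\theta).
\]
Since $0$ belongs to this set, for every $\epsilon>0$ there is an integrable selection $f$ of $\clco\mathcal{R}(Z;\theta)$ with $\norm{\expt_F f}\le\epsilon/3$. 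Each value $f(z)$ is a limit of finite convex combinations of points of $\mathcal{R}(z;\theta)$, and Carath\'eodory bounds the number of points by $d_r+1$.

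\emph{Step 2 (approximate $f$ by measurable mixtures).} I would truncate first. Choose a set $A$ on which $\norm{f}$ is bounded and such that the contribution of $A^c$ to the expectation is small. On $A^c$, replace $f$ by the near-minimal-norm selection $r(u^0(z),z;\theta)$ supplied by Lemma~\ref{lem:selection} together with Assumption~\ref{assu:reg}\ref{enu:CC2}; this piece is integrable. On $A$, I would use measurable Carath\'eodory selection to write $f(z)$, up to error $\epsilon/3$, as $\sum_{k}w_k(z)\,r(u_k(z),z;\theta)$, where the weights $w_k$ are measurable, $u_k(z)\in\Gamma(z;\theta)$, and each $u_k$ is universally measurable. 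The way to obtain these is to apply Lemma~\ref{lem:selection} on the Polish space $\mathcal{U}^{d_r+1}\times\Delta$ to the Borel objective
\[
\Bigl\|f(z)-\sum_k w_k\, r(u_k,z;\theta)\Bigr\|
\]
over the Borel constraint set $\{(z,u_1,\dots,u_{d_r+1},w):(u_k,z)\in\Gamma(\theta)\}$. The integrand is not bounded, so I also need the error to be integrable and not merely pointwise small. To get this, I would allow a $z$-dependent tolerance $\epsilon\,\delta(z)$ with $\int\delta\,dF\le1$. I would further require each component $w_k r(u_k,z;\theta)$ to have integrable norm, which I plan to secure by adding a penalty term to the objective or by intersecting with a set on which $\norm{r}$ is bounded.

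\emph{Step 3 (construct $H$).} Let $H$ draw $Z\sim F$, then draw an index $K=k$ with probability $w_k(Z)$, and set $U=u_k(Z)$. By construction $H_Z=F$ and $(U,Z)\in\Gamma(\theta)$ almost surely, so $H\in\mathcal{H}(\theta,F)$. Moreover,
\[
\expt_H r=\expt_F\sum_k w_k(Z)\,r(u_k(Z),Z;\theta),
\]
and this lies within $\epsilon$ of $\expt_F f$, so $\norm{\expt_H r}\le\epsilon$. Since $\epsilon$ was arbitrary, $\theta\in\overline{\Theta}_I(F)$. The universal measurability of the $u_k$ is harmless because $H$ can be defined as a measure on the completed space and then restricted to the Borel sets.

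\emph{Main obstacle.} I expect Step 2 to be the hardest part: making the approximation error integrable while $\mathcal{R}$ is unbounded, and handling the interaction between the closure and the convex hull in a measurable way. If the penalty approach fails, the fallback is to approximate $f$ in $L^1$ by simple functions taking countably many values on a measurable partition, and to pick approximating points cell by cell.
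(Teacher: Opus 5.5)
Your Step 2 has a genuine gap. There you require two things at once. First, a pointwise approximation $\norm{f(z)-\sum_k w_k(z)\, r(u_k(z),z;\theta)}\le\epsilon/3$ on a set $A$ where $\norm{f}$ is bounded. Second, integrability of $\sum_k w_k(Z)\norm{r(u_k(Z),Z;\theta)}$, which is exactly what you need for $\expt_H r(U,Z;\theta)$ to exist.

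These two requirements can be incompatible. The cost of writing a point of $\clco\mathcal{R}(z;\theta)$ as a mixture of points of $\mathcal{R}(z;\theta)$ is not controlled by $\norm{f(z)}$ or by $\inf\{\norm{t}:t\in\mathcal{R}(z;\theta)\}$. Take three latent values with $\mathcal{R}(z;\theta)=\{(0,1),(N(z),-1),(-N(z),-1)\}$ and $\expt_F N(Z)=\infty$:
\begin{itemize}
\item Assumption~\ref{assu:reg} holds, since the minimal norm is $1$.
\item $f\equiv 0$ is an admissible choice in Step 1, and $A$ is the whole space.
\item Any mixture within $\eta<1$ of the origin must put weight at least $(1-\eta)/2$ on the two far points, so its cost is at least $(1-\eta)N(z)/2$, which is not integrable.
\end{itemize}
Your truncation on $\norm{f}$ does not touch this problem. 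The tolerance $\epsilon\delta(z)$ targets the wrong quantity, since a pointwise error bounded by a constant is already integrable. The penalty idea, or intersecting with $\{\norm{t}\le M\}$, can be made to work, but only after you give up uniform pointwise control. Concretely, you would:
\begin{itemize}
\item approximate $f(z)$ from $\co(\mathcal{R}(z;\theta)\cap\{\norm{t}\le M\})$;
\item fall back to a near-minimal-norm point where that set has no point of norm at most $\inf\{\norm{t}:t\in\mathcal{R}(z;\theta)\}+1$;
\item show the expected error tends to zero as $M\to\infty$ by dominated convergence, dominating by $\norm{f(z)}+\inf\{\norm{t}:t\in\mathcal{R}(z;\theta)\}+1$.
\end{itemize}
The proposal leaves this step unresolved.

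The paper avoids the issue by never moving the convex hull inside the expectation, so Lemma~\ref{lem:random_set_all}\ref{enu:wa4} is not needed here. From $0\in\clco\expt_F\cl\mathcal{R}(Z;\theta)$ it takes $v\in\co\expt_I\cl\mathcal{R}(Z;\theta)$ with $\norm{v}\le\epsilon$. It then applies Carath\'eodory to the Aumann integral itself, writing $v=\sum_j p_j\expt f_j(Z)$ with constant weights $p_j$ and integrable selections $f_j$ of $\cl\mathcal{R}$. The mixture is realized by an index $T$ independent of $Z$, so $R=\sum_j\indicator\{T=j\}f_j(Z)$ is an integrable selection of $\cl\mathcal{R}$. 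This external randomization is the paper's answer to your concern about atoms of $F$.

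The only remaining approximation is from $\cl\mathcal{R}$ to $\mathcal{R}$, done via Lemma~\ref{lem:selection} with a uniform tolerance. Then $U=g(Z,R)$ satisfies $\norm{r(U,Z;\theta)-R}\le\epsilon$ almost surely. Integrability of $r(U,Z;\theta)$ is inherited from $R$ for free, and $\norm{\expt_H r(U,Z;\theta)}\le2\epsilon$.
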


\begin{proof}[Proof of Theorem \ref{thm:weak_sharp_support_function}]
  Fix an arbitrary element $F$ in $\mathcal{F}$. In the following proof, I will abbreviate $\expt_F$ as $\expt$,
  $\overline{\Theta}_I(F)$ as $\overline{\Theta}_I$, and $\widetilde{\Theta}(F)$ as $\widetilde{\Theta}$.  Recall that
  $\expt_{I}$ stands for the Aumann integral. 

  First of all, I'm going to show $\widetilde{\Theta}(F)\subseteq \overline{\Theta}_I(F)$.  Lemma \ref{lem:suff_random_set}
  implies that $\cl\mathcal{R}(\cdot;\theta)$ is an integrable random closed set in $(\mathcal{Z}, \mathscr{Z}, F)$.
  Suppose, for the purpose of contradiction, there exists $\theta\in \widetilde{\Theta}$ such that $\theta\notin
  \overline{\Theta}_I$. Then, by Lemma \ref{lem:suff_quasi_id}, $0\notin \clco \expt \cl\mathcal{R}(Z;\theta)$. Lemma
  \ref{lem:convex_lemma} then implies  that the following inequality holds:
  \begin{displaymath}
    \inf_{\lambda\in \real^{\dim(r)}} \sup\{\lambda't: t\in \clco \expt
    \cl\mathcal{R}(Z;\theta)\} < 0
  \end{displaymath}

  By Lemma \ref{lem:random_set_all}\ref{enu:wa4}, and the fact that
  $\clco\mathcal{R}(Z;\theta)\subseteq\clco\cl\mathcal{R}(Z;\theta)$, and that
  the Aumann integral 
  $\expt_I\clco\mathcal{R}(Z;\theta)\subseteq \expt\clco\mathcal{R}(Z;\theta)$, we
  know
  \begin{equation}
    \label{eq:temp_1036}
    \inf_{\lambda\in \real^{\dim(r)}} \sup\{\lambda't: t\in \expt_I
    \clco\mathcal{R}(Z;\theta)\} < 0
  \end{equation}
  Choose any $\tilde{\lambda}$ such that $\sup\{\tilde{\lambda}'t: t\in \expt_I
  \clco\mathcal{R}(Z;\theta)\} < 0$. Note that
  \begin{equation}
    \label{eq:amlfewki}
    \sup\{\tilde{\lambda}'t: t\in \expt_I
  \clco\mathcal{R}(Z;\theta)\} = -\inf_{f\in S^1(\clco\mathcal{R}(Z;\theta))} \expt [-\tilde{\lambda}'f]
  \end{equation}
  where $S^1$ is defined in Definition \ref{def:int_sel_awofei}. Apply Lemma
  \ref{lem:random_set_all}\ref{enu:wa3} with $\zeta(t) = -\lambda't$ to get
  \begin{eqnarray}
    && -\inf_{f\in S^1(\clco\mathcal{R}(Z;\theta))} \expt [-\tilde{\lambda}'f] \nonumber\\
    & = & -\expt \inf\{-\tilde{\lambda}'t: t \in \clco\mathcal{R}(Z;\theta)\} \nonumber\\
    & = & \expt \sup\{\tilde{\lambda}'t: t \in \clco\mathcal{R}(Z;\theta)\}. \label{eq:mlkq0w9e}
  \end{eqnarray}
  Equation \eqref{eq:amlfewki} and \eqref{eq:mlkq0w9e} imply
  \begin{equation}
    \label{eq:aoifnewwaefij}
    \expt \sup\{\tilde{\lambda}'t: t \in \clco\mathcal{R}(Z;\theta)\} =
    \sup\{\tilde{\lambda}'t: t\in \expt_I
    \clco\mathcal{R}(Z;\theta)\}  < 0.
  \end{equation}
  In addition, since $\mathcal{R}(z;\theta)\subseteq \real^{\dim(r)}$, 
  \begin{equation}
    \label{eq:temp_1049}
    \sup\{\tilde{\lambda}'t: t\in \clco \mathcal{R}(z;\theta)\} = \sup\{\tilde{\lambda}'t: t\in \mathcal{R}(z;\theta)\},
  \end{equation}
 equation \eqref{eq:aoifnewwaefij} and
  \eqref{eq:temp_1049} imply
  \begin{displaymath}
    \inf_{\lambda\in \real^{\dim(r)}} \expt \sup\{\lambda't: t\in \mathcal{R}(Z;\theta)
    \} < 0.
  \end{displaymath}
  This contradicts $\theta\in \widetilde{\Theta}$. This proves
  $\widetilde{\Theta}\subseteq \overline{\Theta}_I$.

  To show $\overline{\Theta}_I\subseteq \widetilde{\Theta}$. Fix any $\theta\in \overline{\Theta}_I$
  and any $\epsilon>0$, there exists a distribution $H$ of $(U,Z)$ such that (i)
  $\norm{\expt r(U,Z;\theta)} \le \epsilon$; (ii) $\prob_H(U\in \Gamma(Z;\theta)) =
  1$; (iii) the marginal distribution of $H$ on $Z$ equals to $F$. For any 
  $\lambda\in \mathcal{S}$,
  \begin{eqnarray*}
    -\epsilon & \le& \expt_H(\lambda' r(U,Z;\theta)) \\
      & \le & \expt_H\left\{\sup_{u\in \Gamma(Z;\theta)}\lambda'r(u,Z;\theta)\right\} \\
      & =  & \expt\left\{\sup_{u\in \Gamma(Z;\theta)}\lambda'r(u,Z;\theta)\right\} 
  \end{eqnarray*}
  where the first inequality comes from Cauchy-Schwarz inequality, the second
  inequality comes form $\prob_H(U\in \Gamma(Z;\theta)) = 1$, and the last
  equality follows from the fact that $\sup\{\lambda'r(u,z;\theta): 
  u\in \Gamma(z;\theta) \}$ only depends on $z$. Hence,
  \begin{displaymath}
    -\epsilon \le \inf_{\lambda\in \mathcal{S}}\expt\left[\sup_{u\in \Gamma(Z;\theta)}\lambda'r(u,Z;\theta) \right].
  \end{displaymath}
  Since these holds with any $\epsilon > 0$, 
    \begin{displaymath}
      0 \le \inf_{\lambda\in\mathcal{S}}\expt\left[\sup_{u\in \Gamma(Z;\theta)}\lambda'r(u,Z;\theta) \right],
  \end{displaymath}
  which implies  $\theta\in \widetilde{\Theta}$.
  \end{proof}

\begin{proof}[Proof for Lemma \ref{lem:convex_lemma}]
  This is a classic result of the support function. See, for example, Theorem 2.2.2 in
  \cite{hiriart-urruty_fundamentals_2001} for its proof.
\end{proof}

\begin{proof}[Proof for Lemma \ref{lem:suff_quasi_id}]
  Fix an arbitrary $F\in \mathcal{F}$. In the following proof, I will abbreviate $\expt_F$ as $\expt$, and
  $\overline{\Theta}_I(F)$ as $\overline{\Theta}_I$.  Under Assumption \ref{assu:reg}, $\cl\mathcal{R}(Z;\theta)$ is an integrable random
  closed set in $(\mathcal{Z}, \mathscr{Z}, F)$.  Suppose $0 \in \clco \expt \cl\mathcal{R}(Z;\theta)$ is true, I want to
  prove that $\theta\in \overline{\Theta}_I$. 
  
  Fix an arbitrary $\epsilon>0$. By the fact that $\clco A = \clco\cl A$ 
  for any subset
  $A$ in finite dimensional Euclidean space, and that $\expt \cl
  \mathcal{R}(Z;\theta) = \cl(\expt_I \cl\mathcal{R}(Z;\theta))$ by Definition
  \ref{def:integrable_bound}, $0\in \clco \expt \cl \mathcal{R}(Z;\theta)$
  must imply $0 \in \clco \expt_I \cl \mathcal{R}(Z;\theta)$. Hence, there exists 
  some $v\in \co \expt_I \cl \mathcal{R}(Z;\theta)$ such that $\norm{v} \le 
  \epsilon$.  
  By Carathéodory's
  theorem, there must exists $p_0,p_1,...,p_{\dim(r)} \in [0,1]$ and
  $v_0,...,v_{\dim(r)}\in \expt_I \cl \mathcal{R}(Z;\theta)$ such that
  $\sum_{j=0}^{\dim(r)}p_j = 1$ and $v = \sum_{j=0}^{\dim(r)}p_jv_j$.For each 
  $j=0,...,\dim(r)$, there exists $f_j\in S^1(\cl\mathcal{R}(Z;\theta))$ such that $v_j
  = \expt f_j(Z)$. Hence, 
  \begin{displaymath}
    \norm{\sum_{j=0}^{\dim(r)} p_j \expt f_j(Z) } \le \epsilon.
  \end{displaymath}
  By the definition of $S^1(\cl\mathcal{R}(Z;\theta))$,
  each $f_j$ is measurable and integrable in $(\mathcal{Z},\mathscr{Z}, F)$.

  Let $T$ be a random variable independent of $Z$, which is supported on $\{0,
  1, ..., \dim(r)\}$ and is distributed as the following,
  \begin{displaymath}
    \prob(T = j) = p_j,\ \forall j \in \{0, 1, ..., \dim(r)\}.
  \end{displaymath}
  Construct random variable $R\in \real^{\dim(r)}$ from $T$ and $Z$ as
  \begin{displaymath}
    R = \sum_{j=0}^{\dim(r)}\indicator\{T = j\}f_j(Z).
  \end{displaymath}

  Let $H'$ denote the joint distribution of $(Z,R)$ in measurable space
  $(\mathcal{Z}\times\real^{\dim(r)}, \mathscr{B}_{\mathcal{Z}\times\real^{\dim(r)}})$. 
  By construction, $H'$'s marginal distribution for $Z$ equals $F$, and
  \begin{displaymath}
    \prob_{H'}(R\in \cl\mathcal{R}(Z;\theta)) = 1.
  \end{displaymath}
  Also,
  \begin{displaymath}
   \norm{\expt_{H'} R} = \norm{\int\expt_{H'}[R|Z=z] \ud F_Z}
     = \norm{\expt
    \sum_{j=0}^{\dim(r)}p_j f_j(Z)} = \norm{\sum_{j=0}^{\dim(r)} p_j \expt f_j(Z) }\le \epsilon.
  \end{displaymath}

  Now consider $H'$ as in the completion of probability space 
  $(\mathcal{Z}\times\real^{\dim(r)},\mathscr{B}_{\mathcal{Z}\times\real^{\dim(r)}}, H')$.
  Since $\prob_{H'}(R\in \cl\mathcal{R}(Z;\theta)) = 1$, the definition of $\mathcal{R}(Z;\theta)$ implies
  \begin{displaymath}
     \prob_{H'}\Big(\inf_{u\in \Gamma(Z;\theta)} \norm{r(u,Z;\theta) - R} = 0\Big) = 1
  \end{displaymath}  
  Since $\{(z,u):u\in \Gamma(z;\theta)\}\times \real^{\dim(r)}$ is a Borel set, and
  that $(u,z,t)\mapsto \norm{r(u,z;\theta) - t}$ is a Borel measurable function
  in $\mathcal{U}\times\mathcal{Z}\times\real^{\dim(r)}$, Lemma \ref{lem:selection}
  in Appendix \ref{sec:selection} implies that there exists a universally
  measurable function $g:\mathcal{Z}\times \real^{\dim(r)}\mapsto\mathcal{U}$, such
  that for any $t\in \real^{\dim(r)}$ and any $z\in \mathcal{Z}$, $g(z,t) \in
  \Gamma(z;\theta)$ and
  \begin{displaymath}
      \norm{r\big(g(z,t), z\big) - t} \le \epsilon + \inf_{u\in \Gamma(z;\theta)} \norm{r(u,z;\theta) - t}.
  \end{displaymath}

  Construct random variable $U = g(Z, R)$. Let $H$ be the joint distribution of 
  $(U, Z)$ in the measurable space $(\mathcal{U\times Z},
   \mathscr{B}_{\mathcal{U\times Z}})$. Then, $\prob_H(U\in \Gamma(Z;\theta))
  = 1$ and 
  \begin{displaymath}
  \prob_H(\norm{r(U,Z;\theta) - R} \le \epsilon) = 1,
  \end{displaymath}
 so that 
  \begin{displaymath}
    \norm{\expt_H r(U,Z;\theta)} \le \epsilon + \norm{\expt_H R} \le 2\epsilon
  \end{displaymath}
  This completes the proof that $\theta\in \overline{\Theta}_I$.
\end{proof}

\subsection{Proof of Theorem \ref{thm:sharp_support_function}}\label{sec:proof_sharp_id}
The proof for Theorem \ref{thm:sharp_support_function} builds on the following lemma whose proof will be presented at the end of this subsection.

\begin{lemmasec}\label{lem:suff_sharp_id}
  Suppose Assumption \ref{assu:reg} and \ref{assu:compact} hold. 
  Then, for any $F\in \mathcal{F}$, $0 \in \clco \expt_F \cl\mathcal{R}(Z;\theta)$ implies $\theta \in \Theta_I(F)$.
\end{lemmasec}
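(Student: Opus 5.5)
The plan is to follow the proof of Lemma~\ref{lem:suff_quasi_id} but use Assumption~\ref{assu:compact} to remove the two approximation steps, namely the $\epsilon$-closeness of the convex combination to zero and the $\epsilon$-optimal measurable selection. Fix $F\in\mathcal{F}$ and $\theta$, and write $\expt$ for $\expt_F$. By Assumption~\ref{assu:compact}\ref{enu:CC3}, $\mathcal{R}(z;\theta)$ is closed for $F$-a.e.\ $z$, so $\cl\mathcal{R}(Z;\theta)=\mathcal{R}(Z;\theta)$ almost surely. By Lemma~\ref{lem:suff_random_set}(\emph{iii}), $\mathcal{R}(\cdot;\theta)$ is an integrably bounded random closed set. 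Lemma~\ref{lem:random_set_all}\ref{enu:wa2} then gives that $\expt_I\mathcal{R}(Z;\theta)$ is compact and equals $\expt\mathcal{R}(Z;\theta)$. Since the convex hull of a compact set in $\real^{d_r}$ is compact (Carathéodory), we get $\clco\expt\mathcal{R}(Z;\theta)=\co\expt_I\mathcal{R}(Z;\theta)$. Hence the hypothesis $0\in\clco\expt\cl\mathcal{R}(Z;\theta)$ yields exact weights $p_0,\dots,p_{d_r}$ and selections $f_j\in S^1(\mathcal{R}(Z;\theta))$ with $\sum_j p_j\expt f_j(Z)=0$.

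Next I build $R=\sum_j\indicator\{T=j\}f_j(Z)$ with $T$ independent of $Z$, exactly as in the proof of Lemma~\ref{lem:suff_quasi_id}. Then $\expt R=0$, $R\in\mathcal{R}(Z;\theta)$ almost surely, and the $Z$-marginal of $(Z,R)$ is $F$. What remains is to realize $R$ exactly as $r(U,Z;\theta)$ with $U\in\Gamma(Z;\theta)$.

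I apply Lemma~\ref{lem:selection} on $D=\{(z,t,u):u\in\Gamma(z;\theta)\}$ with objective $\norm{r(u,z;\theta)-t}$. This objective is bounded below by $0$, so the hypothesis $f^*>-\infty$ holds. The lemma provides a universally measurable $g$ such that $g(z,t)\in\Gamma(z;\theta)$ and, on the set $I$ where the infimum is attained, $\norm{r(g(z,t),z;\theta)-t}=f^*(z,t)$.

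The main point is to show that $(Z,R)\in I$ with $f^*(Z,R)=0$ almost surely. This holds because $R\in\mathcal{R}(Z;\theta)=\{r(u,Z;\theta):u\in\Gamma(Z;\theta)\}$ (not merely its closure, by the closedness step). So some $u$ attains distance zero, which gives $(Z,R)\in I$ and $f^*(Z,R)=0$. Since $I$ is universally measurable, this event is measurable in the completed space.

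Finally set $U=g(Z,R)$. Then $r(U,Z;\theta)=R$ almost surely, so $\prob_H[(U,Z)\in\Gamma(\theta)]=1$, $H_Z=F$, and $\expt_H r(U,Z;\theta)=\expt R=0$. Therefore $\theta\in\Theta_I(F)$.

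The delicate point is measurability: we must ensure that $H$, defined via universally measurable maps on a completion, is still a Borel law on $\mathcal{U}\times\mathcal{Z}$. I would handle this exactly as in Lemma~\ref{lem:suff_quasi_id}, by restricting the completed measure to Borel sets.
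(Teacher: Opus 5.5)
Your proposal is correct and follows essentially the same route as the paper. You use closedness and integrable boundedness to make $\expt\mathcal{R}(Z;\theta)=\expt_I\mathcal{R}(Z;\theta)$ compact, so that $0\in\co\expt_I\mathcal{R}(Z;\theta)$ holds exactly; you randomize over the Carath\'eodory selections with an independent $T$; and you invoke Lemma~\ref{lem:selection} with attainment on $I$ to realize $R=r(U,Z;\theta)$ exactly with $U=g(Z,R)$, where Assumption~\ref{assu:compact}\ref{enu:CC3} guarantees $(Z,R)\in I$ almost surely, just as in the paper's argument.
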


\begin{proof}[Proof for Theorem \ref{thm:sharp_support_function}]
  Fix an arbitrary $F$ in $\mathcal{F}$. 
  Because I have shown in Theorem \ref{thm:weak_sharp_support_function} that $\overline{\Theta}_I(F)
  = \widetilde{\Theta}(F)$, and because $\Theta_I(F)\subseteq \overline{\Theta}_I(F)$,
  I only need to prove  $\widetilde{\Theta}(F) \subseteq \Theta_I(F)$. To show 
  $\widetilde{\Theta}(F) \subseteq \Theta_I(F)$, suppose, for the purpose of
  contradiction, there exists some $\theta\in \widetilde{\Theta}(F)$ such that 
  $\theta\notin \Theta_I(F)$. Then, by Lemma \ref{lem:suff_sharp_id}, $0\notin \clco \expt_F
  \cl\mathcal{R}(Z;\theta)$. Yet, as shown in the proof of Theorem
  \ref{thm:weak_sharp_support_function}, this contradicts the fact that $\theta\in
  \widetilde{\Theta}(F)$.
\end{proof}

\begin{proof}[Proof for Lemma \ref{lem:suff_sharp_id}]
  Fix an arbitrary $F\in \mathcal{F}$. In the following proof, I will abbreviate $\expt_F$ as $\expt$, and $\Theta_I(F)$ as
  $\Theta_I$. Recall also that $\expt_{I}$ stands for the Aumann integral.

  The proof of this lemma is similar to that of Lemma \ref{lem:suff_quasi_id}.
  One only needs to notice that under Assumption \ref{assu:reg} and \ref{assu:compact},
  $0 \in \clco \expt \cl\mathcal{R}(Z;\theta)$ not only implies 
  $0 \in \clco \expt_I \cl \mathcal{R}(Z;\theta)$ but also
  implies $0\in \co \expt_I \mathcal{R}(Z;\theta)$. For clarity, I
  provide the entire proof.
  
  Suppose $0 \in \clco \expt \cl\mathcal{R}(Z;\theta)$, I want to show
  $\theta \in \Theta_I$. First of all, note that 
  $0 \in \clco \expt \cl\mathcal{R}(Z;\theta)$ is equivalent to 
  $0 \in \clco \expt \mathcal{R}(Z;\theta)$ under Assumption \ref{assu:compact}\ref{enu:CC3}. 
  Moreover, Assumption \ref{assu:compact}\ref{enu:CC4} together with Lemma
  \ref{lem:suff_random_set} also implies $\mathcal{R}(Z;\theta)$ is an integrably
  bounded random closed set. By Lemma \ref{lem:random_set_all}\ref{enu:wa2}, 
  $\expt \mathcal{R}(Z;\theta)$ is a compact set and $\expt \mathcal{R}(Z;\theta)
  = \expt_I\mathcal{R}(Z;\theta)$. Since
  $\expt\mathcal{R}(Z;\theta)\subseteq\real^{\dim(r)}$, Carathéodory's theorem implies
  $\co \expt\mathcal{R}(Z;\theta)$ is also compact. Hence, 
  $0 \in \clco \expt \cl\mathcal{R}(Z;\theta)$ implies $0\in \co
  \expt_I \mathcal{R}(Z;\theta)$.

  Given $0\in \co\expt_I \mathcal{R}(Z;\theta)$, 
  Carathéodory's theorem also implies that there must exists $p_0,p_1,...,p_{\dim(r)}
  \in [0,1]$ and $v_0,...,v_{\dim(r)}\in \expt_I  \mathcal{R}(Z;\theta)$ such that
  $\sum_{j=0}^{\dim(r)}p_j = 1$ and $ \sum_{j=0}^{\dim(r)}p_jv_j = 0 $.

  For each $j=0,...,\dim(r)$, there exists $f_j\in S^1(\mathcal{R}(Z;\theta))$ such
  that $v_j = \expt f_j(Z)$. Hence,
  \begin{displaymath}
    \sum_{j=0}^{\dim(r)} p_j \expt f_j(Z)  = 0.
  \end{displaymath}
  Recall that $(\mathcal{Z},\mathscr{Z}, F)$ denotes the completion of Borel probability space
  $(\mathcal{Z},\mathscr{B}_{\mathcal{Z}}, F)$. By the definition of 
  $S^1(\mathcal{R}(Z;\theta))$, each $f_j$ is measurable and integrable in 
  $(\mathcal{Z},\mathscr{Z}, F)$.

  The remainder of the proof is similar to that in Lemma \ref{lem:suff_quasi_id}.
  Let $T$ be a random variable independent of $Z$, which is supported on $\{0,
  1, ..., \dim(r)\}$ and is distributed as the following,
  \begin{displaymath}
    \prob(T = j) = p_j,\ \forall j \in \{0, 1, ..., \dim(r)\}.
  \end{displaymath}
  Construct random variable $R\in
  \real^{\dim(r)} $ from $T$ and $Z$ as
  \begin{displaymath}
    R = \sum_{j=0}^{\dim(r)}\indicator\{T = j\}f_j(Z)
  \end{displaymath}

  Let $H'$ denote the joint distribution of $(Z,R)$ in measurable space
  $(\mathcal{Z}\times\real^{\dim(r)}, \mathscr{B}_{\mathcal{Z}\times\real^{\dim(r)}})$. 
  By construction, $H'$'s marginal distribution for $Z$ equals $F_Z$, and
  \begin{displaymath}
    \prob_{H'}(R\in \mathcal{R}(Z;\theta)) = 1,
  \end{displaymath}
  and
  \begin{displaymath}
    \expt_{H'} R = \int \expt_{H'}[R|Z=z]\ud F_Z(z) = \expt
    \sum_{j=0}^{\dim(r)}p_jf_j(Z) = \sum_{j=0}^{\dim(r)} p_j \expt f_j(Z)  =  0.
  \end{displaymath}

  Now consider $H'$ as in the completion of probability space 
  $(\mathcal{Z}\times\real^{\dim(r)},\mathscr{B}_{\mathcal{Z}\times\real^{\dim(r)}}, H')$.
  Since $\prob_{H'}(R\in \mathcal{R}(Z;\theta)) = 1$, the definition of $\mathcal{R}(Z;\theta)$ implies
  \begin{displaymath}
    \prob_H \Big(\min_{u\in \Gamma(Z;\theta)} \norm{r(u,Z;\theta) - R} = 0 \Big ) = 1.
  \end{displaymath}
  Since $\{(z,u):u\in \Gamma(z;\theta)\}\times \real^{\dim(r)}$ is a Borel set, and
  $(u,z,t)\mapsto \norm{r(u,z;\theta) - t}$ is a Borel measurable function
  in $\mathcal{U}\times\mathcal{Z}\times\real^{\dim(r)}$, Lemma \ref{lem:selection}
  in Appendix \ref{sec:selection} implies that there exists a universally
  measurable function $g:\mathcal{Z}\times \real^{\dim(r)}\mapsto\mathcal{U}$, such
  that, for any $z\in \mathcal{Z}$ and $t\in \real^{\dim(r)}$, $g(z,t) \in
  \Gamma(z;\theta)$. In addition, for any $z\in \mathcal{Z}$ and $t\in
  \real^{\dim(r)}$ which satisfies
  \begin{displaymath}
\inf_{u\in \Gamma(Z;\theta)} \norm{r(u,z;\theta) - t} = \min_{u\in
  \Gamma(z;\theta)}\norm{r(u,z;\theta) - t},
\end{displaymath}
it must be true that
  \begin{displaymath}
    \norm{r(g(z,t), z) - t}  = \min_{u\in
  \Gamma(z;\theta)}\norm{r(u,z;\theta) - t}.
  \end{displaymath}

  Construct random variable $U = g(Z, R)$. Let $H$ be the joint distribution of 
  $(U, Z)$ in the measurable space $(\mathcal{U\times Z},
   \mathscr{B}_{\mathcal{U\times Z}})$. Then, $\prob_H(U\in \Gamma(Z;\theta))
  = 1$ and 
  \begin{displaymath}
  \prob_H(r(U,Z;\theta) = R ) = 1,
  \end{displaymath}
 so that 
  \begin{displaymath}
    \expt_H r(U,Z;\theta) = \expt_H R = 0
  \end{displaymath}
  This completes the proof that $\theta\in \Theta_I$.
\end{proof}

\section{Some Extra Identification Results and Proof for Theorem \ref{thm:irreducible_model_id}}\label{sec:distinguish_appendix}
Let me first give a definition for reducibility and irreducibility for a general moment function $r$.

\begin{definition}[Reduced model, reducibility, and irreducibility]
\label{def:reduced_model_general}
Fix a model $(\Gamma,r)$ and a parameter value $\theta\in\Theta$, and let $\Theta_I(F;\Gamma,r)$ denote its identified set.
Partition $r(\cdot, \cdot; \theta)$ as
\begin{equation*}
r(u,z;\theta) = 
\begin{pmatrix}
r_1(z;\theta)\\
r_2(u,z;\theta)
\end{pmatrix}
\end{equation*}
where $r_1$ collects the components of $r$ that does not depend on $u$. For any $\lambda\in \real^{\dim(r_2)}$, $\gamma_2(\lambda,z;\theta) \equiv \sup\{\lambda^\top r_2(u,z;\theta):\ (u,z)\in \Gamma(\theta)\}$.

A model $(\tilde{\Gamma},\tilde{r})$ is a
\emph{reduced model} of $(\Gamma,r)$ at $\theta$ if the following two conditions hold:
\begin{itemize}
    \item \emph{(Construction):} There exist $\dim(r_2)$ linearly independent vectors $\lambda_1, \ldots, \lambda_{\dim(r_2)} \in \mathbb{R}^{\dim(r_2)}$ such that
    \begin{align*}
    \tilde{\Gamma}(\theta) &\equiv \left\{ (u,z) \in \Gamma(\theta) : \lambda_1^\top r_2(u,z;\theta) = \gamma_2(\lambda_1, z; \theta) \right\}, \\
    \tilde{r}(u,z;\theta) &\equiv 
\begin{pmatrix}
r_1(z;\theta)\\
 \gamma_2(\lambda_1, z; \theta)\\
\lambda_2^\top r_2(u,z;\theta)\\
\vdots\\
\lambda_{\dim(r_2)}^\top r_2(u,z;\theta)
\end{pmatrix}
    \end{align*}
    
    \item \emph{(Identification equivalence):} For every $F \in \mathcal{F}$, $\theta \in \Theta_I(F; \Gamma, r)$ if and only if $\theta \in \Theta_I(F; \tilde{\Gamma}, \tilde{r})$.
\end{itemize}
Model $(\Gamma,r)$ is \emph{reducible} if it admits at least one reduced model at some $\theta\in\Theta$, and it is
\emph{irreducible} otherwise.
\end{definition}

This definition reduces back to Definition \ref{def:reduced_model} when all components of $r$ depend on $u$. The proof for Theorem \ref{thm:irreducible_model_id} builds on the following two results. 

\begin{theorem}\label{thm:nondistinguishable}
Suppose Assumption \ref{assu:reg} holds and $\mathcal{F}$ is a convex set.
Let $\theta$ be an arbitrary parameter in $\Theta$ and partition $r(u,z;\theta) = (r_1(z;\theta), r_2(u,z;\theta))$. Recall that  $\gamma_2(\lambda,z;\theta) \equiv \sup\{\lambda^\top r_2(u,z;\theta):\ (u,z)\in \Gamma(\theta)\}$ and $\mathcal{S}_2 = \{\lambda\in \real^{\dim(r_2)}: \norm{\lambda} = 1 \}$.

If there exists some $F^*\in \mathcal{F}$ such that 
  \begin{equation}\label{eq:strict_center}
    \expt_{F^*} [r_1(Z;\theta) ] = 0 \text{ and }    \inf_{\lambda\in \mathcal{S}_2} \expt_{F^*} \gamma_2(\lambda,Z;\theta) > 0,
\end{equation}
then it is impossible to distinguish $\theta\in \Theta_I$ from $\theta\in \overline{\Theta}_I$ in finite samples. Note that the $\inf_{\lambda\in \mathcal{S}_2} \expt_{F^*} \gamma_2(\lambda,Z;\theta) > 0$ in \eqref{eq:strict_center} includes the case that $\inf_{\lambda\in \mathcal{S}_2} \expt_{F^*} \gamma_2(\lambda,Z;\theta) = +\infty$.
\end{theorem}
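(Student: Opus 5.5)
The plan is to reduce the claim to showing that, for any fixed test $\phi_n$ and any $F\in\overline{\mathcal{F}}_\theta$,
\[
\expt_F[\phi_n]\le \sup_{G\in\mathcal{F}_\theta}\expt_G[\phi_n].
\]
The reverse inequality is automatic, since $\mathcal{F}_\theta\subseteq\overline{\mathcal{F}}_\theta$. I will do this by perturbing $F$ toward the ``strictly interior'' distribution $F^*$ of \eqref{eq:strict_center}. For $t\in(0,1]$ set $F_t\equiv(1-t)F+tF^*$; convexity of $\mathcal{F}$ gives $F_t\in\mathcal{F}$. Under the $n$-fold product $F_t^{\otimes n}$, expanding the product measure gives
\[
\expt_{F_t}[\phi_n]=\sum_{S\subseteq\{1,\dots,n\}}t^{|S|}(1-t)^{n-|S|}\int\phi_n\,\ud\Big(\textstyle\bigotimes_{i\in S}F^*\otimes\bigotimes_{i\notin S}F\Big).
\]
This is a polynomial in $t$ and tends to $\expt_F[\phi_n]$ as $t\downarrow0$. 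So it suffices to prove that $\theta\in\Theta_I(F_t)$ for every $t\in(0,1]$, i.e.\ $F_t\in\mathcal{F}_\theta$.

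First I establish strict support-function positivity at $F_t$. Since $\theta\in\overline{\Theta}_I(F)$, Theorem~\ref{thm:weak_sharp_support_function} gives \eqref{eq:moment_ineq} under $F$. Applying it with directions that load only on $r_1$ yields $\expt_F[r_1]=0$, and with directions in $\mathcal{S}_2$ it yields $\expt_F[\gamma_2(\lambda,Z;\theta)]\ge0$, as in the Remark on $u$-independent components. By linearity in the distribution, $\expt_{F_t}[r_1]=0$ and, writing $c\equiv\inf_{\mathcal{S}_2}\expt_{F^*}\gamma_2>0$,
\[
\expt_{F_t}[\gamma_2(\lambda,Z;\theta)]\ \ge\ t\,c>0 \quad\text{for all }\lambda\in\mathcal{S}_2 .
\]
Values equal to $+\infty$ cause no trouble here.

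The main step, and the one I expect to be the obstacle, is to upgrade this strict inequality to exact solvability of the moment equation. Let $\mathcal{R}_2(z)$ be the image of $\Gamma(z;\theta)$ under $r_2$. Lemma~\ref{lem:suff_random_set} applies to $\cl\mathcal{R}_2$ under $F_t$, and I let $A\equiv\clco\expt_{F_t}\cl\mathcal{R}_2(Z)$. By Lemma~\ref{lem:random_set_all}\ref{enu:wa3}--\ref{enu:wa4}, exactly as in the proof of Theorem~\ref{thm:weak_sharp_support_function}, $\sup_{a\in A}\lambda^\top a=\expt_{F_t}\gamma_2(\lambda,Z)\ge tc>0$ for every unit $\lambda$. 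If $0$ were not in $\interior A$, a supporting hyperplane would give a unit $\lambda$ with $\sup_A\lambda^\top a\le0$, which is a contradiction; hence $0\in\interior A$. Next I compare $A$ with $C\equiv\co\expt_I\mathcal{R}_2(Z)$, taken over selections of the non-closed $\mathcal{R}_2$. Any integrable selection of $\cl\mathcal{R}_2$ can be approximated pointwise within $\epsilon$ by a measurable selection of $\mathcal{R}_2$ obtained from Lemma~\ref{lem:selection}; this approximant remains integrable. It follows that $\cl C\supseteq A$. For convex $C\subseteq\real^{d}$ we have $\interior\cl C=\interior C$, so $0\in C$.

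Finally I realize $0$ exactly as a moment. By Carathéodory, $0=\sum_j p_j\expt f_j$ with $f_j\in S^1(\mathcal{R}_2)$. I then randomize with an independent index $T$ as in the proof of Lemma~\ref{lem:suff_sharp_id} to get $R$ with $R\in\mathcal{R}_2(Z)$ a.s.\ and $\expt R=0$. Because $R$ lies in the image itself, the infimum of $\|r_2(u,Z)-R\|$ over $u$ is attained with value $0$. Lemma~\ref{lem:selection}(iii) then gives a universally measurable $g$ with $r_2(g(Z,R),Z)=R$ a.s. Setting $U=g(Z,R)$ produces $H\in\mathcal{H}(\theta,F_t)$ with $\expt_H[r]=(\expt_{F_t}r_1,0)=0$, so $\theta\in\Theta_I(F_t)$. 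Letting $t\downarrow0$ in the polynomial identity completes the argument.
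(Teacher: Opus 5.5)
Your proof is correct. Its outer architecture matches the paper's: mix an arbitrary $F\in\overline{\mathcal{F}}_\theta$ with the strictly interior $F^*$, show every mixture lies in $\mathcal{F}_\theta$, and let the weight on $F^*$ vanish. Your linear bound $\expt_{F_t}\gamma_2\ge tc$ is the paper's concavity-of-$\psi$ step in another form. Your polynomial expansion of $\expt_{F_t}[\phi_n]$ supplies the continuity in the mixing weight, which the paper asserts without argument.

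The genuine difference is in the key step, that strict positivity of the support function yields exact solvability. In the paper this is Lemma~\ref{lem:bound_for_difference}(i), proved through Lemma~\ref{lem:suff_id_interior}.

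\emph{The paper's route.} It moves to $0\in\interior\co\expt_I\cl\mathcal{R}$ and uses Gustin's theorem (Lemma~\ref{lem:convex_set_interior_perturbation}) to find finitely many points $v_1,\dots,v_K$. Their convex hull contains $0$ in its interior, stably under $\epsilon$-perturbations. It then replaces each $v_k$ by $\expt r(g_k(Z),Z)$ for an approximate selection $g_k$, and randomizes over the $g_k$ directly.

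\emph{Your route.} You use approximate selections only to show that $\expt_I\mathcal{R}_2$ is dense in $\expt_I\cl\mathcal{R}_2$, so that $A\subseteq\cl C$ with $C=\co\expt_I\mathcal{R}_2$ convex. The identity $\interior\cl C=\interior C$ then places $0$ in $C$ itself. Carath\'eodory plus the exact-attainment clause (iii) of Lemma~\ref{lem:selection}, exactly as in Lemma~\ref{lem:suff_sharp_id}, pulls the image-level selections back to latent variables.

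\emph{What each buys.} You trade the perturbation-stability lemma for a standard convex-analysis fact and reuse the machinery of Theorem~\ref{thm:sharp_support_function}. The paper builds the $g_k$ at the level of $u$ from the outset, so it never needs the exact-attainment clause.
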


In the following, we say inequality \eqref{eq:strict_center} fails to hold for all $F\in \mathcal{F}$ if there does not exist an $F^*\in \mathcal{F}$ such that  \eqref{eq:strict_center} holds.

\begin{theorem}\label{lem:reducible_lemma}
Suppose Assumption \ref{assu:reg} hold and $\mathcal{F}$ is convex. 
Let $\theta$ be an arbitrary parameter in $\Theta$. Partition $r(u,z;\theta) = (r_1(z;\theta), r_2(u,z;\theta)$ and recall that  $\gamma_2(\lambda,z;\theta) \equiv \sup\{\lambda^\top r_2(u,z;\theta):\ (u,z)\in \Gamma(\theta)\}$ and $\mathcal{S}_2 = \{\lambda\in \real^{\dim(r_2)}: \norm{\lambda} = 1 \}$.

Suppose $\mathcal{F}'_\theta$ is nonempty, and that
the inequality \eqref{eq:strict_center} fails to hold for all $F\in \mathcal{F}$.  Then,
\begin{enumerate}
    \item there exists some $\tilde{\lambda}\in \mathcal{S}_2$ such that $\expt_F \gamma_2(\tilde{\lambda},Z;\theta)
      = 0$ for all $F\in \mathcal{F}'_\theta$.
    \item model $(\Gamma, r)$ is reducible at $\theta$. In particular, fix any $\lambda_2, ...,
      \lambda_{\dim(r_2)}$ such that $\tilde{\lambda}, \lambda_2, ..., \lambda_{\dim(r_2)}$ are linearly independent, and
      define reduced model $(\tilde{\Gamma}, \tilde{r})$ as
  \begin{eqnarray*}
  \tilde{\Gamma}(\theta)
  &=& \left\{(u,z)\in \Gamma(\theta): u\in  \argmax_{u \in \Gamma(z;\theta)}\tilde{\lambda}'r_2(u,z;\theta)
  \right\},\\
    \tilde{r}(u,z;\theta) & = & 
    \begin{pmatrix}
      r_1(Z;\theta)\\
\gamma_2(\tilde{\lambda}, z;\theta) \\
 \lambda_2'r_2(u,z;\theta) \\   
 \lambda_3'r_2(u,z;\theta) \\   
 \vdots \\
\lambda'_{\dim(r_2)} r_2(u,z;\theta)
    \end{pmatrix}.
  \end{eqnarray*}
   Then, for any $F\in \mathcal{F}$, $\theta\in \Theta_I(F;\Gamma, r)$
  if and only if $\theta\in \Theta_I(F; \tilde{\Gamma}, \tilde{\gamma})$. 
  \end{enumerate}
\end{theorem}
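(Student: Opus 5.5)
Throughout I read $\mathcal{F}'_\theta$ as the set of $F\in\mathcal{F}$ for which $\theta\in\overline{\Theta}_I(F)$. By Theorem~\ref{thm:weak_sharp_support_function} and the remark splitting $r$ into $(r_1,r_2)$, this is the set of $F$ with $\expt_F[r_1(Z;\theta)]=0$ and $\inf_{\lambda\in\mathcal{S}_2}\expt_F\gamma_2(\lambda,Z;\theta)\ge 0$. If $\mathcal{F}'_\theta$ is meant to be the narrower set $\{F:\theta\in\Theta_I(F)\}$, the argument below goes through unchanged, because mixtures of elements of that set stay in it.

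\emph{Part (i).} For $F\in\mathcal{F}'_\theta$ set $\varphi_F(\lambda)\equiv\expt_F\gamma_2(\lambda,Z;\theta)$ on $\mathcal{S}_2$. The plan has four steps.
\begin{enumerate}
\item \emph{$\varphi_F$ is well defined and lower semicontinuous.} By Lemma~\ref{lem:suff_random_set}, $\cl\mathcal{R}(\cdot;\theta)$ has an integrable selection $f$. Hence $\gamma_2(\lambda,z;\theta)\ge\lambda^\top f_2(z)\ge-\norm{f_2(z)}$ uniformly in $\lambda\in\mathcal{S}_2$, where $f_2$ is the $r_2$-block of $f$; this gives an integrable minorant. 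For each $z$, $\lambda\mapsto\gamma_2(\lambda,z;\theta)$ is a supremum of linear functions, hence lower semicontinuous with values in $(-\infty,+\infty]$. Fatou's lemma then makes $\varphi_F$ well defined and lower semicontinuous on the compact set $\mathcal{S}_2$.
\item \emph{The zero set is nonempty and compact.} Since $F\in\mathcal{F}'_\theta$ satisfies $\expt_F r_1=0$ and \eqref{eq:strict_center} fails for every element of $\mathcal{F}$, we have $\inf_{\mathcal{S}_2}\varphi_F=0$. Lower semicontinuity on a compact set means this infimum is attained. So $\Lambda_F\equiv\{\lambda\in\mathcal{S}_2:\varphi_F(\lambda)=0\}$ is nonempty and compact.
\item \emph{Convexity of $\mathcal{F}$ gives the finite intersection property.} For $F_1,\dots,F_k\in\mathcal{F}'_\theta$ and weights $p_i$, the mixture $\bar F=\sum_i p_iF_i$ lies in $\mathcal{F}$ by convexity. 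It lies in $\mathcal{F}'_\theta$ because $F\mapsto\expt_F r_1$ and $F\mapsto\varphi_F$ are affine and nonnegativity is preserved. Since each $\varphi_{F_i}\ge 0$ and $\varphi_{\bar F}=\sum_ip_i\varphi_{F_i}$, we get $\Lambda_{\bar F}=\bigcap_i\Lambda_{F_i}$, which is nonempty by the previous step.
\item \emph{Conclude.} The family $\{\Lambda_F\}_{F\in\mathcal{F}'_\theta}$ consists of compact sets with the finite intersection property, so it has a common point $\tilde\lambda$. This $\tilde\lambda$ satisfies (i).
\end{enumerate}

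\emph{Part (ii).} Fix $F$.
\begin{itemize}
\item \emph{Forward direction.} Suppose $H\in\mathcal{H}(\theta,F)$ has $\expt_Hr=0$. Then $F\in\mathcal{F}'_\theta$, so by (i) $\expt_F\gamma_2(\tilde\lambda,Z;\theta)=0=\expt_H\tilde\lambda^\top r_2$. Pointwise $\tilde\lambda^\top r_2(U,Z;\theta)\le\gamma_2(\tilde\lambda,Z;\theta)$, and the two sides have equal finite means, so equality holds $H$-a.s. Thus $U$ lies in the argmax a.s., i.e.\ $H$ is supported on $\tilde\Gamma(\theta)$. Moreover $\expt_H\tilde r=\bigl(\expt r_1,\ \expt_F\gamma_2(\tilde\lambda,Z;\theta),\ \lambda_j^\top\expt_Hr_2\bigr)=0$.
\item \emph{Converse direction.} Suppose $H$ is supported on $\tilde\Gamma(\theta)\subseteq\Gamma(\theta)$ with $\expt_H\tilde r=0$. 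Then $\tilde\lambda^\top\expt_Hr_2=\expt_F\gamma_2(\tilde\lambda,Z;\theta)=0$, using the argmax property. Also $\lambda_j^\top\expt_Hr_2=0$ for $j\ge 2$. Since $\tilde\lambda,\lambda_2,\dots,\lambda_{\dim(r_2)}$ are linearly independent, $\expt_Hr_2=0$, and $\expt r_1=0$ is a component of $\expt_H\tilde r=0$. Hence $\theta\in\Theta_I(F;\Gamma,r)$.
\end{itemize}
Together these give the identification equivalence. With the construction in the statement, this shows $(\tilde\Gamma,\tilde r)$ is a reduced model, so $(\Gamma,r)$ is reducible at $\theta$.

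\emph{Main obstacle.} The delicate step is in part (i): choosing a single $\tilde\lambda$ that works for all $F$. This is where convexity of $\mathcal{F}$ and attainment of the infimum (the Fatou and lower-semicontinuity argument, allowing $+\infty$ values of $\gamma_2$) are essential. Some care is also needed with measurability of $\tilde\Gamma(\theta)$, since the argmax may be empty for some $z$; the equivalence argument only uses existence of $H$, so I would handle this by working with completed measures, as in Lemma~\ref{lem:selection}.
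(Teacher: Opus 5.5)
Your proof is correct. Part (ii) is the paper's argument: your two directions are its Steps 2 and 1. Your reading of $\mathcal{F}'_\theta$ as $\overline{\mathcal{F}}_\theta$ also matches how the paper uses that symbol.

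Part (i), however, closes by a genuinely different route. Both arguments share two ingredients:
\begin{itemize}
\item attainment of $\inf_{\mathcal{S}_2}\expt_F\gamma_2(\lambda,Z;\theta)=0$ for each $F$;
\item the mixture identity $\Lambda(\delta F_1+(1-\delta)F_2)=\Lambda(F_1)\cap\Lambda(F_2)$ for $\delta\in(0,1)$.
\end{itemize}
The paper then picks $F^*$ in the algebraic relative interior of $\mathcal{F}'_\theta$ and argues that $\Lambda(F^*)=\bigcap_{F}\Lambda(F)$. You instead show that each $\Lambda_F$ is closed and apply the finite intersection property on the compact sphere $\mathcal{S}_2$.

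What the paper's route would buy, when it works, is a single witness distribution whose zero set is exactly the common one. But it presupposes $\ri\mathcal{F}'_\theta\neq\emptyset$, which can fail for convex sets of probability measures. For instance, if $\mathcal{F}'_\theta$ contains point masses at uncountably many points, a relative-interior element would need an atom at each of them. Its last step also asserts $\tfrac12F+\tfrac12F'\in\ri\mathcal{F}'_\theta$ for an arbitrary $F'\neq F$, which is true only if $F'$ itself lies in the relative interior (e.g.\ $F'=F^*$).

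Your route needs no interior point and works for any convex $\mathcal{F}$; the theorem only requires existence of $\tilde\lambda$, which is what you deliver. Its one extra ingredient is lower semicontinuity of $\lambda\mapsto\expt_F\gamma_2(\lambda,Z;\theta)$. You correctly obtain this via Fatou's lemma and the integrable-selection minorant, whereas the paper infers it from pointwise lower semicontinuity alone.

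A few small points to tidy:
\begin{itemize}
\item In your Step 3, take strictly positive weights; otherwise $\Lambda_{\bar F}=\bigcap_i\Lambda_{F_i}$ can fail.
\item Note explicitly that the hypotheses force $\dim(r_2)>0$. Otherwise the infimum over the empty $\mathcal{S}_2$ is $+\infty$, and \eqref{eq:strict_center} would hold for every $F\in\mathcal{F}'_\theta$.
\item Cite Lemma~\ref{lem:random_set_all}\ref{enu:wa3} for measurability of $z\mapsto\gamma_2(\lambda,z;\theta)$.
\end{itemize}
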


In the following, I first prove Theorem \ref{thm:irreducible_model_id} with Theorems \ref{thm:nondistinguishable} and 
\ref{lem:reducible_lemma}. After that, I would prove Theorems \ref{thm:nondistinguishable} and \ref{lem:reducible_lemma} in two seperate subsections.

\begin{proof}[Proof for Theorem \ref{thm:irreducible_model_id}]
suppose the model $(\Gamma, r)$ is irreducible at $\theta$.  Consider the following cases:
\begin{itemize}
  \item When $\mathcal{F}'_\theta$ is empty, $\mathcal{F}_\theta$ is also empty so that both
$\theta\in \Theta_I(F)$ and $\theta\in \overline{\Theta}_I(F)$ are false for any $F\in \mathcal{F}$, which implies that
$\theta\in \overline{\Theta}_I$ and $\theta\in \Theta_I$ cannot be distinguished in finite samples.
  \item When $\mathcal{F}'_\theta$ is nonempty, Theorem \ref{lem:reducible_lemma} implies that there exists some $F^*\in
    \mathcal{F}$ which satisfies \eqref{eq:strict_center} (otherwise the model would be reducible at $\theta$).  Theorem \ref{thm:nondistinguishable} then implies that
    $\theta\in \overline{\Theta}_I$ and $\theta\in \Theta_I$ cannot be distinguished in finite samples.
\end{itemize}
Since $\theta\in \overline{\Theta}_I$ and $\theta\in \Theta_I$ are indistinguishable in both cases, the proof is now
complete.
\end{proof}

\subsection{Proof for Theorem \ref{thm:nondistinguishable}}
The proof for Theorem \ref{thm:nondistinguishable} builds on the following lemma which I would prove after proving Theorem \ref{thm:nondistinguishable}.

\begin{lemma}\label{lem:bound_for_difference}
  Suppose Assumption \ref{assu:reg} holds.  Let $F$ be an arbitrary element in $\mathcal{F}$ and 
  let $\theta$ be an arbitrary parameter in $\Theta$.  Partition
  $r(u,z;\theta) = (r_1(z;\theta), r_2(u,z;\theta))$. 
  \begin{enumerate}
    \item\label{enu:lemma1_core} if $\expt_F[r_1(Z;\theta)] = 0$ and $\inf_{\lambda\in \mathcal{S}_2} \expt_F \gamma_2(\lambda, Z;\theta) > 0$, then $\theta\in \Theta_I(F)$. (Note that the $\inf_{\lambda\in \mathcal{S}_2} \expt_F \gamma_2(\lambda, Z;\theta) > 0$ includes the case that $\inf_{\lambda\in \mathcal{S}_2} \expt_F \gamma_2(\lambda, Z;\theta) =+\infty$.)
    \item\label{enu:lemma1_side} if $\theta\in \overline{\Theta}_I(F) \backslash \Theta_I(F)$, then 
  \begin{equation*}
    \expt_F[r_1(Z;\theta)] = 0\text{ and }\inf_{\lambda\in \mathcal{S}_2}\expt_F \gamma_2(\lambda, Z;\theta) = 0,
  \end{equation*}
  or equivalently, $\inf_{\lambda\in \mathcal{S}} \expt_F \gamma(\lambda,Z;\theta) = 0.$
  \end{enumerate}
\end{lemma}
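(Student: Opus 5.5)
Part (ii) is a direct consequence of part (i) together with Theorem~\ref{thm:weak_sharp_support_function}, so the plan is to prove (i) first.

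\emph{Reducing (i) to a random-set statement.} Suppose $\expt_F[r_1(Z;\theta)]=0$ and $\inf_{\lambda\in\mathcal{S}_2}\expt_F\gamma_2(\lambda,Z;\theta)>0$. Write $\mathcal{R}_2(z;\theta)=\{r_2(u,z;\theta):u\in\Gamma(z;\theta)\}$. The first component of $r$ already has mean zero under any $H\in\mathcal{H}(\theta,F)$. It therefore suffices to find $H\in\mathcal{H}(\theta,F)$ with $\expt_H r_2(U,Z;\theta)=0$. The plan is to show $0\in\interior\co\expt_I\mathcal{R}_2(Z;\theta)$ and then build $H$ by the Carath\'eodory mixture and measurable-selection construction used in the proof of Lemma~\ref{lem:suff_sharp_id}.

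\emph{Step 1: $0$ is an interior point.} Set $A\equiv\co\expt_I\mathcal{R}_2(Z;\theta)$. Lemma~\ref{lem:suff_random_set} (applied to $r_2$) makes $\cl\mathcal{R}_2$ an integrable random closed set, so $A$ is a nonempty convex set. The key identity to establish is that the support function of $A$ in direction $\lambda$ equals $\expt_F\gamma_2(\lambda,Z;\theta)$.
\begin{itemize}
\item The inequality ``$\le$'' is immediate, since $\lambda^\top r_2\le\gamma_2$ pointwise.
\item For ``$\ge$'', take $\epsilon$-optimal universally measurable selections $u_\epsilon(z)$ from Lemma~\ref{lem:selection}. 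Truncate them by mixing with a fixed integrable selection (one exists by Assumption~\ref{assu:reg}\ref{enu:CC2}) on the sets where the norm is large. Then use monotone convergence, which is also how the case $\expt_F\gamma_2=+\infty$ is handled.
\end{itemize}
A strictly positive infimum of the support function over the unit sphere means $A\supseteq\{\norm{t}<\delta\}$ for some $\delta>0$; here $\delta=\inf_\lambda\expt\gamma_2$, or any $\delta$ if that infimum is infinite. Indeed, if some $t$ with $\norm{t}<\delta$ lay outside $\cl A$, separation would give a $\lambda$ with support value at most $\lambda^\top t<\delta$, a contradiction. So $B(0,\delta)\subseteq\cl A$. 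Since $A$ is convex and finite-dimensional, $\interior\cl A=\interior A$, hence $0\in\interior A$. This interiority is exactly what replaces compactness: the closure issue that creates the gap between $\Theta_I$ and $\overline{\Theta}_I$ disappears once $0$ is interior rather than a boundary point.

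\emph{Step 2: constructing $H$.} From $0\in A$, Carath\'eodory gives $0=\sum_j p_j\expt f_j$, where $f_j$ are integrable selections of $\mathcal{R}_2(Z;\theta)$ itself, not of its closure. Each $f_j(z)$ has the form $r_2(u_j(z),z;\theta)$ for some $u_j(z)\in\Gamma(z;\theta)$. I expect this to be the main obstacle: making $u_j$ measurable. Exact attainment is needed, and Lemma~\ref{lem:selection} only provides it on the universally measurable set where the minimum of $\norm{r_2(u,z)-t}$ is attained.

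To avoid the issue, I would construct the selections directly. Lemma~\ref{lem:selection} with tolerance $\epsilon$ gives universally measurable $u(\cdot)$ whose images approximate. The resulting points $\expt r_2(u(Z),Z)$ lie in $A$, since they are exact expectations of genuine selections. Because $0$ is interior, I can pick $d+1$ such exact points whose convex hull contains $0$, namely points within $\delta/2$ of the vertices of a simplex inside $B(0,\delta)$. Mixing the corresponding $u$'s through an independent randomization $T$, as in Lemma~\ref{lem:suff_quasi_id}, yields $H\in\mathcal{H}(\theta,F)$ with $\expt_H r=0$, so $\theta\in\Theta_I(F)$.

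\emph{Step 3: part (ii).} If $\theta\in\overline{\Theta}_I\setminus\Theta_I$, then Theorem~\ref{thm:weak_sharp_support_function} gives $\inf_{\lambda\in\mathcal{S}}\expt_F\gamma\ge0$. Directions supported on the $r_1$ block give $\gamma=\lambda^\top r_1(z)$, and taking both signs forces $\expt r_1=0$. Part (i) rules out a strictly positive infimum over $\mathcal{S}_2$, so $\inf_{\mathcal{S}_2}\expt\gamma_2=0$.

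For the equivalent formulation, write $\lambda=(\lambda_1,\lambda_2)$. Then $\expt\gamma(\lambda)=\lambda_1^\top\expt r_1+\expt\gamma_2(\lambda_2)=\expt\gamma_2(\lambda_2)$. Positive homogeneity of $\gamma_2$, together with $\gamma_2(0)=0$, shows that $\inf_{\mathcal{S}}\expt\gamma=0$ exactly when $\expt r_1=0$ and $\inf_{\mathcal{S}_2}\expt\gamma_2=0$. The converse direction again uses $\inf_{\mathcal{S}}\ge0$ to recover $\expt r_1=0$.
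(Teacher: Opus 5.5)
Your proof is correct and follows the paper's route. The paper also splits off $r_1$ and identifies the support function of the Aumann integral with $\expt_F\gamma_2$; it cites Lemma~\ref{lem:random_set_all}\ref{enu:wa3} for this, where you use truncated $\epsilon$-optimal selections. It converts strict positivity into interiority (Lemma~\ref{lem:support_func_interior}), and builds $H$ from finitely many approximate selections mixed by an independent $T$ (Lemma~\ref{lem:suff_id_interior}, using Gustin's theorem rather than a simplex). The only structural difference is that the paper phrases this as a contradiction.

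Three small corrections:
\begin{enumerate}
\item Exact attainment is not an obstacle. For $t\in\mathcal{R}_2(z;\theta)$ the infimum $0$ of $\norm{r_2(u,z;\theta)-t}$ is attained, so part (iii) of Lemma~\ref{lem:selection} already returns an exact selection. This is exactly how Lemma~\ref{lem:suff_sharp_id} proceeds.
\item Single exact points $\expt r_2(u(Z),Z;\theta)$ need not lie within $\delta/2$ of prescribed simplex vertices. For example, with degenerate $Z$ and finite $\mathcal{R}_2$ they form a finite set. The fix is already in your hands: your support-function identity is proved with genuine selections, so it holds for $\co E$, where $E$ is the set of exact points. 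Hence $0\in\interior\co E$, and Carath\'eodory finishes the argument.
\item The converse half of your ``equivalent formulation'' is false when $\dim(r_1)>0$. Any $\lambda=(\lambda_1,0)\in\mathcal{S}$ gives $\expt_F\gamma(\lambda,Z;\theta)=\lambda_1^\top\expt_F r_1(Z;\theta)=0$. So $\inf_{\mathcal{S}}\expt_F\gamma=0$ for every $\theta\in\tilde{\Theta}(F)$, even when $\inf_{\mathcal{S}_2}\expt_F\gamma_2>0$. Only the forward implication holds, and that is all the lemma needs.
\end{enumerate}
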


\begin{proof}[Proof for Theorem \ref{thm:nondistinguishable}]
Define set $\mathcal{F}_\theta^*$ as
\begin{equation*}
  \mathcal{F}_\theta^*\coloneqq \left\{ F\in \mathcal{F}:   \expt_{F} [r_1(Z;\theta) ] = 0 \text{ and }    
  \inf_{\lambda\in \mathcal{S}_2} \expt_{F} \gamma_2(\lambda,Z;\theta) > 0. \right\}
\end{equation*}
Since \eqref{eq:strict_center} holds, we know $\mathcal{F}_\theta^*$ is nonempty. By Lemma \ref{lem:bound_for_difference}, 
$\mathcal{F}_\theta^*\subseteq \mathcal{F}_\theta \subseteq \mathcal{F}'_\theta$. 
Hence, both $\sup_{F\in \mathcal{F}_\theta}\expt_F \phi_n
$ and $\sup_{F\in \mathcal{F}'_\theta}\expt_F \phi_n$ are well defined and finite. 
$\mathcal{F}_\theta^*\subseteq \mathcal{F}_\theta \subseteq \mathcal{F}'_\theta$ also implies that 
$\sup_{F\in \mathcal{F}^*_\theta} \expt_F\phi_n  \le \sup_{F\in \mathcal{F}_\theta} \expt_F\phi_n \le \sup_{F\in
\mathcal{F}'_\theta} \expt_F\phi_n$.
Therefore, to show the desired result, we only need to show that for any $F \in
\mathcal{F}'_\theta$, $\expt_{F}\phi_n \le \sup_{F\in \mathcal{F}^*_\theta} \expt_F\phi_n$. 

For each $F\in \mathcal{F}$, define  Define $\psi(F) = \inf_{\lambda\in \mathcal{S}_2}\expt_F\gamma_2(\lambda,Z;\theta)$.
For any $F_1, F_2\in \mathcal{F}$, let $F_\delta = \delta F_1 + (1-\delta)F_2$ for any $\delta\in [0, 1]$. Then, 
\begin{eqnarray*}
  \psi(F_\delta) & = & \inf_{\lambda\in \mathcal{S}_2} \Big(\delta \expt_{F_1}\gamma_2(\lambda,Z;\theta) + (1-\delta)
  \expt_{F_2}\gamma_2(\lambda,Z;\theta)\Big) \\
                 & \ge &  \inf_{\lambda\in \mathcal{S}_2}\delta\expt_{F_1}\gamma_2(\lambda,Z;\theta) + \inf_{\lambda\in
                 \mathcal{S}_2} (1-\delta)\expt_{F_2}\gamma_2(\lambda,Z;\theta) \\
                 & = & \delta \psi(F_1) + (1-\delta) \psi(F_2)
\end{eqnarray*}
Therefore, $\psi$ is a concave function. 

Now, fix an arbitrary $F\in \mathcal{F}'_\theta$. For any $F^*\in \mathcal{F}^*$ and any $k\ge 1$, define $F_k\coloneqq
(1-\frac{1}{k})F + \frac{1}{k}F^*$. Since $F\in \mathcal{F}'_\theta$, $\expt_F[r_1(Z;\theta)] = 0$ and
$\inf_{\lambda\in \mathcal{S}_2} \expt_{F} \gamma_2(\lambda,Z;\theta) \ge 0$. Therefore, the concavity of $\psi$ implies
that $F_k\in \mathcal{F}^*_\theta$ for all $k\ge 1$.
 Since $\expt_F \phi_n = \lim_{k\to\infty} \expt_{F_k} \phi_n$, we know 
\begin{equation*}
\expt_F \phi_n \le \sup_{k\ge 1}\expt_{F_k} \phi_n \le \sup_{F' \in \mathcal{F}^*} \expt_{F'} \phi_n.
\end{equation*}
This completes the proof.
\end{proof}

\subsubsection{Proof for Lemma \ref{lem:bound_for_difference}}
The proof of this result builds on the following lemmas, whose proofs will be presented later.
\begin{lemmasec}\label{lem:support_func_interior}
  Suppose set $A$ is a nonempty closed convex set in $\real^d$. Let $\interior{A}$ denote the interior of $A$. Then, 
  $x\in\interior{A}$ if and only if
  \begin{equation}\label{eq:inf_supp_pos}
    \inf_{\lambda\in \real^d: \norm{\lambda} = 1} \sup\{\lambda'(t - x): t\in A\} > 0. 
  \end{equation}
  Note that \eqref{eq:inf_supp_pos} includes the case that $\inf_{\lambda\in \real^d: \norm{\lambda} = 1} \sup\{\lambda'(t - x): t\in A\} =+\infty$ which could happen when $A=\real^d$. 
\end{lemmasec}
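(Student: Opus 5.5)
This is the classical interior characterization through support functions, and I would prove the two directions separately. Throughout, write $h_A(\lambda)\equiv\sup\{\lambda't:t\in A\}$, so that $\sup\{\lambda'(t-x):t\in A\}=h_A(\lambda)-\lambda'x$. I would first translate $A$ to $A-x$, which is again nonempty, closed and convex. This reduces the claim to the case $x=0$.

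\emph{Necessity.} Suppose $0\in\interior A$. Then there is a radius $\rho>0$ such that the closed ball $\{t:\norm{t}\le\rho\}$ is contained in $A$. For any $\lambda$ with $\norm{\lambda}=1$, the point $t=\rho\lambda$ lies in $A$ and gives $\lambda't=\rho$. Hence $h_A(\lambda)\ge\rho$ uniformly over the unit sphere, and the infimum in \eqref{eq:inf_supp_pos} is at least $\rho>0$. This covers the value $+\infty$ as well, for instance when $A=\real^d$.

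\emph{Sufficiency.} Suppose $c\equiv\inf_{\norm{\lambda}=1}h_A(\lambda)>0$. Fix any $\rho$ with $0<\rho<c$, or any finite $\rho>0$ if $c=+\infty$. I would show that the ball of radius $\rho$ around $0$ lies in $A$, which gives $0\in\interior A$. Argue by contradiction: suppose some $y$ with $\norm{y}\le\rho$ satisfies $y\notin A$.

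Since $A$ is nonempty, closed and convex, the strict separating hyperplane theorem applies; this is the same fact behind Lemma~\ref{lem:convex_lemma}, applied to $A-y$. It yields a unit vector $\lambda$ with $h_A(\lambda)<\lambda'y$. The Cauchy--Schwarz inequality gives $\lambda'y\le\norm{y}\le\rho$. Combining, $h_A(\lambda)<\rho\le c$, which contradicts the definition of $c$. Therefore the ball is contained in $A$.

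\emph{Main subtlety.} The only delicate point is the sufficiency step, where we must pass from a strict inequality on the support function to the existence of a whole neighborhood of $x$ inside $A$. The uniform positivity of the infimum over the compact unit sphere is what supplies a single radius $\rho$ that works in every direction. Closedness and convexity are used exactly once, in invoking the separation theorem. An alternative route I could cite instead is the standard fact that the interior of a closed convex set is the set where the gauge is strictly positive, for example in \cite{hiriart-urruty_fundamentals_2001}, Chapter C.
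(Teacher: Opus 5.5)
Your proof is correct, but it follows a different route from the paper. The paper first cites Theorem 2.2.3 of \cite{hiriart-urruty_fundamentals_2001} for the pointwise characterization: $x\in\interior A$ if and only if $\sup\{\lambda'(t-x):t\in A\}>0$ for every unit $\lambda$. It then upgrades pointwise positivity to uniform positivity. The map $\lambda\mapsto\sup\{\lambda'(t-x):t\in A\}$ is lower semicontinuous, so on the compact unit sphere a finite infimum is attained and hence positive, while an infimum of $+\infty$ is positive trivially.

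You instead prove the uniform statement directly. For necessity you evaluate the support function at the points $\rho\lambda$ of an inscribed ball. For sufficiency you combine strict separation with Cauchy--Schwarz to show that the closed ball of radius $\rho$ lies in $A$; this works for any $\rho<c$, and in fact for $\rho=c$ when $c<\infty$. The separation step is, as you note, exactly Lemma~\ref{lem:convex_lemma} applied to $A-y$.

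What each approach buys:
\begin{itemize}
\item Your route is self-contained within the paper. It never needs semicontinuity or attainment of the infimum, and it handles the value $+\infty$ without a separate case.
\item It also gives a quantitative by-product. For $x\in A$, the infimum in \eqref{eq:inf_supp_pos} equals the radius of the largest ball centered at $x$ that is contained in $A$.
\item The paper's route is shorter on the page because it outsources the geometric content to the textbook theorem.
\end{itemize}

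One correction to your closing remark: your argument never uses compactness of the unit sphere. Positivity of the single number $c$ is what supplies a common radius in every direction. Compactness is only needed in the paper's route, to pass from pointwise to uniform positivity.
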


\begin{lemmasec}\label{lem:convex_set_interior_perturbation}
  Suppose set $A$ is a nonempty set in $\real^d$. Suppose $x\in\interior (\co A)$, there there exists 
  some $\epsilon > 0$, a positive integer $K > 0$ and $a_1,....,a_K\in A$, such that, $x\in
  \interior(\co\{a'_1,...,a'_K\})$ for any $a'_1,...,a'_K$ with $\norm{a_i - a'_i} < \epsilon$ for $i=1,...,K$. 
\end{lemmasec}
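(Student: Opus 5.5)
The approach is to build a finite polytope around $x$ whose vertices are convex combinations of finitely many points of $A$. We then check, through the support-function characterization of interior points in Lemma~\ref{lem:support_func_interior}, that the interior property survives small perturbations of those points.

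\emph{Step 1 (a cross-polytope around $x$).} Since $x\in\interior(\co A)$, there is $\rho>0$ such that the closed ball of radius $\rho$ around $x$ lies in $\co A$. Let $e_1,\ldots,e_d$ be the standard basis of $\real^d$. Then the $2d$ points $v_{i,\pm}\equiv x\pm\rho e_i$ all belong to $\co A$. By Carathéodory's theorem, each $v_{i,\pm}$ is a convex combination of at most $d+1$ points of $A$. Collect all points of $A$ used in these representations into a finite list $a_1,\ldots,a_K$. We can then write each vertex as $v_{i,\pm}=\sum_{k=1}^K p^{(i,\pm)}_k a_k$, with weights $p^{(i,\pm)}_k\ge 0$ that sum to one over $k$.

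\emph{Step 2 (perturbation).} Fix $a'_1,\ldots,a'_K$ with $\norm{a_k-a'_k}<\epsilon$ for every $k$, where $\epsilon$ is to be chosen. Define $v'_{i,\pm}\equiv\sum_k p^{(i,\pm)}_k a'_k$. This point lies in $\co\{a'_1,\ldots,a'_K\}$, and by the triangle inequality $\norm{v'_{i,\pm}-v_{i,\pm}}<\epsilon$. Let $P'\equiv\co\{v'_{i,\pm}\}$. It is a polytope, hence nonempty, closed and convex, and $P'\subseteq\co\{a'_1,\ldots,a'_K\}$. Because interior is monotone under inclusion, it suffices to show $x\in\interior P'$.

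\emph{Step 3 (support-function check).} By Lemma~\ref{lem:support_func_interior}, it is enough to show that $\inf_{\norm{\lambda}=1}\max_{i,\pm}\lambda'(v'_{i,\pm}-x)>0$. For a unit vector $\lambda$ we have $\max_{i,\pm}\lambda'(v_{i,\pm}-x)=\rho\max_i|\lambda_i|\ge\rho/\sqrt d$, since $\max_i|\lambda_i|\ge\norm{\lambda}/\sqrt d$. The perturbation changes each term by less than $\epsilon$ (Cauchy--Schwarz), so $\max_{i,\pm}\lambda'(v'_{i,\pm}-x)>\rho/\sqrt d-\epsilon$. Choosing $\epsilon\equiv\rho/(2\sqrt d)$ makes this lower bound $\rho/(2\sqrt d)>0$, uniformly in $\lambda$. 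Hence $x\in\interior P'\subseteq\interior(\co\{a'_1,\ldots,a'_K\})$.

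The main obstacle is the uniformity in Step 3: the positive margin must not depend on $\lambda$ or on the particular perturbation. The explicit cross-polytope gives this margin cleanly. The only other care needed is that the weights $p^{(i,\pm)}$ are held fixed when passing from $a_k$ to $a'_k$, which is what transfers the perturbation bound from the points $a_k$ to the polytope vertices.
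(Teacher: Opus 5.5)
Your proof is correct, and it takes a different route from the paper's. The paper invokes Gustin's (Steinitz-type) theorem to obtain at most $2d$ points $a_1,\ldots,a_K\in A$ with $x\in\interior(\co\{a_1,\ldots,a_K\})$ directly. It then perturbs those generating points themselves, using Lemma~\ref{lem:support_func_interior} to pass between interiority and positivity of the support function.

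You instead rely only on Carath\'eodory's theorem. You place an explicit cross-polytope with vertices $x\pm\rho e_i$ inside $\co A$, and you write these $2d$ vertices as convex combinations of at most $2d(d+1)$ points of $A$. You then transfer the perturbation to the vertices by freezing the weights. The larger $K$ is immaterial where the lemma is used, since Lemma~\ref{lem:suff_id_interior} only needs $K$ finite.

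In exchange, your argument is fully elementary, with an explicit $\epsilon=\rho/(2\sqrt d)$. More importantly, it makes the uniform margin in $\lambda$ explicit, and that uniformity is the delicate point. The paper's proof asserts that $\lambda'(a_i-x)>0$ for every unit $\lambda$ and every $i$, which cannot hold: consider $\lambda$ and $-\lambda$. What is true is that $\delta\equiv\inf_{\norm{\lambda}=1}\max_i\lambda'(a_i-x)>0$, by Lemma~\ref{lem:support_func_interior} applied to the closed polytope $\co\{a_1,\ldots,a_K\}$. After that, $\epsilon=\delta/2$ works by the same Cauchy--Schwarz step you use. Your Step 3 computes this margin directly as $\rho/\sqrt d$, so your argument needs no such repair.
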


\begin{lemmasec}\label{lem:suff_id_interior}
  Suppose Assumption \ref{assu:reg} hold. Then, for any $F\in \mathcal{F}$,  $0 \in \interior(\clco \expt_F 
  \cl\mathcal{R}(Z;\theta))$ implies $\theta \in \Theta_I(F)$.
\end{lemmasec}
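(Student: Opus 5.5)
The plan is to mimic the proof of Lemma~\ref{lem:suff_quasi_id}, but to use the interior condition to eliminate the $\epsilon$-slack. Write $A\equiv \clco \expt_F\cl\mathcal{R}(Z;\theta)$ and suppose $0\in\interior A$.

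\emph{Step 1: exhibit $0$ as an exact convex combination of finitely many Aumann-integral points.} Let $B\equiv \co \expt_I\cl\mathcal{R}(Z;\theta)$. As in the proof of Lemma~\ref{lem:suff_quasi_id}, $\cl B=A$. Since $B$ is convex in $\real^{d_r}$, $\interior \cl B=\interior B$, so $0\in\interior(\co E)$ with $E\equiv\expt_I\cl\mathcal{R}(Z;\theta)$. Lemma~\ref{lem:convex_set_interior_perturbation} then supplies $\epsilon>0$ and points $a_1,\dots,a_K\in E$ such that $0\in\interior\co\{a_1',\dots,a_K'\}$ whenever $\norm{a_i-a_i'}<\epsilon$. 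Each $a_i$ has the form $a_i=\expt_F f_i(Z)$ for some $f_i\in S^1(\cl\mathcal{R}(Z;\theta))$.

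\emph{Step 2: replace selections of the closure by genuine selections of $\mathcal{R}$.} For each $i$, I would use Lemma~\ref{lem:selection} as in Lemma~\ref{lem:suff_quasi_id} to build a universally measurable $g_i(z)\in\Gamma(z;\theta)$ with $\norm{r(g_i(z),z;\theta)-f_i(z)}\le \eta(z)$. The function $\eta$ is chosen strictly positive and integrable with $\expt_F\eta<\epsilon$; for instance, take $\eta\equiv\epsilon/2$ pointwise. The construction needs $\inf_u\norm{r(u,z)-f_i(z)}=0$, which holds because $f_i(z)\in\cl\mathcal{R}(z;\theta)$.

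Then $h_i(z)\equiv r(g_i(z),z;\theta)$ is integrable, since $f_i$ is integrable and $\eta$ is bounded. Moreover $a_i'\equiv\expt_F h_i$ satisfies $\norm{a_i'-a_i}<\epsilon$. Hence $0\in\co\{a_1',\dots,a_K'\}$, so there exist weights $p_i\ge0$ with $\sum_i p_i=1$ and $\sum_i p_i a_i'=0$ exactly.

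\emph{Step 3: randomize.} Let $T$ be independent of $Z$ with $\prob(T=i)=p_i$, and set $U\equiv\sum_i\indicator\{T=i\}g_i(Z)$. Then $U\in\Gamma(Z;\theta)$ almost surely, the $Z$-marginal is $F$, and $\expt_H r(U,Z;\theta)=\sum_i p_i\expt_F h_i(Z)=0$. This gives $\theta\in\Theta_I(F)$.

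The main obstacle is measurability and integrability in Step 2. The $g_i$ are only universally measurable, so the joint law $H$ must be defined on the completion and then restricted to Borel sets, exactly as in the earlier proofs. The essential idea, however, is that interiority absorbs the approximation error: it turns the $2\epsilon$ slack of Lemma~\ref{lem:suff_quasi_id} into an exact zero via Lemma~\ref{lem:convex_set_interior_perturbation}. Lemma~\ref{lem:support_func_interior} is not needed here; it would be used afterwards to translate the condition $\inf_\lambda\expt\gamma>0$ into $0\in\interior A$.
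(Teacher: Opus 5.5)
Your proposal is correct and follows essentially the same route as the paper. Both arguments reduce to $0\in\interior(\co\expt_I\cl\mathcal{R}(Z;\theta))$, apply Lemma~\ref{lem:convex_set_interior_perturbation}, replace each closure selection by a genuine selection via Lemma~\ref{lem:selection} with small slack, and randomize over the resulting exact convex combination; your choice of slack $\epsilon/2$ (the paper uses $\epsilon$ and then asserts the strict bound $<\epsilon$) and your explicit randomization step are, if anything, slightly cleaner than the written proof.
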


\begin{proof}[Proof of Lemma \ref{lem:bound_for_difference}]
  Fix an arbitrary $F\in \mathcal{F}$. In the following of the proof, I will abbreviate $\expt_F$ as $\expt$, and
  $\Theta_I(F)$ as $\Theta_I$. Recall also that $\expt_{I}$ stands for the Aumann integral. 
  Lemma \ref{lem:suff_random_set} implies that $\cl\mathcal{R}(\cdot;\theta)$ is an integrable random closed set.
  The proof will be conducted in three steps. 

  \noindent {\bf Step 1:} Lemma \ref{lem:bound_for_difference}\ref{enu:lemma1_core} holds when $\dim(r_1) = 0$. 

  Suppose $\dim(r_1) = 0$. I need to prove that  $\inf_{\lambda\in \mathcal{S}}\expt_F \gamma(\lambda, Z;\theta) > 0$
  implies $\theta\in \Theta_I$ in this step. Suppose $\inf_{\lambda\in \mathcal{S}}\expt_F \gamma(\lambda, Z;\theta) > 0$.
  I'm going to prove $\theta\in \Theta_I$ by contradiction.

  Suppose, for the purpose of contradiction, that $\theta\notin \Theta_I$. Then, Lemma \ref{lem:support_func_interior}
  and \ref{lem:suff_id_interior} implies that 
  \begin{displaymath}
    \inf_{\lambda\in \mathcal{S}} \sup\{\lambda't: t\in \clco \expt
    \cl\mathcal{R}(Z;\theta)\} \le 0.
  \end{displaymath}
  By Lemma \ref{lem:random_set_all}\ref{enu:wa4}, and the fact that
  $\clco\mathcal{R}(Z;\theta)\subseteq\clco\cl\mathcal{R}(Z;\theta)$, and that
  $\expt_I\clco\mathcal{R}(Z;\theta)\subseteq \expt\clco\mathcal{R}(Z;\theta)$, we
  know
  \begin{equation}
    \label{eq:temp_q23i}
    \inf_{\lambda\in \mathcal{S}} \sup\{\lambda't: t\in \expt_I
    \clco\mathcal{R}(Z;\theta)\} \le 0
  \end{equation}
  Since $\sup\{\lambda't: t\in \expt_I \clco\mathcal{R}(Z;\theta)\}$ is a lower semi-continuous function of $\lambda$ and
  $\mathcal{S}$ is compact, there exists some $\tilde{\lambda}$ such that $\sup\{\tilde{\lambda}'t: t\in \expt_I
  \clco\mathcal{R}(Z;\theta)\} \le 0$. Note that
  \begin{equation}
    \label{eq:niqwdema}
    \sup\{\tilde{\lambda}'t: t\in \expt_I
  \clco\mathcal{R}(Z;\theta)\} = -\inf_{f\in S^1(\clco\mathcal{R}(Z;\theta))} \expt [-\tilde{\lambda}'f]
  \end{equation}
  where $S^1$ is defined in Definition \ref{def:int_sel_awofei}. Apply Lemma
  \ref{lem:random_set_all}\ref{enu:wa3} with $\zeta(t) = -\lambda't$ to get
  \begin{eqnarray}
    && -\inf_{f\in S^1(\clco\mathcal{R}(Z;\theta))} \expt [-\tilde{\lambda}'f] \nonumber\\
    & = & -\expt \inf\{-\tilde{\lambda}'t: t \in \clco\mathcal{R}(Z;\theta)\} \nonumber\\
    & = & \expt \sup\{\tilde{\lambda}'t: t \in \clco\mathcal{R}(Z;\theta)\}. \label{eq:fnmwdzwa}
  \end{eqnarray}
  Equation \eqref{eq:niqwdema} and \eqref{eq:fnmwdzwa} imply
  \begin{equation}
    \label{eq:nfwiamso}
    \expt \sup\{\tilde{\lambda}'t: t \in \clco\mathcal{R}(Z;\theta)\} =
    \sup\{\tilde{\lambda}'t: t\in \expt_I
    \clco\mathcal{R}(Z;\theta)\}  \le 0.
  \end{equation}
  In addition, since $\mathcal{R}(z;\theta)$ is a subset of the Euclidean space,
  \begin{equation}
    \label{eq:fjei2any}
    \sup\{\tilde{\lambda}'t: t\in \clco \mathcal{R}(z;\theta)\} = \sup\{\tilde{\lambda}'t: t\in \mathcal{R}(z;\theta)\}.
  \end{equation}
 Equation \eqref{eq:nfwiamso} and \eqref{eq:fjei2any} then imply
  \begin{displaymath}
    \inf_{\lambda\in \real^{\dim(r)}} \expt \sup\{\lambda't: t\in \mathcal{R}(Z;\theta)
    \} \le 0.
  \end{displaymath}
  This contradicts the fact that $\theta$ satisfy $\inf_{\lambda\in \mathcal{S}}\expt_F \gamma(\lambda, Z;\theta) > 0$.

  \noindent {\bf Step 2:} Lemma \ref{lem:bound_for_difference}\ref{enu:lemma1_core} holds when $\dim(r_1) > 0$. 

  Recall that $\mathcal{H}(\theta, F)$ is defined as the set of all joint
  distributions $H$ for $(U,Z)$ which satisfy that $\prob_H[(U,Z)\in \Gamma(\theta)] = 1$ and that $H$'s marginal
  distribution for $Z$ equals $F$.
  \begin{itemize}
    \item   When $\dim(r_2) = 0$, for any $H \in \mathcal{H}(\theta, F)$, we have $\expt_H [r(U,Z;\theta)] = \expt_F [r_1(Z;\theta)]$.  Therefore, \eqref{eq:id_def_min} is equivalent to $\expt_F[r_1(Z;\theta)] = 0$.
  Hence, $\expt_F r_1(Z;\theta) = 0$ implies $\theta\in \Theta_I(F;\Gamma, r_1) = \Theta_I(F;\Gamma, r)$ 
  by Definition \ref{def:id_set}.
    \item When $\dim(r_2) > 0$, note that \eqref{eq:id_def_min} is equivalent to the following condition:
  \begin{equation*}
    \expt_F[r_1(Z;\theta)]= 0 \text{ and } \inf_{H\in \mathcal{H}(\theta, F)} \norm{\expt_H [r_2(U,Z;\theta)]} = 0.
  \end{equation*}
  which implies that $\Theta_I(F;\Gamma, r) = \Theta_I(F;\Gamma, r_1) \cap \Theta_I(F;\Gamma, r_2)$ by Definition
  \ref{def:id_set}. Following the same proof in the previous paragraph, we know that $\expt_F[r_1(Z;\theta)] = 0$
  implies $\theta\in \Theta_I(F;\Gamma, r_1)$. Following the same proof in Step 1, we know that $\inf_{\lambda\in
  \mathcal{S}_2}\expt_F \gamma_2(\lambda, Z;\theta) > 0$ implies $\theta\in \Theta_I(F;\Gamma, r_2)$. As a result, 
$\expt_F[r_1(Z;\theta)] = 0$ and $\inf_{\lambda\in \mathcal{S}_2} \expt_F \gamma_2(\lambda, Z;\theta) > 0$ implies
$\theta\in \Theta_I(F;\Gamma, r)$.
  \end{itemize}
   Step 1 and 2 completes the proof for Lemma \ref{lem:bound_for_difference}\ref{enu:lemma1_core}.

   \noindent {\bf Step 3:} Lemma \ref{lem:bound_for_difference}\ref{enu:lemma1_side} holds.

   Suppose $\theta\in \overline{\Theta}_I(F) \backslash \Theta_I(F)$. Because $\theta\in  \overline{\Theta}_I(F) $, we know 
   \begin{equation*}
  \expt_F[r_1(Z;\theta)] = 0\text{ and }\inf_{\lambda\in \mathcal{S}_2} \expt_F \gamma_2(\lambda, Z;\theta) \ge 0.
   \end{equation*}
   Moreover, $\theta\notin \Theta_I(F)$ implies that there is no 
   \begin{equation*}
  \expt_F[r_1(Z;\theta)] = 0\text{ and }\inf_{\lambda\in \mathcal{S}_2} \expt_F \gamma_2(\lambda, Z;\theta) > 0.
   \end{equation*}
   Hence, we must have $\expt_F[r_1(Z;\theta)] = 0$ and $\inf_{\lambda\in \mathcal{S}_2} \expt_F \gamma_2(\lambda, Z;\theta) = 0.$
\end{proof}

\begin{proof}[Proof of Lemma \ref{lem:support_func_interior}]
  By Theorem 2.2.3 (on page 138) in \cite{hiriart-urruty_fundamentals_2001}, we know that $x\in \interior A$ if and only if for any $\lambda$ with $\norm{\lambda} = 1$, $\sup\{\lambda'(t - x): t\in A\} > 0$. Therefore, I only need to show that $\sup\{\lambda'(t - x): t\in A\} > 0$ for any $\lambda$ with $\norm{\lambda} = 1$ if and only if 
  \begin{equation*}
    \inf_{\lambda\in \real^d: \norm{\lambda} = 1} \sup\{\lambda'(t - x): t\in A\} > 0. 
  \end{equation*}
  The "if" part of this claim follows from the definition of $\inf$. To show the "only if" part of this claim, note that $\sup\{\lambda'(t - x): t\in A\}$ is a lower semi-continuous function of $\lambda$ and that $\{\lambda \in \real^d: \norm{\lambda}=1\}$ is a compact set. Note also that there must be $\inf_{\lambda\in \real^d: \norm{\lambda} = 1} \sup\{\lambda'(t - x): t\in A\} >= 0$.  Therefore, if $\inf_{\lambda\in \real^d: \norm{\lambda} = 1} \sup\{\lambda'(t - x): t\in A\} < +\infty$, this infimum is achieved by some $\lambda$ with $\norm{\lambda} = 1$ so that $\inf_{\lambda\in \real^d: \norm{\lambda} = 1} \sup\{\lambda'(t - x): t\in A\} > 0$. If $\inf_{\lambda\in \real^d: \norm{\lambda} = 1} \sup\{\lambda'(t - x): t\in A\} = +\infty$, then we automatically have $\inf_{\lambda\in \real^d: \norm{\lambda} = 1} \sup\{\lambda'(t - x): t\in A\} > 0.$ 
\end{proof}

\begin{proof}[Proof of Lemma \ref{lem:convex_set_interior_perturbation}]
  By \cite{gustin_interior_1947}, there exists some $a_1,...,a_K\in A$ such that $x$ is in the interior of $\co
  \{a_1,...,a_K\}$ and $K \le 2 d$. By Lemma \ref{lem:support_func_interior}, we know that $\lambda'(a_i - x) > 0$ for
  all $\lambda$ with $\norm{\lambda} = 1$ and for all $i=1,...,K$. Therefore, there exists some $\epsilon > 0$ such that
  for any $i=1,...,K$ and for any $a'_i$ with $\norm{a'_i - a_i} < \epsilon$, we have $\lambda'(a'_i - x) > 0$. By Lemma
  \ref{lem:support_func_interior},  this is equivalent to that $x\in \interior(\co\{a'_1,...,a'_K\})$ for any
  $a'_1,...,a'_K$ with $\norm{a_i - a'_i} < \epsilon$ for $i=1,...,K$. 
\end{proof}

\begin{proof}[Proof of Lemma \ref{lem:suff_id_interior}]
  Fix an arbitrary $F\in \mathcal{F}$. In the following, I will abbreviate $\expt_F$ as $\expt$, and $\Theta_I(F)$ as
  $\Theta_I$. Recall that $\expt_{I}$ stands for the Aumann integral. Recall also that the probability space
  $(\mathcal{Z},\mathscr{Z}, F)$ denotes the completion of Borel probability space $(\mathcal{Z},\mathscr{B}_Z, F)$.

  Under Assumption \ref{assu:reg}, $\cl\mathcal{R}(Z;\theta)$ 
  is an integrable random closed set in $(\mathcal{Z}, \mathscr{Z}, F_Z)$. 
  Suppose $0 \in \interior(\clco \expt \cl\mathcal{R}(Z;\theta))$ is true, I want
  to prove that $\theta\in \Theta_I$. 

  Because $\clco A = \clco\cl A$ for any subset $A$ in an Euclidean space, and because $\expt \cl
  \mathcal{R}(Z;\theta) = \cl(\expt_I \cl\mathcal{R}(Z;\theta))$ by Definition
  \ref{def:integrable_bound}, $0\in \interior(\clco \expt \cl \mathcal{R}(Z;\theta))$
  imply $0 \in \interior\left( \clco \expt_I \cl \mathcal{R}(Z;\theta) \right)$. 
  Furthermore, because Proposition 2.1.8 in \cite{hiriart-urruty_fundamentals_2001} implies that   $\interior(\clco A)
  = \interior(\co A)$ for any subset $A$ in an Euclidean space, we know $0 \in \interior\left( \clco \expt_I \cl
  \mathcal{R}(Z;\theta) \right)$ implies that $0 \in \interior \left( \co \expt_I \cl \mathcal{R}(Z;\theta) \right)$.
  By Lemma \ref{lem:convex_set_interior_perturbation}, we know there exists some $\epsilon > 0$, some positive integer
  $K$ and some $v_1,...,v_{K}\in \expt_I \cl \mathcal{R}(Z;\theta)$ such that $0 \in \interior
  (\co\{\tilde{v}_1,...,\tilde{v}_K\})$ for any $(\tilde{v}_1,...,\tilde{v}_K)$ with $\norm{\tilde{v}_i - v_i} < \epsilon$
  for any $i=1,...,K$. 

  For any $k = 1,...,K$.  Because $v_{k}\in \expt_I \cl \mathcal{R}(Z;\theta)$, there exists  $f_k\in
  S^1(\cl\mathcal{R}(Z;\theta))$ such that $v_k = \expt f_k(Z)$. Because every measurable function in
  $(\mathcal{Z},\mathscr{Z}, F_Z)$ can be well approximated by a Borel measurable function, there exists some Borel function
  $\tilde{f}_k$ such that $\prob(f_k(Z) = \tilde{f}_k(Z)) = 1$. Therefore, we know $\expt \tilde{f}_k(Z) = v_k$ and
  \begin{equation*}
    \prob\left( \inf_{u\in \Gamma(Z;\theta)} \norm{r(u,Z;\theta) - \tilde{f}_k(Z)} = 0 \right) = 1.
  \end{equation*}
  Since $\{(z,u):u\in \Gamma(z;\theta)\}\times \real^{\dim(r)}$ is a Borel set, and
  that $(u,z)\mapsto \norm{r(u,z;\theta) - \tilde{f}_k(z)}$ is a Borel measurable function, 
  Lemma \ref{lem:selection} in Appendix \ref{sec:selection} implies that there exists a universally
  measurable function $g:\mathcal{Z}\mapsto\mathcal{U}$, such
  that for almost every $z\in \mathcal{Z}$, $g_k(z) \in \Gamma(z;\theta)$ and
  \begin{displaymath}
    \norm{r\big(g_k(z), z\big) - \tilde{f}_k(z)} \le \epsilon + \inf_{u\in \Gamma(z;\theta)} \norm{r(u,z;\theta) - \tilde{f}_k(z)}.
  \end{displaymath}
  By the construction of $g_k$, $\norm{v_k - \expt r(g_k(Z), Z)} < \epsilon$. 

  As a result, I have shown that there exists function $g_1,...,g_K$ in $(\mathcal{Z},\mathscr{Z}, F_Z)$ such that
  $\prob(g_k(Z) \in \Gamma(Z;\theta)) = 1$ for each $k=1,...,K$ and $0 \in \co\{\expt r(g_1(Z),Z),..., \expt
  r(g_K(Z),Z)\}$. This implies that there exists a joint distribution $H$ for $(U, Z)$ such that (i) $H$'s marginal
  distribution for $Z$ is $F_Z$, (ii) $\prob_H\left( U \in \Gamma(Z;\theta) \right) = 1$ and (iii) $\expt_H r(U,
  Z;\theta) = 0$. Hence, $\theta\in \Theta_I$.
\end{proof}

\subsection{Proof for Theorem \ref{lem:reducible_lemma}}
  Fix $\theta$ to be an arbitrary parameter with which \eqref{eq:strict_center} does not hold for all $F\in \mathcal{F}$.
  The proof will be divided into two parts: Part 1 deals with the first part of the result and Part 2 deals with the second
  part of the result.

  \noindent{\bf Part 1}
  First of all, the fact that $\mathcal{F}'_\theta$ is nonempty and \eqref{eq:strict_center} fails to hold for all $F\in
  \mathcal{F}$ implies that $\dim(r_2) > 0$. For any $F\in \mathcal{F}'_\theta$, Theorem \ref{thm:weak_sharp_support_function}
  implies that $\inf_{\lambda\in \mathcal{S}} \expt_{F} \gamma(\lambda,Z;\theta) \ge 0$, which is equivalent to 
  \begin{equation*}
    \expt_F [r_1(Z;\theta)] = 0\text{ and }\inf_{\lambda\in \mathcal{S}_2} \expt_{F} \gamma_2(\lambda,Z;\theta) \ge 0.
  \end{equation*}
  Because \eqref{eq:strict_center} fails to hold for all $F\in \mathcal{F}$, we know that for any $F\in \mathcal{F}'_\theta$, 
  $\inf_{\lambda\in \mathcal{S}_2} \expt_{F} \gamma_2(\lambda,Z;\theta) = 0$. Since $\gamma_2(\lambda, Z;\theta)$ is
  lower semi-continuous in $\lambda$ and $\mathcal{S}_2$ is a compact set, we know that for each $F\in
  \mathcal{F}'_\theta$, there exists some $\lambda\in \mathcal{S}_2$ 
  such that $\expt_F \gamma_2(\lambda,Z;\theta) = 0$. For each $F\in
  \mathcal{F}'_\theta$, define $\Lambda(F) \coloneqq \{\lambda\in \mathcal{S}_2: \expt_F \gamma_2(\lambda, Z;\theta) = 0 \}$.
  Then, for each $F\in \mathcal{F}'_\theta$, $\Lambda(F)$ is nonempty.  To show the first result of Theorem
  \ref{lem:reducible_lemma}, I only need to show that there exists some $F^*\in \mathcal{F}'_\theta$ such that 
  $\cap_{F\in \mathcal{F}'_\theta}\Lambda(F) = \Lambda(F^*)$.  When $\mathcal{F}'_\theta$ only contains one element
  $F^*$, it's trivially true that $\cap_{F\in \mathcal{F}'_\theta}\Lambda(F) = \Lambda(F^*)$. So, I suppose
  $\mathcal{F}'_\theta$ contains at least two elements in the remaining of the proof in this part.

  Note that $\mathcal{F}'_\theta$ is a convex set because
  $\mathcal{F}$ is convex and $\mathcal{F}'_\theta = \{F\in \mathcal{F}: \expt_F \gamma(\lambda,Z;\theta) \ge 0, \forall
  \lambda\in \mathcal{S}\}$.
  The relative interior $\ri \mathcal{F}'_\theta$ defined as $\ri\mathcal{F}'_\theta \coloneqq \{F\in
    \mathcal{F}'_\theta: \forall F'\in \mathcal{F}'_\theta,\ \exists \delta > 1 \text{ such that } \delta F + (1-\delta) F'\in
  \mathcal{F}'_\theta\}$ should contain at least two elements because $\mathcal{F}'_\theta$ contains at least two
  elements. 

  To proceed, I claim that for any $F_1, F_2\in \mathcal{F}'_\theta$ and any $\delta\in (0, 1)$, $\Lambda(F_1) \cap
  \Lambda(F_2) = \Lambda(F_\delta)$ where $F_\delta\coloneqq \delta F_1 + (1-\delta) F_2$. To see why this is true, note
  that for any $\lambda \in \Lambda(F_1) \cap \Lambda(F_2)$, $\expt_{F_\delta} \gamma_2(\lambda,
  Z;\theta) = \delta \expt_{F_1}\gamma_2(\lambda,Z;\theta) + (1-\delta)\expt_{F_2}\gamma_2(\lambda, Z;\theta) = 0$. Hence,
  $\Lambda(F_\delta) \supseteq \Lambda(F_1) \cap \Lambda(F_2)$. Now, for any $\lambda\in \mathcal{S}_2\backslash
  (\Lambda(F_1) \cap \Lambda(F_2))$, we know the following is true:
  \begin{itemize}
    \item $\expt_{F_1}\gamma_2(\lambda, Z;\theta) \ge 0$, because $F_1\in \mathcal{F}'_\theta$;
    \item $\expt_{F_2}\gamma_2(\lambda, Z;\theta) \ge 0$, because $F_1\in \mathcal{F}'_\theta$;
    \item either $\expt_{F_1}\gamma_2(\lambda, Z;\theta) > 0$ or $\expt_{F_2}\gamma_2(\lambda,
  Z;\theta) > 0$, because $\lambda\notin \Lambda(F_1) \cap \Lambda(F_2)$.
  \end{itemize}
  Therefore, $\expt_{F_\delta}\gamma_2(\lambda,
  Z;\theta) = \delta \expt_{F_1}\gamma_2(\lambda,Z;\theta) + (1-\delta)\expt_{F_2}\gamma_2(\lambda, Z;\theta) > 0$. Hence,
  for any  $\lambda\in \mathcal{S}_2\backslash (\Lambda(F_1) \cap \Lambda(F_2))$, 
  $\lambda\notin \Lambda(F_\delta)$.  Hence, $\Lambda(F_\delta) \subseteq \Lambda(F_1) \cap \Lambda(F_2)$. Combine both
  results, I conclude that $\Lambda(F_1) \cap \Lambda(F_2) = \Lambda(F_\delta)$ for any $\delta\in (0, 1)$. 

  Next, I claim that for any two $F_1$, $F_2$ in $\ri \mathcal{F}'_\theta$, $\Lambda(F_1) = \Lambda(F_2)$. To
  see why this is true, note that by the definition of $\ri \mathcal{F}'_\theta$, there must exists $F_3$ and $F_4$ in
  $\mathcal{F}'_\theta$ and $\delta_1, \delta_2\in (0, 1)$ such that $F_1 = \delta_1 F_3 + (1-\delta_1) F_4$ and $F_2
  = \delta_2 F_3 + (1-\delta_2) F_4$. By the preceding result, we know $\Lambda(F_1) = \Lambda(F_2) = \Lambda(F_3) \cap
  \Lambda(F_4)$. 

  Finally, let $F^*$ be an arbitrary element in $\ri \mathcal{F}'_\theta$. 
  I claim that $\cap_{F\in \mathcal{F}'_\theta}\Lambda(F) = \Lambda(F^*)$. To see why this is true, note that for any
  $F\in \mathcal{F}'_\theta$, there must exists $F'\in \mathcal{F}'_\theta$ with $F'\neq F$ because
  $\mathcal{F}'_\theta$ is assumed to have at least two elements.  Because $\frac{1}{2}F + \frac{1}{2}F'\in
  \ri\mathcal{F}_\theta$, the claims which I proved in the above paragraphs implies that $\Lambda(F^*) = \Lambda(F)\cap
  \Lambda(F')$.  As a result, $\Lambda(F^*)\subseteq \Lambda(F)$ for all $F\in \mathcal{F}_\theta$. Hence, $\cap_{F\in
  \mathcal{F}'_\theta}\Lambda(F) = \Lambda(F^*)$.

  \noindent{\bf Part 2}
  I am going to prove the second part of the result in two steps. 

  \emph{Step 1: $\forall F\in \mathcal{F}$, $\theta\in \Theta_I(F;\tilde{\Gamma},
  \tilde{r})$ implies that $\theta\in \Theta_I(F;\Gamma, r)$}. To prove this result, suppose 
  $\theta\in \Theta_I(F;\tilde{\Gamma}, \tilde{r})$ for some $F\in \mathcal{F}$. Because of the definition of
  $\Theta_I(F;\tilde{\Gamma}, \tilde{r})$, there exists some joint distribution $H$ of $(U, Z)$ such that (i) $\prob_H(
  (U, Z)\in \tilde{\Gamma}(\theta)) = 1$; (ii) $\expt_H r_1(Z;\theta) = 0$, $\expt_H \gamma_2(\tilde{\lambda}, Z;\theta)
  = 0$ and $\expt_H \lambda'_i r_2(U,Z;\theta) = 0$ for
  $i=2,...,\dim(r_2)$; (iii) the marginal distribution of $H$ for $Z$ is $F$. 

  Because of the construction of $\tilde{\Gamma}$ in this lemma, and because $\prob_H( (U, Z)\in \tilde{\Gamma}(\theta))
  = 1$, we know that $\prob_H ( \lambda'_1 r(U,Z;\theta)= \gamma_2(\tilde{\lambda},Z;\theta)) = 1$. Therefore, in
  addition to $\expt_H \lambda'_i r_2(U,Z;\theta) = 0$ for each $i=2,...,\dim(r_2)$, we also have $\expt_H
  \tilde{\lambda}' r_2(U,Z;\theta) = 0$. Because
  $\tilde{\lambda}$, $\lambda_2$, ..., $\lambda_{\dim(r_2)}$ are linearly independent, this implies that $\expt_H
  r_2(U,Z;\theta) = 0\in \real^{\dim(r_2)}$. Moreover, since $\tilde{\Gamma}(\theta) \subseteq \Gamma(\theta)$,
  $\prob_{H}( (U,Z)\in \Gamma(\theta) ) = 1$. As a result, $\theta\in \Theta_I(F;\Gamma, r)$.

  \emph{Step 2: $\forall F\in \mathcal{F}$, $\theta\in \Theta_I(F;\Gamma,
  r)$ implies that $\theta\in \Theta_I(F; \tilde{\Gamma}, \tilde{r})$}. To prove this result, suppose $\theta\in
  \Theta_I(F; \Gamma, r)$ for some $F\in \mathcal{F}_\theta$. By the definition of $\Theta_I(F; \Gamma, r)$, there
  exists some joint distribution $H$ of $(U, Z)$ such that (i) $\prob_H((U, Z)\in \Gamma(\theta)) = 1$; (ii) $\expt_H
  r_1(Z;\theta) = 0$ and $\expt_H
  r_2(U,Z;\theta) = 0$; and (iii) the marginal distribution of $H$ for $Z$ is $F$. Note that since $F\in
  \mathcal{F}_\theta\subseteq \mathcal{F}'_\theta$, we know $\expt_F \gamma_2(\tilde{\lambda}, Z;\theta) = 0$ by the
  construction of $\tilde{\lambda}$.

  Define $\phi(u,z;\theta) = \gamma_2(\tilde{\lambda}, z;\theta) - \tilde{\lambda}'r_2(u,z;\theta)$.
  To show $\theta\in \Theta_I(F; \tilde{\Gamma}, \tilde{r})$, I only need to verify that $\prob_H(\phi(U,Z;\theta) = 0)
  = 1$. By the construction of $H$, there is $\expt_H \tilde{\lambda}'r_2(U,Z;\theta) = 0$. Moreover, by the construction
  of $\tilde{\lambda}$, there is $\expt_F \gamma_2(\tilde{\lambda}, Z;\theta) = 0$ which implies that $\expt_H
  \gamma_2(\tilde{\lambda}, Z;\theta) = 0$. Therefore, we have $\expt_H \phi(U,Z;\theta) = 0$. Recall
  $\gamma_2(\tilde{\lambda}, z;\theta) \coloneqq \sup_{u\in \Gamma(z;\theta)}\tilde{\lambda}'r_2(u,z;\theta)$. Because 
  $\prob_H((U, Z)\in \Gamma(\theta)) = 1$, there is $\prob_H(\phi(U,Z;\theta) \ge 0) = 1$. Combine this result with
  $\expt_H \phi(U,Z;\theta) = 0$, it must be true that $\prob_H(\phi(U,Z;\theta) = 0) = 1$. This proves the desired
  result.

\section{Proof for the claim in the Example that illustrates Scenario (\emph{i})}\label{sec:proof_claim_example}
Let $\Theta_I(F)$ be the identified set for models with support restriction \eqref{eq:entry_game_support} and moment restriction \eqref{eq:entry_medium}. Let $\tilde{\Theta}(F)$ be the corresponding support-function set. Let $\Theta_I'(F)$ be the identified set for the augmented model with counterfactual support restriction \eqref{eq:entry_counterfactual_support} and counterfactual moment $\expt[\sum_{j\in \{0, 1\}} \tilde{\pi}_j - \tilde{\theta}] = 0$ with $\tilde{\pi}_j = \tilde{Y}_j[\tilde{X}_j^\top \alpha - \Delta_j \tilde{Y}_{1-j} + U_j]$ and $\tilde{X} = \phi(X)$. Let $\tilde{\Theta}'(F)$ be the corresponding support-function set for the augmented model. Let $\Xi(F)$ be the projection of $\Theta_I'(F)$ onto the $\tilde{\theta}$-coordinate, and $\tilde{\Xi}(F)$ be the projection of $\tilde{\Theta}'(F)$ onto the $\tilde{\theta}$-coordinate.

At the end of this section, we will prove the following two propositions.
\begin{proposition}\label{prop:oinfiqaou3we4hfrqi}
	Suppose $\sum_{j\in \{0, 1\}}\prob_F(Y_j = 1) > 0$. Then, $\Xi(F)$ is a one-sided interval of form $(\underline{\pi}, +\infty)$ or $[\underline{\pi}, +\infty)$.
\end{proposition}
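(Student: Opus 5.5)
The plan is to show two things. First, for every structural value $\theta$ in the baseline identified set $\Theta_I(F)$, the fiber $\{\tilde\theta:(\theta,\tilde\theta)\in\Theta_I'(F)\}$ is a convex set that is unbounded above. Second, a union of such upward-unbounded convex subsets of $\real$ is itself an interval of the form $(\underline\pi,+\infty)$ or $[\underline\pi,+\infty)$, with $\underline\pi=\inf\Xi(F)$. The second fact is elementary: each fiber is an upward ray, so their union is an upward ray. I implicitly take $\Theta_I'(F)\neq\emptyset$ (equivalently $\Theta_I(F)\neq\emptyset$, see below); otherwise $\Xi(F)$ is empty and the statement is vacuous.

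Fix $\theta=(\alpha,\Delta_0,\Delta_1)\in\Theta_I(F)$ and a baseline $H\in\mathcal{H}(\theta,F)$ satisfying \eqref{eq:entry_medium}. Since $\Delta_j\ge 0$, the modified game with covariates $\phi(x)$ always has a pure-strategy equilibrium. Using Lemma~\ref{lem:selection} (the equilibrium set is a finite subset of $\{0,1\}^2$ defined by Borel inequalities), I can pick a measurable selection $\tilde Y=s(X,U)$. This produces an augmented joint law satisfying \eqref{eq:combined_support}, with $\tilde\theta\equiv\expt[\sum_j\tilde\pi_j]$. Hence $\theta\in\Theta_I(F)$ if and only if the fiber is nonempty. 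The moment $\sum_j\tilde\pi_j-\tilde\theta$ is affine in the augmented law, and both the support restriction and the $Z$-marginal are preserved under mixtures of joint laws. Therefore, mixing two admissible augmented laws (for the same $\theta$) yields an admissible law whose $\tilde\theta$ is the corresponding convex combination, so the fiber is convex. I will also note the lower bound: equilibrium entry implies $\tilde\pi_j\ge0$, so $\tilde\theta\ge0$ and $\underline\pi>-\infty$.

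The core step is unboundedness above. Pick $j$ with $\prob_F(Y_j=1)>0$. The idea is to perturb $U_j$ upward on a subset of $\{Y_j=1\}$ without violating anything. Define $U_j^c=U_j+c\,\indicator\{A\}$ for $c\ge0$, where $A=\{Y_j=1,\ U_j>0\}$. This perturbation has the following effects.
\begin{itemize}
\item It leaves $\indicator\{U_j\le0\}$ unchanged, so \eqref{eq:entry_medium} still holds.
\item It preserves the observed equilibrium $Y$: $j$'s entry inequality becomes slacker, and the rival's payoff does not depend on $U_j$.
\item It leaves the $Z$-marginal unchanged.
\item For $c$ large enough, entering is dominant for $j$ in the counterfactual game on $A$, so $\tilde Y_j=1$ on $A$ for any measurable equilibrium selection. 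The rival's counterfactual profit stays nonnegative.
\end{itemize}
Consequently $\tilde\theta(c)\ge \expt[(\phi_j(X)^\top\alpha-\Delta_j+U_j+c)\indicator_A]\to\infty$ linearly in $c$. Combined with convexity, this gives an upward ray containing the original $\tilde\theta$.

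The main obstacle is guaranteeing $\prob_H(A)>0$, since the given $H$ might put $U_j\le0$ almost surely on $\{Y_j=1\}$. My first attempt is to modify $H$ within $\mathcal H(\theta,F)$: on $\{Y_j=1\}$ the support only requires $U_j\ge c_j(X;\theta,Y_{1-j})$. I would swap values of $\indicator\{U_j\le 0\}$ between a small set in $\{Y_j=1\}$ and a set in $\{Y_j=0\}$ (where $U_j\le c_j$ is required) with matching conditional law of $X$, so that $\expt[X(\indicator\{U_j\le0\}-\tfrac12)]$ is unchanged. Here I must check that $c_j>0$ on the receiving set, so that it can absorb a nonpositive value. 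If that swapping argument fails in degenerate configurations, the fallback is to use the other firm's event or the symmetric construction. The hypothesis $\sum_j\prob_F(Y_j=1)>0$ is exactly what makes at least one such event available.
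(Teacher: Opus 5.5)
Your architecture is the paper's. You write $\Xi(F)$ as a union of fibers $\Xi_\theta$, $\theta\in\Theta_I(F)$, each a convex subset of $[0,\infty)$, and you get unboundedness by shifting $U_j$ upward on $\{Y_j=1,\,U_j>0\}$; that shift is the paper's first case. One small repair is needed there. A constant shift $c$ need not make entry dominant on all of that event when $\phi_j(X)^\top\alpha$ is unbounded below, so use an $X$-dependent shift as the paper does, or bound $\tilde\pi_j$ below by $(\phi_j(X)^\top\alpha-\Delta_j+U_j+c)^+$ on the event and use monotone convergence.

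The genuine gap is the step you leave open: producing an admissible $H$ with $\prob_H(Y_j=1,U_j>0)>0$. You also misplace where the difficulty lies. The receiving side $\{Y_j=0\}$ accepts any sufficiently negative $U_j$, so no condition $c_j>0$ is involved. What is needed is mass in $\{Y_j=0,\,U_j>0\}$ whose $X$-moments match those of the donor mass in $\{Y_j=1\}$. The paper's second case supplies this by asserting that $p(x)=\prob(U_{j^*}>0\mid X=x)$ and $q(x)=\prob(Y_{j^*}=1\mid X=x)$ are jointly positive on a set of positive probability, and then swapping conditionally on $X$, exactly as you propose.

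That assertion does not follow from the unconditional restrictions \eqref{eq:entry_medium}. In fact the per-$\theta$ claim that your argument and the paper's both rest on is false. Take $X_j=(1,W_j)$ and let $(W_0,W_1,Y_0,Y_1)$ take the values $(1,1,1,0)$, $(1,-1,0,0)$, $(-1,0,0,0)$, $(0,0,0,0)$ (cells 1--4) with probabilities $0.1,0.1,0.3,0.5$. Set $\alpha=(-1,1)$, $\Delta_0=0$, $\Delta_1=1$.
\begin{itemize}
\item Equilibrium forces $U_0\ge0$ in cell 1 and $U_0\le0$ in cell 2.
\item The $W_1$-component of \eqref{eq:entry_medium} for $j=0$ reads $0.1\,[\prob(U_0\le0\mid\text{cell }1)-\prob(U_0\le0\mid\text{cell }2)]=0$. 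Hence $U_0=0$ on $\{Y_0=1\}$ in every admissible $H$.
\item The remaining moments hold with $\prob(U_0>0)$ equal to $1/6$ in cell 3, $0.9$ in cell 4, and $\prob(U_1>0)=1/2$ in every cell. So $\theta\in\Theta_I(F)$ and $\prob_F(Y_0=1)=0.1>0$.
\item Yet $U_0,U_1\le 2$ in every admissible $H$ and firm 1 never enters. Counterfactual profits are therefore bounded on this four-point support, and $\Xi_\theta$ is bounded above.
\end{itemize}
Here $q>0$ only in cell 1, where $p=0$, and the other-firm fallback is unavailable.

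A correct proof therefore needs one of two things. The first is a restriction forcing $\prob(U_j\le 0\mid X)=1/2$ pointwise; then $p\equiv 1/2$ and your $X$-matched swap works. The second is an argument across different $\theta$; here $\alpha=(-1,1/2)$ gives an unbounded fiber, but convexity of each fiber no longer shows that the union is an interval. If $\Theta=\{\theta\}$, the stated conclusion itself fails.
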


Since $\Xi(F)$ is a subset of $\tilde{\Xi}(F)$, we know that $\tilde{\Xi}(F)$ also does not admit a finite upper bound.

\begin{proof}[Proof for Proposition \ref{prop:oinfiqaou3we4hfrqi}]
	Fix an arbitrary $F\in \mathcal{F}$ and fix an arbitrary parameter $\theta\in \Theta_I(F)$. 
	\begin{itemize}
	\item Let $\mathcal{H}_\theta$ be the set of all joint distributions $H$ for $(U, Z)$ with $Z\equiv (X, Y)$ that satisfies (\emph{i}) $\expt_{H}((U, Z)\in \Gamma(\theta)) = 1$; (\emph{ii}) $\expt_H [X (\indicator(U_j \le 0) - 0.5)] = 0$ for each $j\in \{0, 1\}$; (\emph{iii}) $H$'s marginal on $Z$ is equal to $F$. 
	\item Let $\mathcal{H}'_\theta$ be the set of all joint distributions $H'$ for $(U, X, Y, \tilde{Y})$ such that (\emph{i}) the marginal distribution of $H'$ for $(X, Y, U)\in \mathcal{H}_\theta$; (\emph{ii}) $(\tilde{Y}, U)\in \tilde{\Gamma}(Z, U;\theta)$ almost surely under $H'$ where $\tilde{\Gamma}(Z, U;\theta)$ is defined in \eqref{eq:entry_counterfactual_support}. 
	\item Let $\Xi_\theta$ be the set of all $\tilde{\theta}$ for which there exists a distribution $H'\in \mathcal{H}'_\theta$ such that $\tilde{\theta} = \expt_{H'} \sum_j \tilde{\pi}_j$ where $\tilde{\pi}_j = \tilde{Y}_j[\phi_j(X)^\top \alpha - \Delta_j \tilde{Y}_{1-j} + U_j]$.  
\end{itemize}

	Note that $\Xi(F) = \cup_{\theta\in \Theta_I(F)} \Xi_\theta$. Therefore, it suffices to prove that $\Xi_\theta$ is an one-sided interval with a finite bound on the lower end and an infinite bound on the upper end. Moreover, since $\mathcal{H}_\theta$ is a convex set, $\mathcal{H}'_\theta$ is also a convex set. Thus, $\Xi_\theta$ is a convex set, i.e., an interval in $\real$. Moreover, by the counterfactual support restriction in \eqref{eq:entry_counterfactual_support}, we know for any $\tilde{\theta}\in \Xi_\theta$, we must have $\tilde{\theta} \ge 0$. Thus, it suffices to show $\Xi_\theta$ is unbounded. Let's discuss two cases:
	\begin{itemize}
		\item Suppose there exists some $H'\in \mathcal{H}'_\theta$ such that for some $j^*\in \{0, 1\}$, $\rho \equiv \prob_{H'}(Y_{j^*} = 1, U_{j^*} > 0) > 0$. Let $\tilde{\theta}' \equiv \expt_{H'}\sum_j \tilde{\pi}_j$.  For any $M > 0$, we can construct the following $U^\dagger$ such that $U^{\dagger}_j = U_j$ if $j\neq j^*$ and, for $j = j^*$,
			\begin{equation*}
			U^{\dagger}_j = \begin{cases}
				U_j + M + \sum_{j'\in \{0, 1\}}|\Delta_{j'}| + |\phi(X)_j^\top \alpha - X_j^\top \alpha| & \text{if }Y_j = 1, U_j > 0\\
				U_j & \text{if otherwise}
			\end{cases}
			\end{equation*}
Let $H^{\dagger}$ denote the joint distribution of $(Y, X, U^{\dagger})$.  By construction, we know that (\emph{i}) $U^{\dagger}_{j^*} > 0$ if and only if $U_{j^*} > 0$, (\emph{ii}) $X_{j^*}^\top \alpha - \Delta_{j^*} Y_{1-j^*} + U^{\dagger}_{j^*} \ge 0$ if $Y_{j^*} = 1$.   Therefore, we know $H^\dagger$ belongs to $\mathcal{H}_\theta$. Since there always exist at least one pure-strategy Nash equilibrium in the counterfactual, we know there must exists some $H^*\in \mathcal{H}'_\theta$ such that $H^*$'s marginal on $(Y, X, U)$ is equal to $H^{\dagger}$. By the construction of $H^{\dagger}$, one can show that $\tilde{\theta}^* \equiv \expt_{H^*} \sum_j \tilde{\pi}_j \ge \tilde{\theta} + \rho M$. Since $\tilde{\theta}^* \in \Xi_\theta$ and $M$ is an arbitrary positive number, we know $\Xi_\theta$ does not have a finite upper bound.

\item Suppose, for the purpose of contradiction, there does not exist any $H'\in \mathcal{H}'_\theta$ such that for some $j\in \{0, 1\}$, $\prob_{H'}(Y_{j} = 1, U_{j} > 0) > 0$.  Since we know $\sum_{j \in \{0, 1\}} \prob_F(Y_j = 1) > 0$, there must exist some $j^*$ such that $\prob_F(Y_{j^*} = 1) > 0$. Pick an arbitrary $H \in \mathcal{H}_\theta$. Then, we know $\prob_H(Y_0 = 1, U_0 > 0) = 0$ and $\prob_H(Y_1 = 1, U_1 > 0) = 0$. Therefore, 
	\begin{equation}\label{eq:aoqifenf23}
\prob_H(Y_{j^*} = 1, U_{j^*} \le 0) = \prob_H(Y_{j^*} = 1)\quad \prob_H(Y_{j^*} = 0, U_{j^*} > 0) = \prob_H(U_{j^*} > 0)
\end{equation}
Define $p(x) \equiv \prob(U_{j^*} > 0 | X = x)$ and $q(x) \equiv \prob(Y_{j^*} = 1|X = x)$. Because of \eqref{eq:aoqifenf23}, we know $p(x) = \prob(Y_{j^*} = 0, U_{j^*} > 0| X = x)$ and $q(x) = \prob(Y_{j^*} = 1, U_{j^*} \le 0| X = x)$.

Because $\prob_H(Y_{j^*} = 1) > 0$ and $\prob_{H}[X (\indicator(U_{j^*} \le 0) - 0.5)]= 0 $, there must exist some set $A\subseteq \mathcal{X}$ with $\prob_H(X\in A) > 0$ such that $p(x) > 0$ and $q(x) > 0$ for almost every $x\in A$. 

Construct $\zeta$ as a random variable following uniform distribution in $[0, 1]$ and independent of $(Y, X, U)$. With a slight abuse of notation, let $H$ be the joint distribution of $(Y, X, U, \zeta)$. Then, construct another random variable $U^{\dagger}$ in the following way:

\begin{itemize}
\item for $X\notin A$, let $U^{\dagger} = U$.
\item for $X\in A$ and $p(X) \ge q(X)$, then set the value of $U^{\dagger}_j$ according to the following rule
	\begin{equation*}
	U^{\dagger}_j = \begin{cases}
		\max(42, -X_j^\top \alpha  + \sum_{j'\in \{0, 1\}}|\Delta_{j'}|)  &\text{if }Y_{j^*} = 1, U_{j^*} \le 0 \text{ and } j = j^*\\
		\min(-42, -X_j^\top \alpha - \sum_{j'\in \{0, 1\}}|\Delta_{j'}|)  & \text{if }Y_{j^*} = 0,\ U_{j^*} > 0, \zeta \le q(X) / p(X) \text{ and }j = j^*\\
		U_j & \text{if otherwise}
	\end{cases}
	\end{equation*}
\item for $X\in A$ and $p(X) < q(X)$, then set the value of $U^{\dagger}_j$ according to the following rule
	\begin{equation*}
	U^{\dagger}_j = \begin{cases}
		\max(42, -X_j^\top \alpha  + \sum_{j'\in \{0, 1\}}|\Delta_{j'}|)  &\text{if }Y_{j^*} = 1,U_{j^*} \le 0, \zeta \le p(X) / q(X) \text{ and } j = j^*\\
		\min(-42, -X_j^\top \alpha - \sum_{j'\in \{0, 1\}}|\Delta_{j'}|)  & \text{if }Y_{j^*} = 0,\ U_{j^*} > 0, \text{ and }j = j^*\\
		U_j & \text{if otherwise}
	\end{cases}
	\end{equation*}	
\end{itemize}
This constructed $U^{\dagger}$ satisfy the following conditions: (\emph{i}) $\prob(U_{j^*} \le 0|X) = \prob(U^\dagger_{j^*} \le 0|X)$ almost surely, and (\emph{ii}) $(Y, X, U^{\dagger})$ satisfies the support restriction in \eqref{eq:entry_game_support}. Thus, the distribution of $(Y, X, U^{\dagger})$, denoted as $H^{\dagger}$, must belongs to $\mathcal{H}_\theta$. By construction, $\prob_{H^\dagger}(Y_{j^*} = 1, U_{j^*} > 0) > 0$, which leads to a contradiction of the initial claim that there does not exist any $H'\in \mathcal{H}'_\theta$ such that for some $j\in \{0, 1\}$, $\prob_{H'}(Y_{j} = 1, U_{j} > 0) > 0$.
	\end{itemize}
	This completes the proof.
\end{proof}

\end{appendix}

\newpage

\printbibliography

\end{document}